\newcommand{\detailedproof}[2]{\ifbool{CompileTechReport}{
\begin{proof}
#2
\end{proof}
}{\noindent\textit{Proof Sketch:}#1\qed}\smallskip}
\newcommand{\ifnottechreport}[1]{\ifbool{CompileTechReport}{}{#1}}
\newcommand{\iftechreport}[1]{\ifbool{CompileTechReport}{#1}{}}
\newrobustcmd{\revdel}[1]{}
\newrobustcmd{\reva}[1]{#1}
\newrobustcmd{\revb}[1]{#1}
\newrobustcmd{\revc}[1]{#1}
\newrobustcmd{\revm}[1]{#1}
\newrobustcmd{\crtodo}[1]{}
\newtheorem{theo}{Theorem}
\numberwithin{theo}{section}
\newtheorem{lem}[theo]{Lemma}
\newtheorem{coll}[theo]{Corollary}
\newtheorem{exam}{Example}
\numberwithin{exam}{section}
\newtheorem{defi}{Definition}
\numberwithin{defi}{section}
\newcommand{\myeop}{\hspace*{\fill}\mbox{$\Box$}}
\newcommand{\proofsketch}[1]{\ifnottechreport{\noindent{\it Proof Sketch.}$\,${#1}\qed\smallskip}}
\newcommand{\thead}[1]{{\textbf{#1}}}
\newcommand{\tthead}[1]{{\textbf{\texttt{#1}}}}
\colorlet{LightBlue}{blue!30!}
\colorlet{LightRed}{red!20!}
\newcommand{\annotheadcell}[1]{{\cellcolor{LightBlue}{\underline{#1}}}}
\newcommand{\annotcell}[1]{{\cellcolor{LightBlue}{#1}}}
\newcommand{\bugcell}[1]{{\cellcolor{LightRed}{#1}}}
\DeclareRobustCommand{\BGDel}[2]{}
\definecolor{black}{rgb}{0,0,0}
\definecolor{grey}{rgb}{0.8,0.8,0.8}
\definecolor{red}{rgb}{1,0,0}
\definecolor{green}{rgb}{0,1,0}
\definecolor{darkgreen}{rgb}{0,0.5,0}
\definecolor{darkpurple}{rgb}{0.5,0,0.5}
\definecolor{darkdarkpurple}{rgb}{0.3,0,0.3}
\definecolor{blue}{rgb}{0,0,1}
\definecolor{shadegreen}{rgb}{0.95,1,0.95}
\definecolor{shadeblue}{rgb}{0.95,0.95,1}
\definecolor{shadered}{rgb}{1,0.85,0.85}
\definecolor{shadegrey}{rgb}{0.85,0.85,0.85}
\definecolor{oddRowGrey}{rgb}{0.80,0.80,0.80}
\definecolor{evenRowGrey}{rgb}{0.85,0.85,0.85}
\newcommand{\mathtext}[1]{\thickspace\text{#1}\thickspace}
\newcommand{\mathtab}{\thickspace\thickspace\thickspace}
\newcommand{\myproofpar}[1]{\noindent\underline{{#1}:}}
\newcommand{\abbrAGB}{AG}
\newcommand{\abbrBDB}{BD}
\newcommand{\SQLrel}{SQL period relation}
\newcommand{\SQLrels}{SQL period relations}
\newcommand{\SKrel}{snapshot $\semK$-relation}
\newcommand{\SKrels}{snapshot $\semK$-relations}
\newcommand{\SKdbs}{snapshot $\semK$-databases}
\newcommand{\CapSKdbs}{Snapshot $\semK$-databases}
\newcommand{\periodKrel}{period $\semK$-relation}
\newcommand{\periodKrels}{period $\semK$-relations}
\newcommand{\rel}{R}
\newcommand{\relSchema}{{\bf R}}
\newcommand{\arity}{arity}
\newcommand{\attr}{A}
\newcommand{\db}{D}
\newcommand{\dbSchema}{{\bf D}}
\newcommand{\dbDom}{\mathcal{DB}}
\newcommand{\relDom}{\mathcal{R}}
\newcommand{\dbDomOf}[1]{\dbDom_{\dbSchema}}
\newcommand{\valDom}{\mathcal{U}}
\newcommand{\aSchema}{SCH}
\newcommand{\schemaOf}[1]{\aSchema(#1)}
\newcommand{\query}{Q}
\newcommand{\tuple}{t}
\newcommand{\uDom}{\mathcal{U}}
\newcommand{\semK}{K}
\newcommand{\semN}{{\mathbb{N}}}
\newcommand{\semB}{\mathbb{B}}
\newcommand{\semPlus}[1]{+_{#1}}
\newcommand{\addK}{\semPlus{\semK}}
\newcommand{\semMult}[1]{\cdot_{#1}}
\newcommand{\multK}{\semMult{\semK}}
\newcommand{\semMonus}[1]{-_{#1}}
\newcommand{\monK}{\semMonus{\semK}}
\newcommand{\semZero}[1]{0_{#1}}
\newcommand{\zeroK}{\semZero{\semK}}
\newcommand{\semOne}[1]{1_{#1}}
\newcommand{\oneK}{\semOne{\semK}}
\newcommand{\homo}{h}
\newcommand{\naturalOrder}{\preceq}
\newcommand{\naturalStrictOrder}{\prec}
\newcommand{\naturalgeq}{\succeq}
\newcommand{\kDBDom}[2]{\dbDom_{#1}}
\newcommand{\kRelDom}[2]{\relDom_{#1,#2}}
\newcommand{\semTimeNI}{\semK_{\anyTE}}
\newcommand{\semTimeNIof}[1]{{#1}_{\anyTE}}
\newcommand{\zeroNI}{0_{\semTimeNI}}
\newcommand{\oneNI}{1_{\semTimeNI}}
\newcommand{\multNI}{\cdot_{\semTimeNI}}
\newcommand{\addNI}{+_{\semTimeNI}}
\newcommand{\monNI}{-_{\semTimeNI}}
\newcommand{\multP}{\cdot_{\semK_{\mathcal{P}}}}
\newcommand{\addP}{+_{\semK_{\mathcal{P}}}}
\newcommand{\monP}{-_{\semK_{\mathcal{P}}}}
\newcommand{\semTimeNIN}{\semTimeNIof{\semN}}
\newcommand{\addNIN}{+_{\semTimeNIN}}
\newcommand{\monNIN}{-_{\semTimeNIN}}
\newcommand{\addPN}{+_{\semN_{\mathcal{P}}}}
\newcommand{\domTimeRel}[1]{\dbDom_{#1_{\anyTE}}}
\newcommand{\timeDomain}{\mathbb{T}}
\newcommand{\timeSlice}[1]{\tau_{#1}}
\newcommand{\tLeq}{\leq_{\timeDomain}}
\newcommand{\tLe}{<_{\timeDomain}}
\newcommand{\tPoint}{T}
\newcommand{\tMin}{\tPoint_{min}}
\newcommand{\tMax}{\tPoint_{max}}
\newcommand{\anyTE}{\mathcal{T}}
\newcommand{\cTEDom}[1]{\mathbb{TE}_{{#1}}}
\newcommand{\nTEDom}[1]{\mathbb{TEC}_{{#1}}}
\newcommand{\interval}{I}
\newcommand{\iBegin}[1]{{#1}^+}
\newcommand{\iEnd}[1]{{#1}^-}
\newcommand{\adjacent}[2]{adj(#1,#2)}
\newcommand{\tBegin}{\tPoint_b}
\newcommand{\tEnd}{\tPoint_e}
\newcommand{\intervalDom}{\mathbb{I}}
\newcommand{\intervalEq}{\sim}
\newcommand{\coalesce}{\mathcal{C}}
\newcommand{\CPs}[1]{CP(#1)}
\newcommand{\CPIs}[1]{CPI(#1)}
\newcommand{\kCoalesce}[1]{\coalesce_{{#1}}}
\newcommand{\normalize}{\mathcal{N}}
\newcommand{\tSlice}[1]{\tau_{#1}}
\newcommand{\reprDomain}{\mathcal{E}}
\newcommand{\repr}{\textsc{Enc}}
\newcommand{\reprInv}[1]{{{\repr_{#1}}^{-1}}}
\newcommand{\reprRewr}{\textsc{Rewr}}
\newcommand{\reprN}{\textsc{PeriodEnc}}
\newcommand{\attrBegin}{\attr_\mathit{begin}}
\newcommand{\attrEnd}{\attr_\mathit{end}}
\newcommand{\ra}{\mathcal{RA}}
\newcommand{\raPlus}{\ra^{+}}
\newcommand{\raAgg}{\ra^{agg}}
\newcommand{\qClass}{\mathcal{C}}
\newcommand{\projection}{\Pi}
\newcommand{\selection}{\sigma}
\newcommand{\aAggregation}{\gamma}
\newcommand{\aggregation}[2]{{}_{#1}\aAggregation_{#2}}
\newcommand{\union}{\cup}
\newcommand{\rename}{\rho}
\newcommand{\join}{\bowtie}
\newcommand{\coalesceOp}{\mathcal{C}}
\newcommand{\normalizeOp}{\mathcal{N}}
\newcommand{\TimeOut}{\textcolor{red}{\textbf{TO (2h)}}}
\newcommand{\OOTS}{\textcolor{red}{\textbf{OOTS}}}
\newcommand{\ONA}{\textcolor{red}{\textbf{N/A}}}
\newcommand{\qAggEx}{\ensuremath\query_\mathit{onduty}}
\newcommand{\qDiffEx}{\ensuremath\query_\mathit{skillreq}}
\newcommand{\parttitle}[1]{\smallskip\textbf{#1.}}
\newcommand{\card}[1]{| {#1} |}
\author{Anton Dign\"os$^{1}$, Boris Glavic$^{2}$, Xing Niu$^{2}$, Michael B\"ohlen$^{3}$, Johann Gamper$^{1}$\\[2mm]
\affaddr{Free University of Bozen-Bolzano$^{1}$}
\hspace{5mm} \affaddr{Illinois Institute of Technology$^{2}$}
\hspace{5mm} \affaddr{University of Zurich$^{3}$}\\[2mm]
\{dignoes,gamper\}@inf.unibz.it
\hspace{5mm} \{bglavic@, xniu7@hawk.\}iit.edu
\hspace{5mm} boehlen@ifi.uzh.ch
}
\begin{document}

\sloppy
\maketitle

\definecolor{lstpurple}{rgb}{0.5,0,0.5}
\definecolor{lstred}{rgb}{1,0,0}
\definecolor{lstreddark}{rgb}{0.7,0,0}
\definecolor{lstredl}{rgb}{0.64,0.08,0.08}
\definecolor{lstmildblue}{rgb}{0.66,0.72,0.78}
\definecolor{lstblue}{rgb}{0,0,1}
\definecolor{lstmildgreen}{rgb}{0.42,0.53,0.39}
\definecolor{lstgreen}{rgb}{0,0.5,0}
\definecolor{lstorangedark}{rgb}{0.6,0.3,0}
\definecolor{lstorange}{rgb}{0.75,0.52,0.005}
\definecolor{lstorangelight}{rgb}{0.89,0.81,0.67}
\definecolor{lstbeige}{rgb}{0.90,0.86,0.45}

\DeclareFontShape{OT1}{cmtt}{bx}{n}{<5><6><7><8><9><10><10.95><12><14.4><17.28><20.74><24.88>cmttb10}{}

\lstdefinestyle{psql}
{
tabsize=2,
basicstyle=\footnotesize\upshape\ttfamily,
language=SQL,
morekeywords={PROVENANCE,BASERELATION,INFLUENCE,COPY,ON,TRANSPROV,TRANSSQL,TRANSXML,CONTRIBUTION,COMPLETE,TRANSITIVE,NONTRANSITIVE,EXPLAIN,SQLTEXT,GRAPH,IS,ANNOT,THIS,XSLT,MAPPROV,cxpath,OF,TRANSACTION,SERIALIZABLE,COMMITTED,INSERT,INTO,WITH,SCN,UPDATED,WINDOW},
extendedchars=false,
keywordstyle=\bfseries,
mathescape=true,
escapechar=@,
sensitive=true
}

\lstdefinestyle{psqlcolor}
{
tabsize=2,
basicstyle=\footnotesize\upshape\ttfamily,
language=SQL,
morekeywords={PROVENANCE,BASERELATION,INFLUENCE,COPY,ON,TRANSPROV,TRANSSQL,TRANSXML,CONTRIBUTION,COMPLETE,TRANSITIVE,NONTRANSITIVE,EXPLAIN,SQLTEXT,GRAPH,IS,ANNOT,THIS,XSLT,MAPPROV,cxpath,OF,TRANSACTION,SERIALIZABLE,COMMITTED,INSERT,INTO,WITH,SCN,UPDATED,FOLLOWING,RANGE,UNBOUNDED,PRECEDING,OVER,PARTITION,WINDOW},
extendedchars=false,
keywordstyle=\bfseries\color{lstpurple},
deletekeywords={count,min,max,avg,sum,lag,first_value,last_value},
keywords=[2]{count,min,max,avg,sum,lag,first_value,last_value,lead,row_number},
keywordstyle=[2]\color{lstblue},
stringstyle=\color{lstreddark},
commentstyle=\color{lstgreen},
mathescape=true,
escapechar=@,
sensitive=true
}

\lstdefinestyle{datalog}
{
basicstyle=\footnotesize\upshape\ttfamily,
language=prolog
}

\lstdefinestyle{pseudocode}
{
  tabsize=3,
  basicstyle=\small,
  language=c,
  morekeywords={if,else,foreach,case,return,in,or},
  extendedchars=true,
  mathescape=true,
  literate={:=}{{$\gets$}}1 {<=}{{$\leq$}}1 {!=}{{$\neq$}}1 {append}{{$\listconcat$}}1 {calP}{{$\cal P$}}{2},
  keywordstyle=\color{lstpurple},
  escapechar=&,
  numbers=left,
  numberstyle=\color{lstgreen}\small\bfseries,
  stepnumber=1,
  numbersep=5pt,
}

\lstdefinestyle{xmlstyle}
{
  tabsize=3,
  basicstyle=\small,
  language=xml,
  extendedchars=true,
  mathescape=true,
  escapechar=£,
  tagstyle=\color{keywordpurple},
  usekeywordsintag=true,
  morekeywords={alias,name,id},
  keywordstyle=\color{lstred}
}
 \lstset{style=psqlcolor}

\begin{abstract}
 \emph{Snapshot semantics} is widely used for evaluating queries over
  temporal data: temporal relations are seen as sequences of snapshot relations,
  and queries are evaluated at each snapshot.   In this work, we demonstrate that current approaches for snapshot semantics
  over interval-timestamped multiset relations are subject to two bugs regarding snapshot aggregation and bag difference. We introduce a novel temporal data model based on $\semK$-relations that overcomes
  these bugs and prove it to correctly encode snapshot semantics. Furthermore, we present an efficient implementation of our model as a database middleware and demonstrate experimentally that our approach is competitive with native implementations and significantly outperforms such implementations on queries that involve aggregation.
\end{abstract}

\urlstyle{tt}

\section{Introduction}
\label{sec:introduction}

Recently, there
is renewed interest in temporal databases fueled by the fact that abundant
storage has made long term archival of historical data feasible. This
has led to the incorporation of temporal features into the SQL:2011
standard~\cite{KulkarniM12}
which
defines an encoding of temporal data associating each tuple with a validity period. We refer to such relations as  \textit{SQL period
  relations}. Note that SQL period relations use multiset semantics. Period relations are supported by
many DBMSs, e.g., PostgreSQL~\cite{PGSQLRT},
Teradata~\cite{teradata1510}, Oracle~\cite{ORA16}, IBM
DB2~\cite{SaraccoNG2012}, and MS SQLServer~\cite{MSSQL16}. However,
none of these systems, with the partial exception of Teradata,
supports \textit{snapshot
  semantics}, an important class of
temporal queries.
Given a temporal database, a \revb{non-temporal} query $\query$ interpreted
under snapshot semantics returns a temporal relation that assigns to each point
in time the result of evaluating $\query$ over the snapshot of the database at
this point in time.  This fundamental property of snapshot semantics is known as
\emph{snapshot-reducibility}~\cite{LorentzosM97,SJ95}.  A specific type of
snapshot semantics is the so-called \textit{sequenced
  semantics}~\cite{DBLP:reference/db/BohlenJ09} which in addition to
snapshot-reducibility enforces another property called \textit{change preservation} that
determines how time points are grouped into intervals in a snapshot query
result.

\begin{figure}  \scriptsize
  \begin{subfigure}[b]{1\linewidth}
    \centering
    \begin{tabular}{|ll|l|}
      \multicolumn{3}{@{}l}{\textbf{works}} \\
      \cline{1-3}
      \thead{name} &\thead{skill} & \thead{period}\\
      \cline{1-3}
      Ann & SP & $[03, 10)$ \\
      Joe & NS & $[08, 16)$ \\
      Sam & SP & $[08, 16)$ \\
      Ann & SP & $[18, 20)$ \\
      \cline{1-3}
    \end{tabular}
    \hfill
    \begin{tabular}{|ll|l|}
      \multicolumn{3}{@{}l}{\textbf{assign}} \\
      \cline{1-3}
      \thead{mach} &\thead{skill} & \thead{period}\\
      \cline{1-3}
      M1 & SP & $[03, 12)$ \\
      M2 & SP & $[06, 14)$ \\
      M3 & NS & $[03, 16)$ \\
      \cline{1-3}
    \end{tabular}
    \caption{Input period relations}
    \label{sfig:ex1-input}
  \end{subfigure}

  \begin{subfigure}{.5\linewidth}
    \centering
    \begin{tabular}{|l|c|}
      \multicolumn{2}{@{}l}{\textbf{$\qAggEx$}}\\
      \cline{1-2}
      \thead{cnt} & \thead{period}\\
      \hhline{*{1}{|--}}
      \bugcell{0} & \bugcell{$[00, 03)$} \\
      1 & $[03, 08)$ \\
      2 & $[08, 10)$ \\
      1 & $[10, 16)$ \\
      \bugcell{0} & \bugcell{$[16, 18)$} \\
      1 & $[18, 20)$ \\
      \bugcell{0} & \bugcell{$[20, 24)$} \\
      \cline{1-2}
    \end{tabular}
    \caption{Snapshot aggregation\newline}
    \label{sfig:ex1-result-aggregation}
  \end{subfigure}
  \hfill
  \begin{subfigure}{.5\linewidth}
    \centering
    \begin{tabular}{|l|l|}
      \multicolumn{2}{@{}l}{\textbf{$\qDiffEx$}} \\
      \cline{1-2}
      \thead{skill} & \thead{period}\\
      \hhline{*{1}{|--}}
      \bugcell{SP} & \bugcell{$[06, 08)$} \\
      \bugcell{SP} & \bugcell{$[10, 12)$} \\
      NS & $[03, 08)$ \\
      \cline{1-2}
    \end{tabular}

    \caption{Snapshot difference}
    \label{sfig:ex1-result-difference}
  \end{subfigure}

  \caption{Snapshot semantics query evaluation -- highlighted tuples
    are erroneously omitted by approaches that exhibit the aggregation
    gap (\abbrAGB) and bag difference (\abbrBDB) bugs.}
  \label{fig:seq-semantics-bugs-example}
\end{figure}

\begin{exam}[Snapshot Aggregation]\label{ex:running-example-snapshot}
  Consider the \SQLrel{} \texttt{works} in Figure~\ref{sfig:ex1-input} that
  records factory workers, their skills, and when they are on duty.
  The validity period of each tuple is stored in the temporal attribute
  \texttt{period}.
  \revc{To simplify examples, we restrict the time domain to the hours of 2018-01-01 represented as integers $00$ to $23$.}   The company   requires that at least one SP worker is in the
  factory at any given time. This can be checked by evaluating the following
   query under snapshot semantics.
  \begin{tt}
    \begin{tabbing}
      $\qAggEx$:
      \=\textbf{SELECT} count(*) \textbf{AS} cnt \textbf{FROM} works\\
      \>\textbf{WHERE} skill = 'SP'
    \end{tabbing}
  \end{tt}
  Evaluated under snapshot semantics, \revb{a query returns a snapshot
  (time-varying) result that records when the result
  is valid, i.e., $\qAggEx$ returns} the number of SP workers that are
  on duty at any given point of time. The result is shown in
  Figure~\ref{sfig:ex1-result-aggregation}. For instance, at 08:00am two
  SP workers (Ann and Joe) are on duty. The query exposes several
  safety violations, e.g., no SP worker is on duty between \revc{00}   and
  03. \end{exam}

In the example above, safety violations correspond to gaps, i.e.,
periods of time where the aggregation's input is empty. As we will
demonstrate, all approaches for snapshot semantics that we are aware
of do not return results for gaps (tuples marked in red)
and, therefore, violate snapshot-reducibility.
\revc{Teradata~\cite[p.149]{teradata1510} for instance, realized the importance of reporting results for gaps, but in contrast to snapshot-reducibility provides gaps in the presence of grouping, while omitting them otherwise.}
 As a consequence, in
our example these approaches fail to identify safety violations.  We refer to this
type of error as the \emph{aggregation gap bug} (\emph{\abbrAGB{}
  bug}).

Similar to the case of  aggregation, we also identify a
common error related to snapshot bag difference
(\texttt{\textbf{EXCEPT ALL}}).

\begin{exam}[Snapshot Bag Difference]\label{ex:bag-difference-running-example}
  Consider again Figure~\ref{fig:seq-semantics-bugs-example}. Relation
  \texttt{assign} records machines (mach) that need to be assigned to
  workers with a specific skill over a specific period of time. For
  instance, the third tuple records that machine M3 requires a
  non-specialized (NS) worker for the time period $[03, 16)$.  To
  determine which skill sets are missing during which time period, we
  evaluate the following query under snapshot semantics:
  \begin{tt}
    \begin{tabbing}
      $\qDiffEx$:
      \=\textbf{SELECT} skill \textbf{FROM} assign\\
      \>\textbf{EXCEPT ALL}\\
      \>\textbf{SELECT} skill \textbf{FROM} works
    \end{tabbing}
  \end{tt}
  The result in Figure~\ref{sfig:ex1-result-difference} indicates that
  one more SP worker is required during the periods
  $[06,08)$ and $[10, 12)$. \end{exam}

Many approaches treat bag difference as a
\texttt{\textbf{NOT EXISTS}} subquery, and therefore do not return a
tuple $t$ from the left input if this tuple exists in the
right input (independent of their multiplicity).  For instance, the
two tuples for the SP workers (highlighted in red) are not
returned, since there exists an SP worker at each snapshot in the
\texttt{works} relation.  This violates snapshot-reducibility.  We
refer to this type of error as the \emph{bag difference bug}
(\emph{\abbrBDB{} bug}).

The interval-based representation of temporal relations creates an
additional problem: the encoding of a temporal query result is
typically not unique.  For instance, tuple
$(\mathit{Ann},\mathit{SP},[03,10))$ from the \texttt{works} relation
in Figure~\ref{fig:seq-semantics-bugs-example} can equivalently be
represented as two tuples $(\mathit{Ann},\mathit{SP},[03,08))$ and
$(\mathit{Ann},\mathit{SP},[08,10))$.  We refer to a method that
determines how temporal data and snapshot query results are grouped
into intervals as an \textit{interval-based representation system}.  A
unique and predictable representation of temporal data is a desirable
property, because equivalent relational algebra
expressions should not lead to syntactically different result relations.\crtodo{highlight this previous statement}
This problem can be addressed by using a representation system that
associates a unique encoding with each temporal database.
Furthermore, overlap between multiple periods associated with a tuple
and unnecessary splits of periods complicate the interpretation of
data and, thus, should be avoided if possible.
\reva{Given these limitations and the lack of implementations for snapshot semantics queries over bag relations, users currently resort to manually implementing such queries in SQL which is time-consuming and error-prone~\cite{Snodgrass99}.}
We address the above limitations of previous approaches for
snapshot semantics and develop a framework based on the following desiderata:
(i) support for set and multiset relations, (ii) snapshot-reducibility for all
operations, and (iii) a unique interval-based encoding of temporal relations.
Note that while previous work on sequenced semantics (e.g.,~\cite{DignosBGJ16,DignosBG12}) also aims to support snapshot-reducibility, we emphasize a unique encoding instead of trying to preserve intervals from the input of a query.
We address these desiderata using a three-level approach. \reva{Note that we focus on data with a single time dimension, but are oblivious to whether this is transaction time or valid time.}
First, we
introduce an \textit{\revc{abstract} model} that supports both sets and
multisets, and by definition is snapshot-reducible. This model,
however, uses a verbose encoding of temporal data and, thus, is not
practical. \revm{Afterwards, we develop a more compact \textit{logical model} as a
representation system, where the complete temporal history of all equivalent
tuples from the abstract model is stored in an annotation
attached to one tuple.}
The \revc{abstract} and the logical models leverage the theory of
K-relations, which are a general class of annotated relations that
cover both set and multiset relations.  For our
\textit{implementation}, we use SQL over period relations to
ensure compatibility with SQL:2011  and existing DBMSs.
We prove the equivalence between the three layers (i.e., the
\revc{abstract} model, the logical model and the implementation) and show
that the logical model determines a unique interval-encoding for the
implementation and a correct rewriting scheme for queries over this
 encoding.

Our main technical contributions are: \begin{itemize}[noitemsep,topsep=0pt,parsep=0pt,partopsep=0pt]
\item \textit{\revc{Abstract} model}: We introduce \textit{\SKrels{}} as a generalization of snapshot
  set and multiset relations. These relations   are by definition
  snapshot-reducible.

\item \textit{Logical model}: We define an interval-based representation, termed
  \emph{\periodKrels}, and prove that
  these relations are a compact and unique representation system for
  snapshot semantics over \SKrels{}.  We show this for the full
  relational algebra plus aggregation ($\raAgg$).

\item
We achieve a unique encoding of temporal data as \periodKrels{} by  generalizing set-based  coalescing~\cite{DBLP:conf/vldb/BohlenSS96}.

  \item We demonstrate that the multiset version of {\periodKrels} can be encoded as \textit{\SQLrels{}}, a
  common interval-based model in DBMSs, and how to translate queries with
  snapshot semantics over \periodKrels{} into SQL.

\item We implement our approach as a database middleware and present
  optimizations that eliminate redundant coalescing steps. We demonstrate
  experimentally   that we do not
  need to sacrifice performance to achieve correctness.
\end{itemize}

\section{Related Work}
\label{sec:related-work}

\parttitle{Temporal Query Languages}
There is a long history of research on temporal query
languages~\cite{DBLP:reference/db/JensenS09r,DBLP:reference/db/BohlenGJS09}.
Many temporal query languages including
TSQL2~\cite{Snodgrass95,DBLP:journals/sigmod/SnodgrassAABCDEGJKKKLLRSSS94},
\mbox{ATSQL2} (Applied TSQL2)~\cite{Bohlen95evaluatingand},
IXSQL~\cite{LorentzosM97}, ATSQL~\cite{BohlenJS00}, and
SQL/TP~\cite{T98} support sequenced semantics, i.e., these languages support a specific type of snapshot semantics.  In this paper, we provide a general
framework that can be used to correctly implement snapshot semantics
over period set and multiset relations for any language.

\parttitle{Interval-based Approaches for Sequenced Semantics}
In the following, we discuss interval-based approaches for sequenced
semantics.
Table~\ref{tab:table-rel-work-bugs} shows for each approach whether it
supports multisets, whether it is free of the aggregation gap and bag
difference bugs, and whether its interval-based encoding of a
sequenced query result is unique. An N/A indicates that the approach
does not support the operation for which this type of bug can occur or
the semantics of this operation is not defined precisely enough to
judge its correctness.  Note that while temporal query languages may
be defined to apply sequenced semantics and, thus, by definition are
snapshot-reducible, (the specification of) their implementation might
fail to be snapshot-reducible.  In the following discussion of the
temporal query languages in Table~\ref{tab:table-rel-work-bugs}, we
refer to their semantics as provided in the referenced publication(s).

\emph{Interval preservation} (ATSQL)~\cite[Def.\ 2.10]{BohlenJS00} is a
representation system for SQL period relations (multisets) that tries to
preserve the intervals associated with input
tuples, i.e., fragments of all intervals (including duplicates)
associated with the input tuples ``survive'' in the output.  Interval
preservation is snapshot-reducible for multiset semantics for positive
relational algebra\revc{\cite{DBLP:reference/db/Sirangelo09c} (selection, projection, join, and union)}, but exhibits the aggregation gap and bag difference
bug.  Moreover, the period encoding of a query result is not unique as
it depends both on the query and the input representation.
\emph{Teradata}~\cite{teradata1510} is a commercial DBMS that supports
sequenced operators using ATSQL's statement modifiers. The
implementation is based on query
rewriting~\cite{DBLP:conf/edbt/Al-KatebGCBCP13} and does not support
difference.  Teradata's implementation exhibits the aggregation gap
bug. Since the application of coalescing is optional, the encoding of
snapshot relations as period relations is not unique.
\revc{\emph{Change preservation}~\cite[Def.\ 3.4]{DignosBGJ16} determines the
interval boundaries of a query result tuple $\tuple$ based on the maximal
interval for which there is no change in the input. To track changes, it
employs the lineage provenance model in~\cite{DignosBG12} and the PI-CS model
in~\cite{DignosBGJ16}. The approach
uses timestamp adjustment in combination with traditional database operators, but does not provide a unique encoding, exhibits the AG bug,
and only supports set semantics. Our work addresses these issues and significantly generalizes this approach, in particular by supporting bag semantics. }\emph{TSQL2}~\cite{Snodgrass95,DBLP:journals/sigmod/SnodgrassAABCDEGJKKKLLRSSS94,SJ95}
implicitly applies coalescing~\cite{DBLP:conf/vldb/BohlenSS96} to
produce a unique representation. Thus, it only supports  set semantics,
and it does not support aggregation.
Snodgrass et al.~\cite{SB96} present a validtime extension of
\emph{SQL/Temporal} and an algebra with sequenced semantics. The
algebra supports multisets, but exhibits both the aggregation gap and
bag difference bug. Since intervals from the input are preserved
where possible, the interval representation of a snapshot
relation is not unique.
\emph{TimeDB}~\cite{S98} is an implementation of
ATSQL2~\cite{Bohlen95evaluatingand}. It uses a semantics for bag
difference and intersection that is not snapshot-reducible
(see~\cite[pp.\ 63]{S98}).
 Our approach is the first that supports set and multiset relations,
 is resilient against the two bugs, and specifies a unique
 interval-encoding.

\begin{table}[tb]
  \caption{Interval-based approaches for snapshot semantics.}
  \label{tab:table-rel-work-bugs}

  \centering
  \small

  \newcommand{\bugtableheaderheight}{1.2cm}
  \renewcommand{\arraystretch}{1.1}

  \begin{tabular}{|l|c|c|c|c|}
    \hline
    \thead{Approach}
    & \thead{
      \rotatebox{90}{
      \begin{minipage}{\bugtableheaderheight}
        Multisets
      \end{minipage}
      }
      }
    & \thead{
      \rotatebox{90}{
      \begin{minipage}{\bugtableheaderheight}
        \abbrAGB{} bug\\ free
      \end{minipage}
      }
      }
    & \thead{
      \rotatebox{90}{
      \begin{minipage}{\bugtableheaderheight}
        \abbrBDB{} bug\\ free
      \end{minipage}
      }
      }
    & \thead{
      \rotatebox{90}{
      \begin{minipage}{\bugtableheaderheight}
        Unique \\ encoding
      \end{minipage}
    }
    }
    \\
    \hline
    Interval preservation~\cite{BohlenJS00} (ATSQL) & \checkmark & $\times$ & $\times$ & $\times$
    \\
    Teradata~\cite{teradata1510} & \checkmark & $\times$ & N/A  & $\times$\tablefootnote{Optionally, coalescing (\texttt{NORMALIZE ON} in Teradata) can be applied to get a unique encoding at the cost of loosing multiplicities.}
    \\
    Change preservation~\cite{DignosBG12,DignosBGJ16} & $\times$ & $\times$& N/A & $\times$
    \\
    TSQL2~\cite{Snodgrass95,DBLP:journals/sigmod/SnodgrassAABCDEGJKKKLLRSSS94,SJ95} & $\times$ & N/A & N/A & $\checkmark$
    \\
    ATSQL2~\cite{Bohlen95evaluatingand} & \checkmark & N/A & $\times$ & $\times$
    \\
    TimeDB~\cite{S98} (ATSQL2) & \checkmark & N/A & $\times$& $\times$
    \\
    SQL/Temporal~\cite{SB96} & \checkmark & $\times$ & $\times$ & $\times$
    \\
    \revc{SQL/TP~\cite{T98}\tablefootnote{Sequenced semantics can be expressed, but this is inefficient}} & \revc{\checkmark{}} & \revc{\checkmark{}} & \revc{\checkmark{}} & \revc{$\times$}
    \\
    {\bf Our approach} & \checkmark & \bf \checkmark & \checkmark & \checkmark
    \\
    \hline
  \end{tabular}
\end{table}

\begin{figure*}[htb]
  \centering
  \begin{tikzpicture}[scale=0.95]

\newcommand{\ovfigLeftColumnWidth}{0.4\linewidth}
\newcommand{\ovfigMiddleColumnWidth}{0.09\linewidth}
\newcommand{\ovfigRightColumnWidth}{0.43\linewidth}

  \tikzstyle{every node}=[font=\scriptsize]
  {\renewcommand{\arraystretch}{1.1}

  \draw[fill=red!15,line width=0pt,color=red!15] (-0.7,3) rectangle (17.9,-0);
  \draw[fill=orange!15,line width=0pt,color=orange!15] (-0.7,-0.5) rectangle (17.9,-2.5);
  \draw[fill=green!15,line width=0pt,color=green!15] (-0.7,-3) rectangle (17.9,-6);

  \begin{scope}[shift={(0,0)}]

    \node[above, rotate=90] at (0,1.5) {{\bf \small Implementation}};
    \node[color=white, fill=black, above left] at (10.3,2.5) {{\normalsize \bf \SQLrels{}}};

    \draw [rounded corners=0.5cm, line width=1pt] (0,0) rectangle ++(6,3);
    \node at (3,1.5) {
      \begin{tabular}{|lll|}
			\cline{1-3}
			\thead{name} &\thead{skill} & \thead{period}\\
			\cline{1-3}
			Ann & SP & $[03, 10)$ \\
      Joe & NS & $[08, 16)$ \\
			Sam & SP & $[08, 16)$ \\
      Ann & SP & $[18, 20)$ \\
			\cline{1-3}
		  \end{tabular}
    };
    \draw[rounded corners=0.5cm, line width=1pt] (11,0) rectangle ++(6.5,3);
    \node at (14.25,1.5) {
  		\begin{tabular}{|l|c|}
  			\cline{1-2}
  			\thead{cnt} & \thead{period}\\
  			\cline{1-2}
  			0 & $[00, 03)$ \\
  			1 & $[03, 08)$ \\
  			2 & $[08, 10)$ \\
  			1 & $[10, 16)$ \\
  			0 & $[16, 18)$ \\
        1 & $[18, 20)$ \\
        0 & $[20, 24)$ \\
  			\cline{1-2}
  		\end{tabular}
    };
    \draw[->, very thick] (6.2,1.5) -- ++(4.6,0) node[pos=0.5, above] {$\reprRewr(Q_{onduty})$};
  \end{scope}

  \draw[->, line width=1pt] (3, 0) -- (3, -0.4) node[pos=0.6, left] {$\reprN^{-1}$};
  \draw[->, line width=1pt] (4, -0.5) -- (4, -0.1) node[pos=0.6, right] {$\reprN$};

  \draw[->, line width=1pt] (13.25, 0) -- (13.25, -0.4) node[pos=0.6, left] {$\reprN^{-1}$};
  \draw[->, line width=1pt] (14.25, -0.5) -- (14.25, -0.1) node[pos=0.6, right] {$\reprN$};

  \begin{scope}[shift={(0,-3)}]

    \node[above, rotate=90] at (0,1.5) {{\bf \small Logical}};
    \node[color=white, fill=black, above left] at (9.8,2) {{\normalsize \bf Period K-relations}};

    \draw [rounded corners=0.5cm, line width=1pt] (0,0.5) rectangle ++(6.75,2);
    \node at (3.5,1.5) {
  		\begin{tabular}{|ll|c}
  			\cline{1-2}
  			\thead{name} &\thead{skill} & \annotheadcell{$\semTimeNIN$}\\
  			\cline{1-2}
  			Ann & SP & \annotcell{$\{[03, 10) \mapsto 1, [18,20) \mapsto 1\}$} \\
  			Sam & SP & \annotcell{$\{[08, 16) \mapsto 1 \}$} \\
  			Joe & NS & \annotcell{$\{[08, 16) \mapsto 1\}$} \\
  			\cline{1-2}
  		\end{tabular}
    };
    \draw[rounded corners=0.5cm, line width=1pt] (10,0.5) rectangle ++(7.5,2);
    \node at (13.75,1.5) {
  		\begin{tabular}{|l|l}
  			\cline{1-1}
  			\thead{cnt} & \multicolumn{1}{c}{\annotheadcell{$\semTimeNIN$}}\\[1mm]
  			\cline{1-1}
            0 & \annotcell{$\{ [00, 03) \mapsto 1, [16, 18) \mapsto 1,[20, 24) \mapsto 1 \}$} \\
            1 & \annotcell{$\{ [03, 08) \mapsto 1, [10, 16) \mapsto 1 ,[18, 20) \mapsto 1\}$} \\
  			    2 & \annotcell{$\{ [08, 10) \mapsto 1\}$} \\

            \cline{1-1}
  			\cline{1-1}
  		\end{tabular}
    };
    \draw[->, very thick] (7,1.5) -- ++(2.7,0) node[pos=0.5, above] {$Q_{onduty}$} ;
  \end{scope}

  \draw[->, line width=1pt] (3, -2.5) -- (3, -2.9) node[pos=0.6, left] {$\tSlice{00}, \ldots, \tSlice{23}$};
  \draw[->, line width=1pt] (4, -3) -- (4, -2.6) node[pos=0.6, right] {$\repr_{\semN}$};

  \draw[->, line width=1pt] (13.25, -2.5) -- (13.25, -2.9) node[pos=0.6, left] {$\tSlice{00}, \ldots, \tSlice{23}$};
  \draw[->, line width=1pt] (14.25, -3) -- (14.25, -2.6) node[pos=0.6, right] {$\repr_{\semN}$};

    \begin{scope}[shift={(0,-6)}]

      \node[above, rotate=90] at (0,1.5) {{\bf \small \revc{Abstract}}};
      \node[color=white,fill=black, above left] at (10,2.5) {{\normalsize \bf Snapshot K-relations}};

      \draw [ rounded corners=0.5cm, line width=1pt] (0,0) rectangle ++(6,3);
      \node at (2.2,2.7) {
        \begin{tabular}{l|ll|c}
          \cline{2-3}
           $00 \mapsto {}$ & \thead{name} &\thead{skill} &  \annotheadcell{$\semN$}\\
          \cline{2-3}
        \end{tabular}
      };
      \node at (2.2,1.8) {
        \begin{tabular}[t]{l|ll|c}
          \multicolumn{2}{l}{{\bf \ldots}} & \multicolumn{1}{l}{} &\\
          \cline{2-3}
           $08 \mapsto {}$ & \thead{name} &\thead{skill} &  \annotheadcell{$\semN$}\\
           \cline{2-3}
           & Ann & SP & \annotcell{1} \\
           & Joe & NS & \annotcell{1} \\
           & Sam & SP & \annotcell{1} \\
          \cline{2-3}
        \end{tabular}
      };
      \node at (2.2,0.6) {
        \begin{tabular}[t]{l|ll|c}
          \multicolumn{2}{l}{{\bf \ldots}} & \multicolumn{1}{l}{} &\\
          \cline{2-3}
           $18 \mapsto {}$ & \thead{name} &\thead{skill} &  \annotheadcell{$\semN$}\\
           \cline{2-3}
           \multicolumn{1}{l|}{{\bf \ldots}} & Ann & SP & \annotcell{1} \\
          \cline{2-3}
        \end{tabular}
      };

      \draw[rounded corners=0.5cm, line width=1pt] (10.5,0) rectangle ++(7,3);
      \node at (13.75,2.55) {
        \begin{tabular}{l|l|c}
          \cline{2-2}
           $00 \mapsto {}$ & \thead{cnt} &  \annotheadcell{$\semN$}\\
           \cline{2-2}
            & 0 & \annotcell{1} \\
          \cline{2-2}
        \end{tabular}
      };
      \node at (13.75,1.8) {
        \begin{tabular}{l|l|c}
          \multicolumn{2}{l}{{\bf \ldots}} & \multicolumn{1}{l}{}\\
          \cline{2-2}
           $08 \mapsto {}$ & \thead{cnt} &  \annotheadcell{$\semN$}\\
           \cline{2-2}
            & 2 & \annotcell{1} \\
          \cline{2-2}
        \end{tabular}
      };
      \node at (13.75,0.6) {
        \begin{tabular}{l|l|c}
          \multicolumn{2}{l}{{\bf \ldots}} & \multicolumn{1}{l}{}\\
          \cline{2-2}
           $18 \mapsto {}$ & \thead{cnt} &  \annotheadcell{$\semN$}\\
           \cline{2-2}
            {\bf \ldots}& 1 & \annotcell{1} \\
          \cline{2-2}
        \end{tabular}
      };
      \node at (8.5, 2.15) {{\bf \ldots}};
      \draw[->, very thick] (6.2,1.5) -- ++(4,0) node[pos=0.5, above] {$Q_{onduty}
      $} ;
      \node at (8.5, 1.2) {{\bf \ldots}};
      \draw[->, very thick] (6.2,0.6) -- ++(4,0) node[pos=0.5, above] {$Q_{onduty}
      $} ;
      \node at (8.5, 0.3) {{\bf \ldots}};
    \end{scope}

  }

  \end{tikzpicture}
  \caption{Overview of our approach. Our \emph{\revc{abstract} model} is
  \emph{snapshot K-relations} and nontemporal queries over snapshots (snapshot semantics). Our
  \emph{logical} model is \emph{period K-relations} and   queries
  corresponding to the \reva{abstract} model's snapshot queries.   Our
  \emph{implementation} uses \emph{\SQLrels{}} and
  rewritten non-temporal queries implementing the other model's snapshot queries.   Each model is associated with
  transformations to the other models   which commute with queries (modulo the rewriting $\reprRewr$ when mapping to the implementation). }   \label{fig:overview-approach}
\end{figure*}
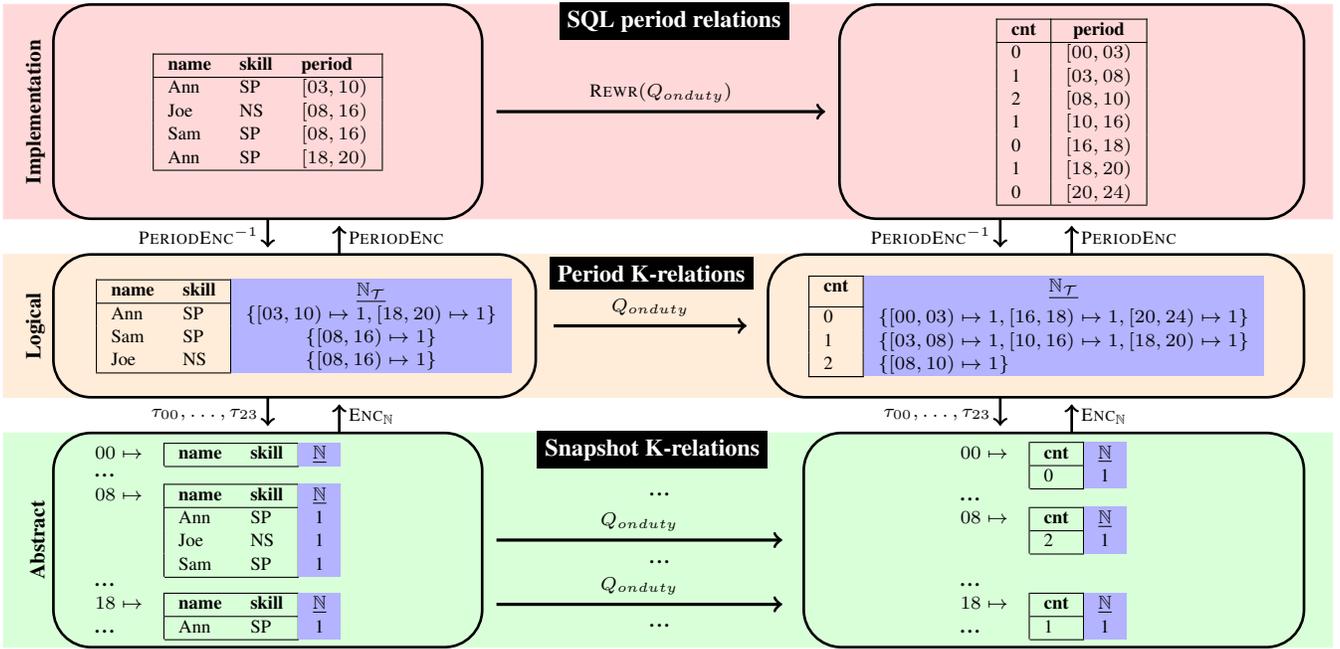

\parttitle{Non-sequenced Temporal Queries}
Non-sequenced temporal query languages, such as
IXSQL~\cite{LorentzosM97} and SQL/TP~\cite{T98}, do not explicitly
support sequenced semantics.
Nevertheless, we review these languages here since they allow to
express queries with sequenced semantics.
\revc{SQL/TP~\cite{T98} introduces a point-wise semantics for temporal
queries~\cite{BowmanT03,T96}, where time is handled as a regular attribute.
Intervals are used as an efficient encoding of time points, and a normalization
operation is used to split intervals. The language supports multisets and a
mechanism to manually produce sequenced semantics. However, sequenced semantics queries are specified as the union of non-temporal queries over snapshots. Even if such subqueries are grouped together for adjacent time points where the non-temporal query's result is constant this still results in a large number of subqueries to be executed. Even worse, the number of subqueries that is required is data dependent.
Also, the interval-based encoding is not unique, since time
points are grouped into intervals depending on query syntax and encoding of the
input. While this has no effect on the semantics since SQL/TP queries cannot
distinguish between different interval-based encodings of a temporal database,
it might be confusing to users that observe different query results for
equivalent queries/inputs.}

\parttitle{Implementations of Temporal Operators}
A large body of work has focused on the implementation of individual
temporal algebra operators such as
joins~\cite{DBLP:conf/sigmod/DignosBG14,DBLP:conf/icde/PiatovHD16,DBLP:journals/pvldb/BourosM17}
and
aggregation~\cite{DBLP:conf/edbt/BohlenGJ06,DBLP:conf/sigmod/PilmanKKKP16,DBLP:conf/ssd/PiatovH17}. Some
exceptions supporting multiple operators
are~\cite{DBLP:conf/sigmod/KaufmannMVFKFM13, DignosBGJ16,
  CafagnaB17}. These approaches introduce efficient evaluation
algorithms for a particular semantics of a temporal algebra operator.
Our approach can utilize efficient operator implementations as long as
(i) their semantics is compatible with our interval-based encoding of
snapshot query results and (ii) they are snapshot-reducible.

\parttitle{Coalescing}
Coalescing produces a unique representation of a \emph{set} semantics
temporal database.  B\"ohlen et al.~\cite{DBLP:conf/vldb/BohlenSS96}
study optimizations for coalescing that eliminate unnecessary coalescing operations. Zhou et
al.~\cite{ZhouWZ06} and~\cite{DBLP:conf/dexa/Al-KatebGC12} use analytical functions to efficiently
implement coalescing in SQL.
We generalize coalescing to $\semK$-relations to define a unique encoding of
interval-based temporal relations, including \emph{multiset} relations. Similar
to~\cite{DBLP:conf/vldb/BohlenSS96}, we remove unnecessary K-coalescing steps
and, similar to~\cite{ZhouWZ06}, we use OLAP functions for efficient implementation.

\BGDel{Our approach is guided by providing a clean framework that is
  able to generally define all operators under different data model
  semantics, such as set and multiset semantics. This framework is
  crucial to categorize existing implementations, but also to generate
  new ones for cases where no existing is available, such as for
  example multiset difference.}

\parttitle{Temporality in Annotated Databases}
Kostiley et al.~\cite{KB12} is to the best of our knowledge the only
previous approach that uses semiring annotations to express
temporality. The authors define a semiring whose elements are sets of
time points. This approach is limited to set semantics, and no
interval-based encoding was presented.  The LIVE
system~\cite{DT10} combines provenance and uncertainty annotations
with versioning. The system uses interval timestamps, and query
semantics is based on snapshot-reducibility~\cite[Def.\
2]{DT10}. However, computing the intervals associated with a query
result requires provenance to be maintained for every query result.

\section{Solution Overview}
\label{sec:probl-stat-solut}

In this section, we give an overview of our three-level framework,
which is illustrated in Figure~\ref{fig:overview-approach}.

\parttitle{\revc{Abstract} model -- Snapshot $\semK$-relations}
\label{sec:conc-level-seuq}
As an \emph{\revc{abstract} model} we use snapshot relations which map time
points to snapshots. Queries over such relations are evaluated over
each snapshot, which trivially satisfies snapshot-reducibility.  To
support both sets and multisets, we introduce
\emph{snapshot $\semK$-relations}~\cite{GK07}, which are
snapshot relations where each snapshot is a $\semK$-relation.  In a
$\semK$-relation, each tuple is annotated with an element from a
domain $\semK$. For example,
relations annotated with elements from the semiring $\semN$ (natural
numbers) correspond to multiset semantics. The result of a snapshot query $\query$ over a snapshot
$\semK$-relation is the result of evaluating $\query$ over the
$\semK$-relation at each time point.

\begin{exam}[\revc{Abstract} Model]
  Figure~\ref{fig:overview-approach} (bottom) shows the snapshots at
  times \texttt{00}, \texttt{08}, and \texttt{18} of an encoding of
  the running example as snapshot $\semN$-relations.  Each snapshot
  is an $\semN$-relation where tuples are annotated with their
  multiplicity (shown   with shaded
  background). For instance, the snapshot at time \texttt{08} has
  three tuples, each with multiplicity 1. The result of query
  $\qAggEx$ is shown on the bottom right.  Every snapshot in the
  result is computed by running $\qAggEx$ over the corresponding
  snapshot in the input.  For instance, at time \texttt{08} there are
  two SP workers, i.e., $cnt = 2$.
\end{exam}

\parttitle{Logical Model -- Period $\semK$-relations}
\label{sec:logical-level}
\revm{We introduce \periodKrels{} as a \emph{logical model}, which merges
equivalent tuples over all snapshots from the abstract model into one tuple. In
a \emph{\periodKrel{}}, every tuple is annotated with a \textit{temporal
$\semK$-element} that is a unique interval-based representation for all time points of the merged tuples from the abstract model.}
We define a class of semirings called \textit{period semirings} whose elements are temporal $\semK$-elements.
Specifically, for any semiring $\semK$ we can construct a period semiring $\semTimeNI$ whose annotations are temporal $\semK$-elements.
For instance, $\semTimeNIN$ is the period semiring corresponding to semiring $\semN$ (multisets).
We define necessary conditions for an interval-based model to
correctly encode \SKrels{}  and prove that
\periodKrels{} fullfil these conditions.  Specifically, we call an
interval-based model a \emph{representation system} iff
 the encoding of every \SKrel{} $R$ is (i) unique and (ii) snapshot-equivalent to $R$.
Furthermore, (iii) queries over encodings are
snapshot-reducible.

\begin{exam}[Logical Model]
  Figure~\ref{fig:overview-approach} (middle) shows an encoding of the
  running example as \periodKrels{}.
  \revm{For instance, all tuples $(Ann,SP)$ from the abstract model are merged
  into one tuple in the logical model with annotation $\{[03, 10) \mapsto 1,
  [18,20) \mapsto 1\}$, because at each time point during $[03, 10)$ and
  $[18,20)$ a tuple $(Ann,SP)$ with multiplicity $1$ exists.}
  In Section~\ref{sec:snapshot-k-relations-1}, we will introduce a mapping
  $\repr_{\semN}$ from snapshot $\semN$ to $\semTimeNIN$-relations and the  \textit{time slice} operator $\tSlice{\tPoint}$ which restores an
the snapshot at time $\tPoint$.

\end{exam}

\parttitle{Implementation -- SQL Period Relations}
\label{sec:impl-level}
To ensure compatibility with the SQL standard, we use \emph{\SQLrels{}} in our \emph{implementation} and translate snapshot
semantics queries into SQL queries over these period relations.
For this we define an encoding of $\semTimeNIN$-relations as \SQLrels{} ($\reprN$) together with a rewriting scheme for
queries ($\reprRewr$).

\begin{exam}[Implementation]
  Consider the \SQLrels{} shown on the top of
  Figure~\ref{fig:overview-approach}.  Each interval-annotation pair
  of a temporal $\semN$-element in the logical model is encoded as a
  separate tuple in the implementation.
  \revm{For instance, the annotation of tuple $(Ann,SP)$ from the logical model
  is encoded as two tuples, each of which records one of the two intervals from
  this annotation}
\end{exam}

We present an implementation of our framework as a database middleware
that exposes snapshot semantics as a new language feature in
SQL and rewrites snapshot queries into SQL queries over
\SQLrels{}.  That is, we directly evaluate
snapshot queries over data stored natively as period relations.

\section{Snapshot K-relations} \label{sec:snapshot-k-relations}

We first review background on the semiring annotation framework ($\semK$-relations). Afterwards,  we define  \textit{snapshot $\semK$-relations} as our \revc{abstract}
model and snapshot semantics for this model. Importantly, queries over snapshot $\semK$-relations are snapshot-reducible by construction. Finally, we state requirements for a logical model to be a representation system
for this \revc{abstract} model.

\subsection{K-relations}
\label{sec:k-relations}

In a $\semK$-relation~\cite{GK07}, every tuple is annotated with an
element from a domain $K$ of a commutative semiring $\semK$. A structure $(\semK, \addK, \multK, \zeroK, \oneK)$ over a set $K$
with binary operations $\addK$ and $\multK$
is a commutative semiring iff (i) addition and
multiplication are commutative, associative, and have a neutral element ($\zeroK$ and $\oneK$, respectively);
(ii) multiplication distributes over addition; and (iii) multiplication with zero
returns zero. Abusing notation, we
will   use $\semK$ to denote both a semiring structure as well as its
  domain.

Consider a universal countable domain $\uDom$
of values. An n-ary $\semK$-relation $R$ over $\uDom$ is a
(total) function that maps tuples
 (elements from $\uDom^n$)
to elements from $\semK$ with the convention that tuples mapped to
$0_\semK$ are not in the
relation. Furthermore, we require that $R(t) \neq 0_{\semK}$ only
holds for finitely many $t$.  Two semirings are of particular interest
to
us: The semiring $(\mathbb{B}, \vee, \wedge, false, true)$ with elements \textit{true} and \textit{false} using
$\vee$ as addition and $\wedge$ as multiplication corresponds to set semantics.
The semiring $(\mathbb{N}, +, \cdot, 0, 1)$ of natural numbers with standard arithmetics
corresponds to multisets.

The operators of the positive relational algebra\revc{~\cite{DBLP:reference/db/Sirangelo09c}} ($\raPlus$) over
$\semK$-relations are defined by applying the $+_\semK$ and
$\cdot_\semK$ operations of the semiring $\semK$ to input
annotations. Intuitively, the $+_\semK$ and $\cdot_\semK$ operations
of the semiring correspond to the alternative and
conjunctive use of tuples, respectively. For instance, if an output tuple $t$ is
produced by joining two input tuples annotated with $k$ and $k'$, then
the tuple $t$ is annotated with $k \cdot_\semK k'$. Below we provide
the standard definition of $\raPlus$ over
$\semK$-relations~\cite{GK07}. For a tuple $t$, we use $t.A$ to denote the projection
of $t$ on a list of projection expressions $A$ and $t[R]$ to
denote the projection of $t$ on the attributes of relation
$R$. For a condition $\theta$ and tuple $t$, $\theta(t)$ denotes a
function that returns $1_{\semK}$ if $t \models \theta$ and
$0_{\semK}$ otherwise.

\begin{defi}[$\raPlus$ over $\semK$-relations]
  Let $\semK$ be a semiring, $R$, $S$ denote $\semK$-relations,
 $t$, $u$ denote
  tuples of appropriate arity, and $k \in K$.    $\raPlus$ on
  $\semK$-relations is defined as:
\begin{align*}
  \selection_\theta(R)(t) &= R(t) \cdot \theta(t) && (\text{selection}) \\
  \projection_A(R)(t) &= \sum\nolimits_{u: u.A = t} R(u) && (\text{projection}) \\
  (R \join S)(t) &= R(t[R]) \cdot S(t[S]) && (\text{join}) \\
  (R \union S)(t) &= R(t) + S(t) && (\text{union})
\end{align*}
\end{defi}

We will make use of homomorphisms, functions from
the domain of a semiring $\semK_1$ to the domain of a semiring
$\semK_2$ that commute with the semiring operations.  Since $\raPlus$ over $\semK$-relations is defined in terms
of these operations, it follows that semiring homomorphisms commute
with queries, as was proven in~\cite{GK07}.

\begin{defi}[Homomorphism]
A mapping $\homo: \semK_1 \to \semK_2$
is called a homomorphism iff for all $k,k'\in \semK_1$:
\begin{align*}
\homo(0_{\semK_1}) &= 0_{\semK_2}
&\homo(1_{\semK_1}) &= 1_{\semK_2}\\
\homo(k +_{\semK_1} k') &= \homo(k) +_{\semK_2} \homo(k')
&\homo(k \cdot_{\semK_1} k') &= \homo(k) \cdot_{\semK_2} \homo(k')
\end{align*}
\end{defi}

\begin{exam}
  Consider the $\semN$-relations shown below which are non-temporal versions of our running example. Query $Q = \projection_{mach}($
  $works \join assign)$ returns machines for which there are workers with the right skill to operate the machine.
  Under multiset semantics we expect M1 to occur in the
  result of $Q$ with multiplicity $8$ since $(M1,SP)$ joins with
  $(Pete, SP)$ and  with
  $(Bob, SP)$. Evaluating the query in $\semN$ yields the expected
  result by multiplying the annotations of these join partners. Given
  the $\semN$ result of the query, we can compute the result of the
  query under set semantics by applying a homomorphism $h$ which maps
  all non-zero annotations to $\textit{true}$ and $0$ to
  $\textit{false}$. For example, for result $(M1)$ we get
  $h(8) = true$, i.e., this tuple is in the result under set
  semantics.

  \medskip
\noindent
  \begin{minipage}{0.3\linewidth}
\centering
{\scriptsize\upshape
    \begin{tabular}{|cc|r}
      \multicolumn{3}{l}{\textbf{$\bf works$}} \\
      \cline{1-2}
       \thead{name} & \thead{skill} & \annotcell{$\semN$}\\
       \cline{1-2}
       Pete & SP & \annotcell{$1$}\\
       Bob & SP & \annotcell{$1$} \\
       Alice & NS & \annotcell{$1$}\\
      \cline{1-2}
    \end{tabular}
}
\end{minipage}
\begin{minipage}{0.3\linewidth}
\centering
{\scriptsize\upshape
    \begin{tabular}{|cc|r}
      \multicolumn{3}{l}{\textbf{$\bf assign$}} \\
      \cline{1-2}
       \thead{mach} & \thead{skill} & \annotcell{$\semN$}\\
       \cline{1-2}
       M1 & SP & \annotcell{$4$}\\
       M2 & NS & \annotcell{$5$} \\
      \cline{1-2}
    \end{tabular}
}
\end{minipage}
\begin{minipage}{0.35\linewidth}
    \centering
    {\scriptsize\upshape
    \begin{tabular}{|c|l}
      \multicolumn{2}{l}{\textbf{Result}} \\
      \cline{1-1}
       \thead{A} & \annotcell{$\semN$}\\
       \cline{1-1}
  M1 &       \annotcell{$1 \cdot 4 + 1 \cdot 4 = 8$}\\
  M2 &       \annotcell{$5 \cdot 1 = 5$} \\
      \cline{1-1}
    \end{tabular}
}
  \end{minipage}
\end{exam}

\subsection{Snapshot K-relations} \label{sec:snapshot-k-relations-1}

We now formally define snapshot
$\semK$-relations, snapshot semantics over such relations, and then define representation systems.  We assume a totally ordered and finite domain
$\timeDomain$ of time points and use $\tLeq$ to denote its order.
$\tMin$ and $\tMax$ denote the minimal and maximal (exclusive) time
point in $\timeDomain$ according to $\tLeq$, respectively. We use $\tPoint + 1$ to denote the successor of $\tPoint \in \timeDomain$ according to $\tLeq$.

A \SKrel{} over a relation schema $\relSchema$ is a function $\timeDomain \to \kRelDom{\semK}{\relSchema}$, where $\kRelDom{\semK}{\relSchema}$ is the set of all $\semK$-relations with schema $\relSchema$. \CapSKdbs{} are defined analog. We use $\kDBDom{\timeDomain,\semK}{}$ to denote the set of all
\SKdbs{} for time domain $\timeDomain$.

\begin{defi}[Snapshot $\semK$-relation]\label{def:snapshot-k-rel}
  Let $\semK$ be a commutative semiring and $\relSchema$ a relation
  schema. A \SKrel{} $\rel$ is a function
  $\rel : \timeDomain \to \kRelDom{\semK}{\relSchema}$.
\end{defi}

For instance, a snapshot $\semN$-relation is shown in
Figure~\ref{fig:overview-approach} (bottom). Given a \SKrel{}, we use the
\textit{timeslice operator}~\cite{DBLP:reference/db/JensenS09w} to access its
state (snapshot) at a time point $\tPoint$: $$\tSlice{\tPoint}(\rel) = \rel(\tPoint)$$

The evaluation of a query $\query$ over a snapshot database (set of snapshot relations)
  $\db$ under \emph{snapshot semantics} returns a snapshot
relation $\query(\db)$ that is constructed as follows: for each time
point $\tPoint \in \timeDomain$ we have
$\query(\db)(\tPoint) = \query(\db(\tPoint))$.
Thus, snapshot temporal queries over \SKrels{}
behave like queries over $\semK$-relations for each
snapshot, i.e., their semantics is uniquely determined by the semantics of queries over $\semK$-relations.

\begin{defi}[Snapshot Semantics]   \label{def:snapshot-k-rel-queries}
  Let $\db$ be a snapshot $\semK$-database and $\query$ be a query.   The result $\query(\db)$ of $\query$ over $\db$ is a snapshot $\semK$-relation that is defined point-wise as follows:
  \begin{align*}
    \forall \tPoint \in \timeDomain: \query(\db)(\tPoint) = \query(\tSlice{\tPoint}(\db))
  \end{align*}
\end{defi}

For example, consider the snapshot $\semN$-relation shown at the bottom of Figure~\ref{fig:overview-approach} and the evaluation of $Q_{onduty}$ under snapshot semantics as also shown in this figure. Observe how the query result is computed by evaluating $Q_{onduty}$ over each snapshot individually using multiset ($\semN$) query semantics.
Furthermore, since $\tSlice{\tPoint}(\query(\rel)) = \query(\rel)(\tPoint)$, per
the above definition, the timeslice operator commutes with queries:
$\tSlice{\tPoint}(\query(\rel)) = \query(\tSlice{\tPoint}(\rel))$.
This property is \textit{snapshot-reducibility}. \BGDel{and has been
  established as an important correctness criterion for temporal
  queries, in particular, for snapshot semantics.}{said before}
\BGDel{Also note that so far we do not restrict the class of queries
  or the formalism they are expressed in. The above definition is
  well-defined for any query language over $\semK$-relations, e.g.,
  relational algebra.}{not so important}

\BGDel{Furthermore, from Definition~\ref{def:snapshot-k-rel-queries} follows that equivalence of queries for \SKrels{}  is equivalent to equivalence of queries  for non-temporal $\semK$-relations.

\begin{lem}\label{lem:repr-system-eq}
  Let $\query$ and $\query'$ be two queries. Furthermore, let $\equiv_{\semK}$ denote query equivalence for $\semK$-relations and  $\equiv_{\timeDomain,\semK}$ for \SKrels{}. Then
  \begin{align*}
    \query \equiv_{\semK} \query' \Leftrightarrow \query \equiv_{\timeDomain,\semK} \query'
  \end{align*}
\end{lem}
}{side track}

\subsection{Representation Systems}
\label{sec:repr-syst}

To compactly encode \SKrels{},
 we study
representation systems that consist of a set of representations
$\reprDomain$, a function
$\repr: \reprDomain \to \kDBDom{\timeDomain,\semK}{}$ which associates an encoding in $\reprDomain$ with the
snapshot $\semK$-database it represents, and a
timeslice operator $\tSlice{\tPoint}$ which extracts the snapshot at
time $\tPoint$ from an  encoding.
If $\repr$ is injective, then we use $\repr^{-1}(D)$ to denote the unique encoding associated with $D$.
We use $\tSlice{}$ to
denote the timeslice over both snapshot databases and
representations. It will be clear from the input which
operator $\tSlice{}$ refers to.
For such a representation system, we consider two encodings $\db_1$ and
$\db_2$ from $\reprDomain$ to be
\textit{snapshot-equivalent}~\cite{DBLP:reference/db/JensenS09j}
(written as $\db_1 \intervalEq \db_2$)
if they encode the same snapshot $\semK$-database. Note that this is the case if they encode the same snapshots, i.e.,
iff for all
$\tPoint \in \timeDomain$ we have
$\tSlice{\tPoint}(\db_1) = \tSlice{\tPoint}(\db_2)$. For a representation
system to behave correctly, the following conditions have to be met:
1) \textbf{uniqueness}:
for each snapshot $\semK$-database $\db$ there exists a unique element
from $\reprDomain$ representing $\db$;
2)
\textbf{snapshot-reducibility}: the timeslice operator commutes with
queries; and
3) \textbf{snapshot-preservation}: the encoding function $\repr$
preserves the snapshots of the input.

\BGDel{Snapshot $\semK$-relations are easy to understand and provide a clean temporal extension and snapshot temporal query semantics for annotated relations (and, thus, multiset semantics). However, such relations are quite redundant representations since tuples that occur at multiple time points are replicated across multiple instances. Thus, our goal is to study more compact representations of \SKrels{} which have a  query semantics
that is compatible with the snapshot query semantics over \SKrels{}.}{redundant}

\BGDel{For such a representation system to behave ``correctly'', we have to require that the timeslice operator yields the same result as the timeslice over snapshot relations and that query semantics are snapshot-reducible, i.e., they correctly implement snapshot semantics.}{Explained above now}

\begin{defi}[Representation System]\label{def:repr-system}
We call a triple $(\reprDomain, \repr, \tSlice{})$ a \emph{representation system} for snapshot $\semK$-databases with regard to a class of queries $\qClass$ iff for every snapshot database $\db$, encodings $E$, $E' \in \reprDomain$, time point $\tPoint$, and query $\query \in \qClass$ we have
\begin{enumerate}
\item $\repr(E) = \repr(E') \Rightarrow E=E'$\hfill(uniqness)
\item $  \tSlice{\tPoint}(\query(E)) = \query(\tSlice{\tPoint}(E))$
\hfill (snapshot-reducibility)
\item $\repr(E) = D \Rightarrow \tSlice{\tPoint}(E) = \tSlice{\tPoint}(\db)$ \hfill (snapshot-preservation)
\end{enumerate}
\end{defi}

\BGDel{To see why snapshot-reducibility implies that the implementation of snapshot queries over such encodings is correct, consider the following lemma which proves that to check whether queries under the encoding are equivalent to snapshot semantics queries over \SKrels{} it suffices to prove snapshot-reducibility.}{too much info}

\BGDel{
\begin{lem}\label{lem:q-commute-eq-snapshot-red}
Consider the 3 conditions of Definitions~\ref{def:repr-system} and condition (4) shown below. We have,
  \begin{align*}
  (1), (2), (4) \Leftrightarrow (1), (2), (3)
  \end{align*}
\begin{align*}
\tSlice{\tPoint}(\query(\repr(\db)) = \query(\tSlice{\tPoint}(\repr(\db)))  \tag{4}
\end{align*}
\end{lem}
}{not needed anymore}

\BGDel{When proving our interval-based semiring encodings to be representation systems, we will make use of this lemma.}{No longer needed}

\section{Temporal $\semK$-elements}
\label{sec:temporal-k-normalform}

We now introduce temporal $\semK$-elements that are the annotations we use to
define our logical model (representation system).  Temporal
$\semK$-elements record, using an interval-based encoding, how the
$\semK$-annotation of a tuple in a \SKrel{} changes over time. We introduce a
unique normal form for temporal $\semK$-elements based on a generalization of
coalescing~\cite{DBLP:conf/vldb/BohlenSS96}.

\subsection{Defining Temporal $\semK$-elements}
\label{sec:concrete-temporal-k}

To define temporal $\semK$-elements, we need to introduce some background on
intervals. Given the time domain $\timeDomain$ and its associated
total order $\tLeq$, an interval $\interval = [\tBegin, \tEnd)$ is a
pair of time points from $\timeDomain$, where $\tBegin \tLe \tEnd$.
Interval $\interval$ represents the set of contiguous time points
$\{ \tPoint \mid \tPoint \in \timeDomain \wedge \tBegin \tLeq \tPoint
\tLe \tEnd\}$.
For an interval $\interval = [\tBegin,\tEnd)$ we use
$\iBegin{\interval}$ to denote $\tBegin$ and $\iEnd{\interval}$ to
denote $\tEnd$. We use $I, I', I_1, \ldots $ to represent intervals.
We define a relation $\adjacent{\interval_1}{\interval_2}$
that contains all interval pairs that are adjacent:
$\adjacent{\interval_1}{\interval_2} \Leftrightarrow
(\iEnd{\interval_1} = \iBegin{\interval_2}) \vee (\iEnd{\interval_2}
= \iBegin{\interval_1})$.
We will implicitly understand set operations, such as
$t \in \interval$ or $\interval_1 \subseteq \interval_2$, to be
interpreted over the set of points represented by an
interval. Furthermore, $\interval \cap \interval'$ denotes the
interval that covers precisely the intersection of the sets of time
points defined by $\interval$ and $\interval'$ and
$\interval \cup \interval'$ denotes their union (only well-defined if
$\interval \cap \interval' \neq \emptyset$ or
$\adjacent{\interval}{\interval'}$). For convenience, we define
$\interval \cup \interval' = \emptyset$ iff
$\interval \cap \interval' = \emptyset \land \neg
\adjacent{\interval}{\interval'}$.
We use $\intervalDom$ to denote the set of all intervals over
$\timeDomain$.

\begin{defi}[Temporal $\semK$-elements]
Given a semiring $\semK$, a \emph{temporal $\semK$-element} $\anyTE$ is a function $\intervalDom \to \semK$. We use $\cTEDom{\semK}$ to denote the set of all such temporal elements for $\semK$.
\end{defi}

We represent temporal $\semK$-elements  as sets of input-output pairs.  Intervals that are not explicitly mentioned are mapped
to $0_\semK$. \BGDel{For instance,
  $\{ [\tPoint_1,\tPoint_2] \mapsto k_1, [\tPoint_3,\tPoint_{10}]
  \mapsto k_2 \}$
  is the temporal element that maps all time intervals to $0_\semK$
  except for $[\tPoint_1,\tPoint_2]$ which is mapped to $k_1$ and
  $[\tPoint_3,\tPoint_{10}]$ which is mapped to $k_2$.}{}

\begin{exam}\label{ex:temp-k-elem}
Reconsider our running example with $\timeDomain = \{ 00, \ldots, 23
  \}$.
  The history of the annotation of tuple $t = $ \texttt{(Ann,SP)}
  from the \texttt{works} relation is as shown in
  Figure~\ref{fig:overview-approach} (middle). For sake of the example, we change the multiplicity of this tuple to $3$ during $[03,09)$ and $2$ during $[18,20)$. This information is encoded as the temporal
  $\semN$-element $\anyTE_1 = \{[03, 09) \mapsto 3, [18,20) \mapsto 2\}$. \end{exam}

Note that a temporal $\semK$-element $\anyTE$ may map overlapping
intervals to non-zero elements of $\semK$.
We assign the following semantics to overlap: the
annotation at a time point $\tPoint$ recorded by $\anyTE$ is the
sum of the annotations assigned to intervals containing $\tPoint$.
For instance, the annotation at time $04$ for the  $\semN$-element $\anyTE = \{ [00, 05) \mapsto 2, [04, 05) \mapsto 1 \}$ would be $2 + 1 = 3$.
To extract the annotation
valid at time $\tPoint$ from a temporal $\semK$-element $\anyTE$, we
define a timeslice operator for temporal $\semK$-elements as follows:
\begin{align*}\label{eq:temp-k-elem-timeslice}
  &\tSlice{\tPoint}(\anyTE) = \sum_{\tPoint \in \interval} \anyTE(\interval) \tag{timeslice operator}
\end{align*}

Given two temporal $\semK$-elements $\anyTE_1$ and $\anyTE_2$, we
would like to know if they represent the same history of annotations.
For that, we define \emph{snapshot-equivalence} ($\intervalEq$) for
temporal $\semK$-elements:
\begin{align*}\label{eq:temp-k-element-equivalence}
\anyTE_1 \intervalEq \anyTE_2 \Leftrightarrow \forall \tPoint \in \timeDomain: \tSlice{\tPoint}(\anyTE_1) = \tSlice{\tPoint}(\anyTE_2) \tag{snapshot-equivalence}
\end{align*}

\BGDel{In the temporal database literature, equivalent but representationally different temporal databases have sometimes been called \textit{snapshot-equivalent}. Relation  $\intervalEq$ is  snapshot-equivalence for temporal $\semK$-elements.}{earlier}

\subsection{A \revc{Normal Form} Based on $\semK$-Coalescing}
\label{sec:coalescing}

The encoding of the annotation history of a tuple as a temporal $\semK$-element is typically not unique.

\begin{exam}\label{ex:non-unique-concrete-temp-k-elem}
Reconsider the temporal $\semN$-element $\anyTE_1$ from Example~\ref{ex:temp-k-elem}. Recall that intervals not shown are mapped to $0$. The $\semN$-elements shown below are snapshot-equivalent to $\anyTE_1$.
  \begin{align*}
    &\anyTE_2 = \{[03, 09) \mapsto 1, [03,06) \mapsto 2, [06,09) \mapsto 2, [18,19) \mapsto 2\} \\[2mm]
    &\anyTE_3 = \{[03, 05) \mapsto 3, [05,09) \mapsto 3, [18,19) \mapsto 2\}
  \end{align*}
\end{exam}

To be able to build a representation system based on temporal $\semK$-elements we need a unique way to encode the annotation history of a tuple as a  temporal $\semK$-element (condition 1 of Definition~\ref{def:repr-system}). \BGDel{Also from a user perspective there is a benefit in representing a query result in a unique and predictable way.}{Somewhere else?}
That is, we need to define a normal form that is unique for snapshot-equivalent temporal $\semK$-elements. To this end, we generalize \textit{coalescing}, which was defined for temporal databases with set semantics in~\cite{DBLP:journals/tods/Snodgrass87,DBLP:conf/vldb/BohlenSS96}. The generalized form, which we call $\semK$-coalescing, coincides with standard coalescing for semiring $\semB$ (set semantics) and, for any semiring $\semK$, yields a unique encoding.

$\semK$-coalescing creates maximal intervals of contiguous time points
with the same annotation. The output is a temporal $\semK$-element such that
(a) no two intervals mapped to a non-zero element overlap and (b)
adjacent intervals assigned to non-zero elements are guaranteed to be
mapped to different annotations. To determine such intervals, we
define annotation changepoints, time points $\tPoint$ where the annotation of a temporal $\semK$-element differs from the annotation at $\tPoint - 1$, i.e., $\tSlice{\tPoint}(\anyTE) \neq \tSlice{\tPoint-1}(\anyTE)$). It will be convenient to also
consider $\tMin$ as an annotation
changepoint.

\begin{defi}[Annotation Changepoint]\label{def:cp}
  Given a temporal $\semK$-element $\anyTE$, a time point $\tPoint$ is called a \emph{changepoint} in $\anyTE$ if one of the following conditions holds:
  \begin{itemize}
  \item  $\tPoint = \tMin$     \hfill(smallest  time point)
  \item $\tSlice{\tPoint - 1}(\anyTE) \neq \tSlice{\tPoint}(\anyTE)$ \hfill(change of annotation)
  \end{itemize}
  We use $\CPs{\anyTE}$ to denote the set of all annotation
  changepoints for $\anyTE$. Furthermore, we define $\CPIs{\anyTE}$ to
  be the set of all intervals that consist of consecutive change
  points:
  \begin{align*}
    \CPIs{\anyTE} &= \{ [\tPoint_b, \tPoint_e) \mid \tPoint_b \tLe \tPoint_e \wedge \tPoint_b \in \CPs{\anyTE} \wedge{}
    \\
                  &\mathtab\mathtab(\tPoint_e \in \CPs{\anyTE} \vee T_e = \tMax) \wedge{}
    \\
                  &\mathtab\mathtab \not\exists \tPoint' \in \CPs{\anyTE}: \tPoint_b \tLe \tPoint' \tLe \tPoint_e \}
  \end{align*}
\end{defi}

In Definition~\ref{def:cp}, $\CPIs{\anyTE}$ computes maximal intervals
such that the annotation assigned by $\anyTE$ to each point in such an
interval is constant. In the coalesced representation of
$\anyTE$ only such intervals are mapped to non-zero annotations.

\begin{defi}[$\semK$-Coalesce]
 Let $\anyTE$  be a  temporal $\semK$-element.
 We define $\semK$-coalescing $\kCoalesce{\semK}$ as a function $\cTEDom{\semK} \to \cTEDom{\semK}$:
\begin{align*}
\kCoalesce{\semK}(\anyTE)(\interval) & =
                                     \begin{cases}
                                       \tSlice{\iBegin{\interval}}(\anyTE) &\text{if}\; \interval \in \CPIs{\anyTE}\\
                                       \zeroK &\text{otherwise}
                                     \end{cases}
  \end{align*}
  We use $\nTEDom{\semK}$ to denote all normalized temporal $\semK$-elements, i.e., elements $\anyTE$ for which $\kCoalesce{\semK}(\anyTE) = \anyTE'$ for some $\anyTE'$.

\end{defi}

\begin{figure}
  \centering
  \scriptsize
  \renewcommand{\arraystretch}{1.2}

  \begin{tabular}[t]{|l|l|}
    \hline
    \thead{sal} & \thead{period} \\
    \hline
    $50$k & $[1,13)$ \\
    $30$k & $[3,13)$ \\
    $30$k & $[3,10)$ \\
    $40$k & $[11,13)$ \\
    \hline
  \end{tabular}
  \hspace{1cm}
  \begin{minipage}[t]{0.35\columnwidth}
    \begin{align*}
      &\anyTE_{50k} = \{[1,13) \mapsto 1 \}\\[1mm]
      &\anyTE_{30k} = \{[3,10) \mapsto 1, [3,13) \mapsto 1 \}\\[1mm]
      &\anyTE_{40k} = \{[11,13) \mapsto 1 \}
    \end{align*}
  \end{minipage}

  \caption{Example period multiset relation $S$ and temporal
    $\semN$-elements encoding the history of tuples.}
  \label{fig:example_temp_coalesce}
\end{figure}

\begin{exam}\label{ex:k-coalesce}
  Consider the \SQLrel{} shown in
  Figure~\ref{fig:example_temp_coalesce}. The temporal
  $\semN$-elements encode the history of tuples $(30k)$,
  $(40k)$ and $(50k)$. Note that $\anyTE_{30k}$ is not coalesced since
  the two non-zero intervals of this   $\semN$-element
  overlap. Applying $\semN$-coalesce we get:
  \begin{align*}
    \kCoalesce{\semN}(\anyTE_{30k}) = \{ [3,10) \mapsto 2, [10,13) \mapsto 1 \}
  \end{align*}
  That is, this tuple occurs twice within the time interval $[3,10)$
  and once in $[10,13)$, i.e., it has annotation changepoints
  $3$, $10$, and $14$. Interpreting the same relation under set
  semantics (semiring $\semB$), the history of $(30k)$ can be encoded
  as a temporal $\semB$-element
  ${\anyTE_{30k}}' = \{ [3,10) \mapsto true, [3,13) \mapsto true
  \}$. Applying $\semB$-coalesce:   \begin{align*}
    \kCoalesce{\semB}({\anyTE_{30k}}') = \{ [3,13) \mapsto true \}
  \end{align*}
  That is, this tuple occurs (is annotated with $true$) within the time interval $[3,13)$ and its annotation changepoints are $3$ and $14$.
\end{exam}

We now prove several important properties of the $\semK$-coalesce operator
establishing that $\nTEDom{\semK}$ (coalesced temporal $\semK$-elements) is a good choice for a \revc{normal form} of
 temporal $\semK$-elements.

\begin{lem}\label{lem:coalesce-properties}
  Let $\semK$ be a semiring and $\anyTE$, $\anyTE_1$ and $\anyTE_2$  temporal $\semK$-elements. We have:
  \begin{align*}
    \kCoalesce{\semK}(\kCoalesce{\semK}(\anyTE)) &= \kCoalesce{\semK}(\anyTE) \tag{idempotence}\\
    \anyTE_1 \intervalEq \anyTE_2 &\Leftrightarrow \kCoalesce{\semK}(\anyTE_1) = \kCoalesce{\semK}(\anyTE_2) \tag{uniqueness}\\
    \anyTE &\intervalEq \kCoalesce{\semK}(\anyTE) \tag{equivalence preservation}
  \end{align*}
\end{lem}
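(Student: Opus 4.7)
The plan is to prove the three properties in the order \emph{equivalence preservation}, then \emph{uniqueness}, then \emph{idempotence}, since equivalence preservation is the workhorse that makes the other two essentially one-line arguments.

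For equivalence preservation, I would fix an arbitrary $\tPoint \in \timeDomain$ and show $\tSlice{\tPoint}(\kCoalesce{\semK}(\anyTE)) = \tSlice{\tPoint}(\anyTE)$. The key observation is that $\CPIs{\anyTE}$ partitions $\timeDomain$ into maximal contiguous blocks of constant timeslice value: by Definition~\ref{def:cp}, within any $[\tPoint_b, \tPoint_e) \in \CPIs{\anyTE}$ there is no $\tPoint' \in (\tPoint_b, \tPoint_e)$ with $\tSlice{\tPoint' - 1}(\anyTE) \neq \tSlice{\tPoint'}(\anyTE)$, so $\tSlice{\tPoint}(\anyTE) = \tSlice{\tPoint_b}(\anyTE)$ for every $\tPoint$ in that block. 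Because $\kCoalesce{\semK}(\anyTE)$ is $\zeroK$ outside $\CPIs{\anyTE}$ and the elements of $\CPIs{\anyTE}$ are pairwise disjoint, exactly one such interval $[\tPoint_b, \tPoint_e)$ contains $\tPoint$, and the timeslice definition gives $\tSlice{\tPoint}(\kCoalesce{\semK}(\anyTE)) = \kCoalesce{\semK}(\anyTE)([\tPoint_b, \tPoint_e)) = \tSlice{\tPoint_b}(\anyTE) = \tSlice{\tPoint}(\anyTE)$. The only delicate point is verifying that $\CPIs{\anyTE}$ really is a partition of $\timeDomain$, which follows directly from the definition: consecutive changepoints delineate disjoint contiguous blocks, the block starting at $\tMin$ is guaranteed by the first clause of Definition~\ref{def:cp}, and the block ending at $\tMax$ is guaranteed by the $\tPoint_e = \tMax$ disjunct in the definition of $\CPIs{\anyTE}$.

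For uniqueness, the $(\Leftarrow)$ direction is immediate from equivalence preservation plus transitivity of $\intervalEq$: $\anyTE_1 \intervalEq \kCoalesce{\semK}(\anyTE_1) = \kCoalesce{\semK}(\anyTE_2) \intervalEq \anyTE_2$. For $(\Rightarrow)$, assume $\anyTE_1 \intervalEq \anyTE_2$, i.e.\ $\tSlice{\tPoint}(\anyTE_1) = \tSlice{\tPoint}(\anyTE_2)$ for all $\tPoint$. Since the definitions of $\CPs{\cdot}$ and $\CPIs{\cdot}$ only refer to timeslice values, we immediately get $\CPs{\anyTE_1} = \CPs{\anyTE_2}$ and hence $\CPIs{\anyTE_1} = \CPIs{\anyTE_2}$. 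The coalesced forms therefore have the same support, and on each interval $[\tPoint_b, \tPoint_e)$ in this common support they both take the value $\tSlice{\tPoint_b}(\anyTE_1) = \tSlice{\tPoint_b}(\anyTE_2)$, so $\kCoalesce{\semK}(\anyTE_1) = \kCoalesce{\semK}(\anyTE_2)$.

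For idempotence, let $\anyTE' = \kCoalesce{\semK}(\anyTE)$. By equivalence preservation $\anyTE' \intervalEq \anyTE$, and by uniqueness this gives $\kCoalesce{\semK}(\anyTE') = \kCoalesce{\semK}(\anyTE) = \anyTE'$, which is exactly idempotence.

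The only step that requires any real care is the partitioning argument used in equivalence preservation; everything else is bookkeeping once that lemma is in place. I would state the partitioning fact as a short auxiliary observation at the start of the proof so that the three properties can then be derived cleanly in the order above.
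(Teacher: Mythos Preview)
Your proposal is correct and follows essentially the same route as the paper: equivalence preservation first (via the fact that $\CPIs{\anyTE}$ partitions $\timeDomain$ into blocks of constant timeslice), then uniqueness from the observation that $\CPs{\cdot}$ and $\CPIs{\cdot}$ depend only on timeslices, and finally idempotence from the other two. The only cosmetic difference is that the paper phrases equivalence preservation as a proof by contradiction with a case split on whether $\tPoint \in \CPs{\anyTE}$, whereas you argue directly via the partition; your explicit treatment of the $(\Leftarrow)$ direction of uniqueness is also a bit more careful than the paper's.
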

  \proofsketch{We provide the proofs for all lemmas and theorems in an accompanying report~\cite{DG18}, \revm{but show sketches for important proofs here. Equivalence preservation and uniqueness follow annotation change points being defined based on snapshots. Uniqueness plus equivalence preservation implies idempotence.}}
  \iftechreport{
\begin{proof}
  All proofs are shown in Appendix~\ref{sec:proofs}.
\end{proof}
}

\section{Period Semirings}
\label{sec:interv-temp-annot}

Having established a unique normal form of temporal $\semK$-elements,
we now proceed to define period semirings as our logical model. The elements of a period semiring are temporal
$\semK$-elements in normal form. We prove that these structures are
semirings and ultimately that relations annotated with period semirings form a
representation system for snapshot $\semK$-relations for $\raPlus$. In
Section~\ref{sec:complex-queries}, we then prove them to also be a
representation system for $\raAgg$, i.e., queries involving difference
and aggregation.

When defining the addition and multiplication operations and their
neutral elements in the semiring structure of temporal
$\semK$-elements, we have to ensure that these definitions are
compatible with semiring $\semK$ on snapshots. Furthermore, we need to
ensure that the output of these operations is guaranteed to be $\semK$-coalesced. The latter can be ensured by applying $\semK$-coalesce to
the output of the operation. For addition, snapshot reducibility
is achieved by pointwise addition (denoted as $\addP$)
of the two functions that constitute the two input temporal
$\semK$-elements. That is, for each interval $\interval$, the function
that is the result of the addition of temporal $\semK$-elements $\anyTE_1$ and
$\anyTE_2$ assigns to $\interval$ the value
$\anyTE_1(\interval) +_{\semK} \anyTE_2(\interval)$. For
multiplication, the multiplication of two $\semK$-elements assigned to
an overlapping pair of intervals $\interval_1$ and $\interval_2$ is
valid during the intersection of $\interval_1$ and
$\interval_2$. Since both input temporal $\semK$-elements may assign non-zero
values to multiple intervals that have the same overlap, the resulting
$\semK$-value at a point $\tPoint$ would be the sum over all pairs of
overlapping intervals. We denote this operation as $\multP$. Since
$\addP$ and $\multP$ may return a temporal $\semK$-element that is not
coalesced, we define the operations of our structures to apply
$\kCoalesce{\semK}$ to the result of $\addP$ and $\multP$.  The zero
element of the temporal extension of $\semK$ is the temporal $\semK$-element
that maps all intervals to $0$ and the $1$ element is the temporal
element that maps every interval to $0_{\semK}$ except for
$[\tMin, \tMax)$ which is mapped to $1_{\semK}$.

\begin{defi}[Period Semiring]
  For a time domain $\timeDomain$ with minimum $\tMin$ and maximum
  $\tMax$ and a semiring $\semK$, the period semiring
  $\semTimeNI$ is defined as:
  $$\semTimeNI = (\nTEDom{\semK}, \addNI, \multNI, \zeroNI, \oneNI)$$
  where for $k,k' \in \nTEDom{\semK}$ and :
  \begin{align*}
        \forall \interval \in \intervalDom: \zeroNI(\interval) &= 0_{\semK} &
                                                                              \oneNI(\interval) &=
                             \begin{cases}
                               1_{\semK} &\text{if}\; \interval=[\tMin,\tMax)\\
                               0_{\semK} &\text{otherwise}
                             \end{cases}
  \end{align*}\\[-8mm]
  \begin{align*}
    k \addNI k' &= \kCoalesce{\semK}(k \addP k')\\
\forall \interval \in \intervalDom:
    (k \addP k')(\interval) &= k(\interval) +_{\semK} k'(\interval)\\
    k \multNI k' &= \kCoalesce{\semK}(k \multP k')\\
\forall \interval \in \intervalDom:    (k \multP k')(\interval) &= \sum_{\forall \interval', \interval'': \interval = \interval' \cap \interval''} k(\interval') \cdot_{\semK} k'(\interval'')
  \end{align*}

\end{defi}

\begin{exam}\label{ex:interval-sem-storage-and-query-evaluation}
  Consider the $\semTimeNIN$-relation \texttt{works} shown in
  Figure~\ref{fig:overview-approach} (middle) and query
  $\projection_{skill}(works)$. Recall that the annotation of a tuple $\tuple$
  in the result of a projection over a $\semK$-relation is the sum of all
  input tuples which are projected onto $\tuple$. For result tuple
  \texttt{(SP)} we have input tuples \texttt{(Ann,SP)} and \texttt{(Sam,SP)} with $\anyTE_1 = \{[03,10) \mapsto 1,
  [18,20) \mapsto 1 \}$ and $\anyTE_{2} =
  \{[08,16) \mapsto 1\}$, respectively. The tuple \texttt{(SP)} is annotated
  with the sum of these annotations, i.e., $\anyTE_1 +_{\semTimeNIN} \anyTE_2$. Substituting
  definitions we get:
  \begin{align*}
    &\anyTE_1 +_{\semTimeNIN} \anyTE_2 = \kCoalesce{\semN}(\anyTE_1 \addPN \anyTE_2)\\ = &\kCoalesce{\semN}(\{[03,10) \mapsto 1,
  [18,20) \mapsto 1, [08,16) \mapsto 1 \}) \\ = &\{[03,08) \mapsto 1, [08,10) \mapsto 2, [10,16) \mapsto 1, [18,20) \mapsto 1 \}
  \end{align*}
  Thus, as expected, the result records that, e.g., there are two skilled workers (\emph{SP}) on duty during time interval $[08,10)$.
\end{exam}

Having defined the family of period semirings, it remains to be shown that $\semTimeNI$ with standard K-relational query semantics is a representation system for \SKrels{}.

\subsection{$\semTimeNI$ is a Semiring}
\label{sec:semtimeni-semiring}

As a first step, we prove that for any semiring $\semK$, the structure $\semTimeNI$ is  also  a semiring.
 The following lemma shows that $\semK$-coalesce can be redundantly pushed into $\addP$ and $\multP$ operations.

\begin{lem}\label{lem:coalesce-push}
  Let $\semK$ be a semiring and $k, k' \in \nTEDom{\semK}$. Then,
     \begin{align*}
     \kCoalesce{\semK}(k \addP k') &= \kCoalesce{\semK}(\kCoalesce{\semK}(k) \addP k') \\
       \kCoalesce{\semK}(k \multP k')
      &= \kCoalesce{\semK}(\kCoalesce{\semK}(k) \multP k')
   \end{align*}
\end{lem}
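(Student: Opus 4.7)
The plan is to reduce both equalities to a single general principle: the coalesce operation depends only on the snapshot history of its argument, and both $\addP$ and $\multP$ are compatible with snapshot-equivalence. So first I would establish that the timeslice operator on temporal $\semK$-elements commutes with $\addP$ and $\multP$ in the expected way. Specifically, by unfolding definitions and using distributivity in $\semK$, for every $\tPoint \in \timeDomain$ and temporal $\semK$-elements $a,b$:
\begin{align*}
\tSlice{\tPoint}(a \addP b) &= \tSlice{\tPoint}(a) +_{\semK} \tSlice{\tPoint}(b), \\
\tSlice{\tPoint}(a \multP b) &= \tSlice{\tPoint}(a) \cdot_{\semK} \tSlice{\tPoint}(b).
\end{align*}
The additive identity is immediate from the pointwise definition of $\addP$. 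For the multiplicative one, unfolding gives $\tSlice{\tPoint}(a \multP b) = \sum_{\tPoint \in \interval' \cap \interval''} a(\interval') \cdot_{\semK} b(\interval'')$, and by distributivity in $\semK$ this factors as $(\sum_{\tPoint \in \interval'} a(\interval')) \cdot_{\semK} (\sum_{\tPoint \in \interval''} b(\interval''))$.

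Next I would deduce that $\intervalEq$ is a congruence with respect to $\addP$ and $\multP$: if $a \intervalEq a'$ then for all $\tPoint$ we have $\tSlice{\tPoint}(a) = \tSlice{\tPoint}(a')$, which together with the identities above yields $a \addP b \intervalEq a' \addP b$ and $a \multP b \intervalEq a' \multP b$.

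Finally, I would chain this with the two properties of $\kCoalesce{\semK}$ proved in Lemma~\ref{lem:coalesce-properties}. By equivalence preservation, $\kCoalesce{\semK}(k) \intervalEq k$. Applying the congruence just established (with $b = k'$) gives $\kCoalesce{\semK}(k) \addP k' \intervalEq k \addP k'$ and $\kCoalesce{\semK}(k) \multP k' \intervalEq k \multP k'$. By uniqueness, snapshot-equivalent temporal $\semK$-elements have identical coalesced forms, which yields both stated equalities.

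I do not expect a real obstacle here: the argument is essentially that $\kCoalesce{\semK}$ factors through snapshot histories and both $\addP$, $\multP$ respect snapshot histories. The only point requiring care is the multiplicative identity for timeslice, where one has to justify interchanging the outer sum over $\interval$ (with $\tPoint \in \interval$ and $\interval = \interval' \cap \interval''$) with the double sum over $(\interval',\interval'')$ and then apply semiring distributivity; this is routine but worth stating explicitly.
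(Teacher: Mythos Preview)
Your proposal is correct and follows essentially the same route as the paper: prove that $\intervalEq$ is a congruence for $\addP$ and $\multP$ by showing $\tSlice{\tPoint}$ commutes with both operations, then invoke equivalence preservation and uniqueness from Lemma~\ref{lem:coalesce-properties}. The one minor difference is that for the multiplicative timeslice identity the paper makes a forward reference to the homomorphism theorem (Theorem~\ref{theo:hib-is-homomorphism}), whereas you verify it directly by unfolding and distributivity; your self-contained treatment is slightly cleaner in this respect and avoids the appearance of circularity.
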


Using this lemma, we now prove that for any semiring $\semK$, the structure $\semTimeNI$ is also a semiring.

\begin{theo}\label{theo:timeib-is-semiring}
For any semiring $\semK$, structure $\semTimeNI$ is a semiring.
\end{theo}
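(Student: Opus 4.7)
The plan is to verify the five semiring axioms (associativity and commutativity of $\addNI$ and $\multNI$, distributivity, and neutral elements $\zeroNI$, $\oneNI$) for the structure $\semTimeNI$. The approach proceeds in three stages: first establish the axioms for the uncoalesced operations $\addP$ and $\multP$ on the full function space $\cTEDom{\semK}$; then use Lemma~\ref{lem:coalesce-push} to transport these axioms across the final $\kCoalesce{\semK}$ wrapper; finally confirm that $\zeroNI$ and $\oneNI$ actually lie in $\nTEDom{\semK}$ and behave as neutral elements.

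For the first stage, $\addP$ is defined pointwise on intervals as $+_\semK$, so its commutativity, associativity, and the fact that the all-zero function is a neutral element all follow immediately from the corresponding properties of the underlying semiring $\semK$. For $\multP$, commutativity follows from commutativity of $\cdot_\semK$ together with symmetry of $I = I' \cap I''$ in its two arguments. Associativity of $\multP$ is the main bookkeeping step: unfolding $((k_1 \multP k_2) \multP k_3)(I)$ yields a sum of $k_1(I_1)\cdot_\semK k_2(I_2)\cdot_\semK k_3(I_3)$ over all triples with $I = I_1 \cap I_2 \cap I_3$, and the same holds for the other bracketing, using associativity of $\cdot_\semK$ and of set intersection on intervals. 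Distributivity of $\multP$ over $\addP$ then follows by the usual expansion, using distributivity of $\cdot_\semK$ over $+_\semK$ inside each term of the sum. Multiplication by the all-zero function yields zero because every summand contains a factor $0_\semK$.

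For the second stage, to show, e.g., associativity of $\addNI$, we compute
\begin{align*}
(k_1 \addNI k_2) \addNI k_3
 &= \kCoalesce{\semK}\bigl(\kCoalesce{\semK}(k_1 \addP k_2) \addP k_3\bigr) \\
 &= \kCoalesce{\semK}\bigl((k_1 \addP k_2) \addP k_3\bigr) \\
 &= \kCoalesce{\semK}\bigl(k_1 \addP (k_2 \addP k_3)\bigr) \\
 &= k_1 \addNI (k_2 \addNI k_3),
\end{align*}
where the second and fourth equalities are Lemma~\ref{lem:coalesce-push} and the third is associativity of $\addP$. The same pattern (strip the inner coalesce, apply the stage-one axiom on $\addP$/$\multP$, re-insert the coalesce) yields associativity of $\multNI$, distributivity of $\multNI$ over $\addNI$, and the zero-annihilation law.

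For the third stage, $\zeroNI$ has a constant snapshot $0_\semK$, hence no annotation changepoints other than $\tMin$ and $\CPIs{\zeroNI}=\{[\tMin,\tMax)\}$, so $\kCoalesce{\semK}(\zeroNI)=\zeroNI$; similarly $\oneNI$ is constant at $1_\semK$ and is coalesced. Thus both lie in $\nTEDom{\semK}$. For any coalesced $k$, $k \addP \zeroNI = k$ and $k \multP \oneNI = k$ (the only interval $I''$ with $\oneNI(I'')\neq 0_\semK$ is $[\tMin,\tMax)$, for which $I' \cap [\tMin,\tMax)=I'$), and since $k$ is already coalesced applying $\kCoalesce{\semK}$ again is a no-op by idempotence (Lemma~\ref{lem:coalesce-properties}). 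The main obstacle is bookkeeping in the associativity argument for $\multP$, but once the triple-intersection form is written out the reduction to associativity of $\cdot_\semK$ in $\semK$ is purely mechanical.
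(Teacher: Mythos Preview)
Your proposal is correct and follows essentially the same approach as the paper: establish each axiom first for the uncoalesced operations $\addP$ and $\multP$ using the corresponding properties of $\semK$, then apply Lemma~\ref{lem:coalesce-push} to strip and re-insert the inner $\kCoalesce{\semK}$ when passing to $\addNI$ and $\multNI$. Your explicit verification that $\zeroNI$ and $\oneNI$ lie in $\nTEDom{\semK}$ and your mention of the zero-annihilation law are slight additions the paper leaves implicit, but the overall structure is the same.
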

\proofsketch{
\revm{We prove that $\semTimeNI$ obeys the laws of a commutative semiring based on $\semK$ being a semiring and Lemma~\ref{lem:coalesce-push}.}
  }

\subsection{Timeslice Operator}
\label{sec:time-slic-repr}

We define a timeslice operator for $\semTimeNI$-relations based on the timeslice operator for temporal $\semK$-elements. We annotate each tuple in the output of this operator with the result of  $\tSlice{\tPoint}$ applied to the temporal $\semK$-element the tuple is annotated with.

\begin{defi}[Timeslice for $\semTimeNI$-relations]
  Let $\rel$ be a $\semTimeNI$-relation and $\tPoint \in \timeDomain$. The timeslice operator $\tSlice{\tPoint}(\rel)$ is defined as:
  \begin{align*}
    \tSlice{\tPoint}(\rel)(t) = \tSlice{\tPoint}(\rel(t))
  \end{align*}
\end{defi}

We now prove that the $\tSlice{\tPoint}$ is a homomorphism  $\semTimeNI \to \semK$. Since semiring homomorphisms commute with queries~\cite{GK07}, $\semTimeNI$ equipped with this timeslice operator does fulfill the snapshot-reducibility condition of representation systems (Definition~\ref{def:repr-system}).

\begin{theo}\label{theo:hib-is-homomorphism}
For any $\tPoint \in \timeDomain$, the timeslice operator $\tSlice{\tPoint}$ is a semiring homomorphism from $\semTimeNI$ to $\semK$.
\end{theo}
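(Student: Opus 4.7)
The plan is to verify the four homomorphism conditions of $\tSlice{\tPoint}: \semTimeNI \to \semK$ in order: preservation of $\zeroNI$, $\oneNI$, $\addNI$, and $\multNI$. For the zero and one cases I would just unfold definitions: $\tSlice{\tPoint}(\zeroNI) = \sum_{\tPoint \in \interval} 0_\semK = 0_\semK$ since every interval is assigned $0_\semK$, and $\tSlice{\tPoint}(\oneNI) = 1_\semK$ because the only interval mapped to a non-zero value is $[\tMin, \tMax)$, which contains every $\tPoint \in \timeDomain$.

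For the two binary operations, the first move is to strip off the outer $\kCoalesce{\semK}$. Since $k \addNI k' = \kCoalesce{\semK}(k \addP k')$ and coalescing is snapshot-equivalence preserving (the equivalence preservation part of Lemma~\ref{lem:coalesce-properties}), we have $\tSlice{\tPoint}(k \addNI k') = \tSlice{\tPoint}(k \addP k')$, and likewise for $\multNI$ versus $\multP$. This reduces the task to showing that the pointwise operations $\addP$ and $\multP$ are compatible with $\tSlice{\tPoint}$, which is a purely algebraic computation in $\semK$.

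For addition, I would write
\begin{align*}
\tSlice{\tPoint}(k \addP k') &= \sum_{\tPoint \in \interval} \bigl(k(\interval) +_\semK k'(\interval)\bigr) \\
&= \sum_{\tPoint \in \interval} k(\interval) +_\semK \sum_{\tPoint \in \interval} k'(\interval) \\
&= \tSlice{\tPoint}(k) +_\semK \tSlice{\tPoint}(k'),
\end{align*}
using associativity and commutativity of $+_\semK$ in the middle step (note the sums are finite because only finitely many intervals receive non-zero annotations in a temporal $\semK$-element).

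The main obstacle is the multiplication case, where the re-indexing is less obvious. Unfolding definitions gives $\tSlice{\tPoint}(k \multP k') = \sum_{\tPoint \in \interval} \sum_{\interval = \interval' \cap \interval''} k(\interval') \multK k'(\interval'')$. The key observation is that a pair $(\interval', \interval'')$ contributes to some term in this double sum at $\tPoint$ iff $\tPoint \in \interval' \cap \interval''$, which in turn holds iff $\tPoint \in \interval'$ and $\tPoint \in \interval''$. Since intersection is a partial function on intervals, each such pair contributes exactly once. Hence the double sum can be re-indexed as $\sum_{\tPoint \in \interval', \tPoint \in \interval''} k(\interval') \multK k'(\interval'')$, which by distributivity of $\multK$ over $+_\semK$ factors into $\bigl(\sum_{\tPoint \in \interval'} k(\interval')\bigr) \multK \bigl(\sum_{\tPoint \in \interval''} k'(\interval'')\bigr) = \tSlice{\tPoint}(k) \multK \tSlice{\tPoint}(k')$. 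Care is needed to check that no cross-terms are double-counted and that the bijection between indexing sets is genuine; this is the bookkeeping step I expect to be the trickiest part of the argument.
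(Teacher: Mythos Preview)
Your proposal is correct and follows essentially the same route as the paper: verify the four homomorphism conditions directly, handle addition by splitting the sum, and handle multiplication by re-indexing the double sum over pairs $(\interval',\interval'')$ with $\tPoint\in\interval'\cap\interval''$ and then factoring via distributivity. You are in fact slightly more careful than the paper in one place: you explicitly invoke equivalence preservation from Lemma~\ref{lem:coalesce-properties} to strip the outer $\kCoalesce{\semK}$ before computing, whereas the paper's proof silently expands $(k +_{\anyTE} k')(\interval)$ as $k(\interval)+_\semK k'(\interval)$, which is the definition of $\addP$ rather than $\addNI$, effectively taking that step for granted.
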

\proofsketch{\revm{
Proven by substitution of definitions and by regrouping terms using semiring laws.
}}

As an example of the application of this homomorphism, consider  the period $\semN$-relation \texttt{works} from our running example as shown on the left of Figure~\ref{fig:overview-approach}. Applying $\tSlice{08}$ to this relation yields the snapshot shown on the bottom of this figure (three employees work between 8am and 9am out of whom two are specialized). If we evaluate query $Q_{onduty}$ over this snapshot we get the snapshot shown on the right of this figure (the count is 2). By Theorem~\ref{theo:hib-is-homomorphism} we get the same result if we evaluate $Q_{onduty}$ over the input period $\semN$-relation and then apply $\tSlice{08}$ to the result.

\subsection{Encoding of Snapshot \semK-relations}
\label{sec:encod-snapsh-semk}

We now define a bijective mapping $\repr_{\semK}$ from \SKrels{} to $\semTimeNI$-relations. We then prove that the set of $\semTimeNI$-relations together with the timeslice operator for such relations and the mapping $\reprInv{\semK}$ (the inverse of $\repr_{\semK}$) form a representation system for \SKrels{}. Intuitively, $\repr_{\semK}(\rel)$ is constructed by assigning each tuple $\tuple$ a  temporal $\semK$-element where the annotation of the tuple at time $\tPoint$ (i.e., $\rel(\tPoint)(\tuple)$) is assigned to a singleton interval $[\tPoint,\tPoint+1)$. This temporal $\semK$-element $\anyTE_{R,\tuple}$ is then coalesced to create a $\nTEDom{\semK}$ element.

\begin{defi}\label{def:snapshot-K-relation-encoding}
  Let $\semK$ be a semiring and $\rel$ a \SKrel{},
  $\repr_{\semK}$ is a mapping from \SKrels{} to
  $\semTimeNI$-relations defined as follows.
  \begin{align*}
\forall t:  \repr_{\semK}(\rel)(\tuple) &= \kCoalesce{\semK}(\anyTE_{R,\tuple})\\
\forall t, I: \anyTE_{R,\tuple}(I) &=
    \begin{cases}
      \rel(\tPoint)(\tuple) &\mathtext{if} I = [\tPoint,\tPoint+1)\\
      0_\semK &\mathtext{otherwise}
    \end{cases}
  \end{align*}
\end{defi}

We first prove that this mapping is bijective, i.e., it is invertible,
which guarantees that $\reprInv{\semK}$ is well-defined and also implies uniqueness (condition 1 of
Definition~\ref{def:repr-system}).

\begin{lem}\label{lem:encoding-is-bijective}
For any semiring $\semK$,  $\repr_{\semK}$ is bijective.
\end{lem}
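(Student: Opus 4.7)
\medskip
\noindent\textit{Proof plan.} The plan is to prove the two directions of bijectivity separately, with both directions reducing to the uniqueness part of Lemma~\ref{lem:coalesce-properties} applied to the auxiliary temporal $\semK$-elements $\anyTE_{R,t}$ introduced in Definition~\ref{def:snapshot-K-relation-encoding}. The key observation that drives the whole argument is that, for $\anyTE_{R,t}$ as defined, every interval carrying a non-zero annotation is a singleton $[\tPoint,\tPoint+1)$, and these singletons are pairwise disjoint; hence by the definition of the timeslice operator on temporal $\semK$-elements we have $\tSlice{\tPoint}(\anyTE_{R,t}) = R(\tPoint)(t)$ for every $\tPoint$ and $t$.

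For \textbf{injectivity}, suppose $\repr_{\semK}(R_1) = \repr_{\semK}(R_2)$. Then for every tuple $t$, $\kCoalesce{\semK}(\anyTE_{R_1,t}) = \kCoalesce{\semK}(\anyTE_{R_2,t})$. The uniqueness clause of Lemma~\ref{lem:coalesce-properties} gives $\anyTE_{R_1,t} \intervalEq \anyTE_{R_2,t}$, and by the key observation above this is precisely $R_1(\tPoint)(t) = R_2(\tPoint)(t)$ for all $\tPoint$, i.e.\ $R_1 = R_2$.

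For \textbf{surjectivity}, take an arbitrary $\semTimeNI$-relation $S$ and define a candidate preimage $R$ by $R(\tPoint)(t) \coloneqq \tSlice{\tPoint}(S(t))$. I first need to verify that $R$ is indeed a snapshot $\semK$-relation, which amounts to checking that each $R(\tPoint)$ has finite support; this follows because $S$ itself has finite support and each snapshot value of $S(t)$ is obtained from the finitely many intervals $S(t)$ maps to non-zero values. It then remains to show $\repr_{\semK}(R) = S$. Fix a tuple $t$. By construction and the key observation, $\tSlice{\tPoint}(\anyTE_{R,t}) = R(\tPoint)(t) = \tSlice{\tPoint}(S(t))$ for every $\tPoint$, so $\anyTE_{R,t} \intervalEq S(t)$. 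Applying $\kCoalesce{\semK}$ to both sides and using uniqueness of Lemma~\ref{lem:coalesce-properties} yields $\kCoalesce{\semK}(\anyTE_{R,t}) = \kCoalesce{\semK}(S(t))$. Since $S(t) \in \nTEDom{\semK}$ is already in normal form, $\kCoalesce{\semK}(S(t)) = S(t)$, giving $\repr_{\semK}(R)(t) = S(t)$ as required.

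The main obstacle is not any single algebraic calculation but rather making sure to invoke the correct half of Lemma~\ref{lem:coalesce-properties} at each step and to correctly justify that $R(\tPoint)$ has finite support so that the constructed preimage is a legitimate snapshot $\semK$-relation; both are routine once the singleton-interval observation about $\anyTE_{R,t}$ is in place.
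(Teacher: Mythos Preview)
Your proposal is correct and follows essentially the same approach as the paper: injectivity via the uniqueness clause of Lemma~\ref{lem:coalesce-properties} together with the singleton-interval observation about $\anyTE_{R,t}$, and surjectivity by defining the preimage $R(\tPoint)(t) = \tSlice{\tPoint}(S(t))$. If anything, your argument is more careful than the paper's, which neither checks finite support of the constructed preimage nor explicitly verifies that $\repr_{\semK}$ applied to it recovers $S$; your use of idempotence on the already-normalized $S(t)$ to close the surjectivity argument is exactly the missing step.
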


Next, we have to show that $\repr_{\semK}$ preserves snapshots, i.e., the instance at a time point $\tPoint$ represented by $\rel$ can be extracted from $\repr_{\semK}(\rel)$ using the timeslice operator.

\begin{lem}\label{lem:encoding-timeslice-is-compatible}
  For any semiring $\semK$, \SKrel{} $\rel$, and time point $\tPoint \in \timeDomain$, we have
    $\tSlice{\tPoint}(\repr_{\semK}(\rel)) = \tSlice{\tPoint}(\rel)$.
\end{lem}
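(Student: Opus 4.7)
The plan is to prove the lemma by pointwise reasoning: for an arbitrary tuple $\tuple$ and time point $\tPoint$, I will show that the two sides yield the same element of $\semK$. Fixing $\tuple$, expanding the left-hand side through the definition of the timeslice operator on $\semTimeNI$-relations gives $\tSlice{\tPoint}(\repr_{\semK}(\rel))(\tuple) = \tSlice{\tPoint}(\repr_{\semK}(\rel)(\tuple)) = \tSlice{\tPoint}(\kCoalesce{\semK}(\anyTE_{R,\tuple}))$, while the right-hand side reduces to $\rel(\tPoint)(\tuple)$ by definition. So the goal boils down to showing $\tSlice{\tPoint}(\kCoalesce{\semK}(\anyTE_{R,\tuple})) = \rel(\tPoint)(\tuple)$.

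The key step is to eliminate the coalescing. By the equivalence preservation clause of Lemma~\ref{lem:coalesce-properties}, we have $\anyTE_{R,\tuple} \intervalEq \kCoalesce{\semK}(\anyTE_{R,\tuple})$, and by the very definition of snapshot-equivalence for temporal $\semK$-elements this means $\tSlice{\tPoint}(\anyTE_{R,\tuple}) = \tSlice{\tPoint}(\kCoalesce{\semK}(\anyTE_{R,\tuple}))$ for every $\tPoint$. Hence it suffices to evaluate $\tSlice{\tPoint}(\anyTE_{R,\tuple})$ directly on the uncoalesced form.

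Next I unfold the temporal timeslice as the sum $\sum_{\tPoint \in \interval} \anyTE_{R,\tuple}(\interval)$ over all intervals containing $\tPoint$. Here the construction of $\anyTE_{R,\tuple}$ in Definition~\ref{def:snapshot-K-relation-encoding} is crucial: it only assigns non-zero values to singleton-length intervals $[\tPoint', \tPoint'+1)$, so the only interval in the support of $\anyTE_{R,\tuple}$ containing $\tPoint$ is $[\tPoint, \tPoint+1)$ itself. Therefore the sum collapses to a single term $\anyTE_{R,\tuple}([\tPoint,\tPoint+1)) = \rel(\tPoint)(\tuple)$, which matches the right-hand side by definition of the timeslice operator for \SKrels{}.

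I do not foresee any real obstacles here; the proof is essentially a book-keeping exercise once equivalence preservation of $\kCoalesce{\semK}$ is invoked. The only point to be careful about is the handling of the boundary case $\tPoint = \tMax$ (which is excluded from $\timeDomain$ by the convention $\tBegin \tLe \tEnd$), and verifying that no other interval in the support of $\anyTE_{R,\tuple}$ overlaps $\tPoint$—both follow immediately from the singleton-interval construction.
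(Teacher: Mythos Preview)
Your proposal is correct and follows essentially the same route as the paper: invoke equivalence preservation of $\kCoalesce{\semK}$ to pass from $\tSlice{\tPoint}(\kCoalesce{\semK}(\anyTE_{R,\tuple}))$ to $\tSlice{\tPoint}(\anyTE_{R,\tuple})$, then use the singleton-interval construction of $\anyTE_{R,\tuple}$ to collapse the timeslice sum to the single term $\rel(\tPoint)(\tuple)$. The paper's proof is terser but structurally identical.
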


Based on these properties of $\repr_{\semK}$ and the fact that the timeslice operator over $\semTimeNI$-relations is a homomorphism $\semTimeNI \to \semK$, our main technical result follows immediately. That is, the set of $\semTimeNI$-relations equipped with the timeslice operator and $\reprInv{\semK}$ is a representation system for positive relational algebra queries ($\raPlus$) over \SKrels{}.

\begin{theo}[Representation System]\label{theo:repr-system}
  Given a semiring $\semK$, let $\domTimeRel{\semK}$ be the set of all $\semTimeNI$-relations. The triple $(\domTimeRel{\semK}, \reprInv{\semK}, \tSlice{})$ is a representation system for $\raPlus$ queries over \SKrels{}.
\end{theo}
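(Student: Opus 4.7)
The plan is to verify the three defining conditions of a representation system from Definition~\ref{def:repr-system} for the triple $(\domTimeRel{\semK}, \reprInv{\semK}, \tSlice{})$, using the preparatory results already established in Sections~\ref{sec:time-slic-repr} and~\ref{sec:encod-snapsh-semk}. Since each condition lines up directly with one of the preceding lemmas or Theorem~\ref{theo:hib-is-homomorphism}, the proof is essentially assembly rather than fresh technical work; the only care needed is to track the direction of $\repr_{\semK}$ versus $\reprInv{\semK}$ correctly.

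For \emph{uniqueness} (condition 1), I would simply invoke Lemma~\ref{lem:encoding-is-bijective}, which states that $\repr_{\semK}$ is a bijection between \SKrels{} and $\semTimeNI$-relations. Its inverse $\reprInv{\semK}$ is therefore itself a bijection and, in particular, injective, so $\reprInv{\semK}(E) = \reprInv{\semK}(E')$ forces $E = E'$.

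For \emph{snapshot-preservation} (condition 3), I would start from the hypothesis $\reprInv{\semK}(E) = \db$, which by bijectivity is equivalent to $E = \repr_{\semK}(\db)$. Substituting into Lemma~\ref{lem:encoding-timeslice-is-compatible}, which asserts $\tSlice{\tPoint}(\repr_{\semK}(\rel)) = \tSlice{\tPoint}(\rel)$ for every $\tPoint$ and every \SKrel{} $\rel$, yields $\tSlice{\tPoint}(E) = \tSlice{\tPoint}(\db)$ for all $\tPoint \in \timeDomain$, as required.

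For \emph{snapshot-reducibility} (condition 2), I would combine Theorem~\ref{theo:hib-is-homomorphism}, which establishes that $\tSlice{\tPoint}$ is a semiring homomorphism $\semTimeNI \to \semK$, with the standard fact recalled from~\cite{GK07} in Section~\ref{sec:k-relations}: because every $\raPlus$ operator over $\semK$-relations is defined purely in terms of the semiring operations, any semiring homomorphism applied tuple-wise commutes with $\raPlus$ queries. This immediately yields $\tSlice{\tPoint}(\query(E)) = \query(\tSlice{\tPoint}(E))$ for every $\raPlus$ query $\query$ and every $\semTimeNI$-relation $E$. All three conditions of Definition~\ref{def:repr-system} thus hold and the theorem follows. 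There is no real obstacle at this step: the nontrivial content sits inside Lemmas~\ref{lem:encoding-is-bijective} and~\ref{lem:encoding-timeslice-is-compatible} and Theorem~\ref{theo:hib-is-homomorphism}, and the commutation of homomorphisms with $\raPlus$ is inherited from~\cite{GK07}.
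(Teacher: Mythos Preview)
Your proposal is correct and takes essentially the same approach as the paper: verify the three conditions of Definition~\ref{def:repr-system} by invoking Lemma~\ref{lem:encoding-is-bijective} for uniqueness, Lemma~\ref{lem:encoding-timeslice-is-compatible} for snapshot-preservation, and Theorem~\ref{theo:hib-is-homomorphism} together with the homomorphism-commutation result of~\cite{GK07} for snapshot-reducibility. In fact your mapping of conditions to lemmas is more careful than the paper's own proof, which labels the roles of conditions (2) and (3) the other way around; your assignment is the one consistent with Definition~\ref{def:repr-system}.
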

\proofsketch{\revm{
We have to show that conditions (1), (2), and (3)  of Definition~\ref{def:repr-system} hold. Conditions (1) and (2) have been proven in Lemmas~\ref{lem:encoding-is-bijective} and~\ref{lem:encoding-timeslice-is-compatible}, respectively. Condition (3) follows from the fact that $\tSlice{\tPoint}$ is a homomorphism (Theorem~\ref{theo:hib-is-homomorphism}) and that homomorphisms commute with $\raPlus$-queries~\cite[Proposition 3.5]{GK07}.
  }}

\section{Complex Queries}
\label{sec:complex-queries}

Having proven that
$\semTimeNI$-relations form a representation system for $\raPlus$, we
now study extensions for difference and aggregation.

\subsection{Difference}
\label{sec:set-difference}

Extensions of $\semK$-relations for difference have been studied
in~\cite{GP10,AD11a}. \revm{For instance, the difference operator on $\semN$
  relations corresponds to bag difference (SQL's \lstinline!EXCEPT ALL!).}
  Geerts et al.~\cite{GP10} apply an extension of semirings with a monus
  operation that is defined based on the \textit{natural order} of a semiring
  and demonstrated how to define a difference operation for $\semK$-relations
  based on the monus operation for semirings where this operations is
  well-defined.
Following the terminology introduced in this work, we refer
to semirings with a monus operation as m-semirings. We now prove that
if a semiring $\semK$ has a well-defined monus, then so does
$\semTimeNI$. From this follows, that for any such $\semK$, the
difference operation is well-defined for $\semTimeNI$. We proceed to
show that the timeslice operator is an m-semiring homomorphism, which
implies that $\semTimeNI$-relations for any m-semiring $\semK$ form a
representation system for $\ra$ (full relational algebra).  The
definition of a monus operator is based on the so-called natural order
$\naturalOrder_{\semK}$. For two elements $k$ and $k'$ of a semiring
$\semK$,
$k \naturalOrder_{\semK} k' \Leftrightarrow \exists k'': k \addK k'' =
k'$.
If $\naturalOrder_{\semK}$ is a partial order then $\semK$ is called
\textit{naturally ordered}. For instance, $\semN$ is naturally ordered
($\naturalOrder_{\semN}$ corresponds to the order of natural numbers)
while $\mathbb{Z}$ is not (for any $k,k' \in \mathbb{Z}$ we have
$k \naturalOrder_{\mathbb{Z}} k'$). For the monus to be well-defined
on $\semK$, $\semK$ has to be naturally ordered and for any
$k, k' \in \semK$, the set
$\{ k'' \mid k \naturalOrder_{\semK} k' \addK k''\}$ has to have a
smallest member. For any semiring fulfilling these two conditions, the
monus operation $-_{\semK}$ is defined as $k -_{\semK} k' = k''$ where
$k''$ is the smallest element such that
$k \naturalOrder_{\semK} k' + k''$. For instance, the
monus for $\semN$ is the truncating minus: $k -_{\semN} k' = max(0, k - k')$.

\begin{theo}\label{theo:bag-interval-norm-monus}
For any m-semiring $\semK$, semiring $\semTimeNI$ has a well-defined monus, i.e., is an m-semiring. \end{theo}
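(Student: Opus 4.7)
The plan is to reduce the m-semiring properties of $\semTimeNI$ to those of $\semK$ by exploiting that $\tSlice{\tPoint}\colon \semTimeNI \to \semK$ is a semiring homomorphism for every $\tPoint$ (Theorem~\ref{theo:hib-is-homomorphism}) and that two normal-form temporal $\semK$-elements coincide iff all of their timeslices coincide (the uniqueness part of Lemma~\ref{lem:coalesce-properties}). These two facts together let me transport the m-semiring structure of $\semK$ pointwise through the timeslice operator.

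First I would establish a pointwise characterization of the natural order:
\begin{align*}
  k \naturalOrder_{\semTimeNI} k' \iff \forall \tPoint \in \timeDomain:\ \tSlice{\tPoint}(k) \naturalOrder_{\semK} \tSlice{\tPoint}(k').
\end{align*}
The forward direction is immediate from homomorphy of $\tSlice{\tPoint}$. For the converse, I would exhibit a witness $k''$: define the temporal $\semK$-element $\anyTE_{k''}$ that maps each singleton interval $[\tPoint,\tPoint{+}1)$ to $\tSlice{\tPoint}(k') -_{\semK} \tSlice{\tPoint}(k)$ and all other intervals to $\zeroK$ (well-defined since $\timeDomain$ is finite), and set $k'' := \kCoalesce{\semK}(\anyTE_{k''}) \in \semTimeNI$. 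At every time point, homomorphy of $\tSlice{\tPoint}$ and the defining property of $-_{\semK}$ in $\semK$ give $\tSlice{\tPoint}(k \addNI k'') = \tSlice{\tPoint}(k) +_{\semK} \tSlice{\tPoint}(k'') = \tSlice{\tPoint}(k')$, and uniqueness of normal form lifts this to $k \addNI k'' = k'$. Using this characterization, $\naturalOrder_{\semTimeNI}$ inherits reflexivity and transitivity from $\naturalOrder_{\semK}$; antisymmetry follows because $k \naturalOrder_{\semTimeNI} k'$ and $k' \naturalOrder_{\semTimeNI} k$ imply $\tSlice{\tPoint}(k) = \tSlice{\tPoint}(k')$ for every $\tPoint$, hence $k \intervalEq k'$ and thus $k = k'$ by uniqueness of normal form. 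So $\semTimeNI$ is naturally ordered.

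Next I would construct the monus. Define $k -_{\semTimeNI} k'$ as the normal form of the temporal element that maps each $[\tPoint,\tPoint{+}1)$ to $\tSlice{\tPoint}(k) -_{\semK} \tSlice{\tPoint}(k')$ and every other interval to $\zeroK$. Call this element $m$. By construction and homomorphy, $\tSlice{\tPoint}(m) = \tSlice{\tPoint}(k) -_{\semK} \tSlice{\tPoint}(k')$ at every $\tPoint$, so by the defining property of $-_{\semK}$ and the pointwise characterization above, $k \naturalOrder_{\semTimeNI} k' \addNI m$. For minimality, suppose $m' \in \semTimeNI$ also satisfies $k \naturalOrder_{\semTimeNI} k' \addNI m'$. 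By the forward direction of the pointwise characterization, $\tSlice{\tPoint}(k) \naturalOrder_{\semK} \tSlice{\tPoint}(k') +_{\semK} \tSlice{\tPoint}(m')$ for every $\tPoint$, so by minimality of the monus in $\semK$, $\tSlice{\tPoint}(m) \naturalOrder_{\semK} \tSlice{\tPoint}(m')$ at every $\tPoint$, and the backward direction of the pointwise characterization yields $m \naturalOrder_{\semTimeNI} m'$. This establishes that $\{m'' \mid k \naturalOrder_{\semTimeNI} k' \addNI m''\}$ has a least element, so the monus is well-defined and $\semTimeNI$ is an m-semiring.

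The main obstacle is the backward direction of the pointwise characterization, because it requires assembling a single element of $\semTimeNI$ out of snapshot-level witnesses. The point to watch is that the raw function mapping singleton intervals to snapshot-level monus values has finite support (which holds since $\timeDomain$ is finite and normal-form elements have finitely many change points), so $\kCoalesce{\semK}$ is applicable and delivers an honest element of $\nTEDom{\semK}$; once that is secured, homomorphy of $\tSlice{\tPoint}$ together with uniqueness of normal form handle the rest automatically.
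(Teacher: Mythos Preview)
Your proposal is correct and follows essentially the same route as the paper: both establish the pointwise characterization $k \naturalOrder_{\semTimeNI} k' \Leftrightarrow \forall \tPoint:\ \tSlice{\tPoint}(k) \naturalOrder_{\semK} \tSlice{\tPoint}(k')$, derive antisymmetry from it, and construct the monus as the $\semK$-coalesce of the element assigning $\tSlice{\tPoint}(k) -_{\semK} \tSlice{\tPoint}(k')$ to each singleton interval. Your write-up is somewhat more systematic in that you explicitly supply the witness for the backward direction of the characterization (the paper states it but does not spell out the construction) and then reuse that direction both to verify $k \naturalOrder_{\semTimeNI} k' \addNI m$ and to establish minimality, whereas the paper handles the former by a case split on whether $\tSlice{\tPoint}(k') \naturalgeq_{\semK} \tSlice{\tPoint}(k)$ and the latter by contradiction; but this is an organizational difference, not a different argument.
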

\proofsketch{\revm{
    We first prove that $\semK$ being naturally ordered implies that $\semTimeNI$ is naturally ordered where $k \naturalOrder_{\semTimeNI} k'$ is $\tSlice{\tPoint}(k) \naturalOrder_{semK} \tSlice{\tPoint}(k')$ for all time points $\tPoint$. Then we show constructively that for any $k$ and $k'$, the set $\{ k'' \mid k \naturalOrder_{\semTimeNI} k' + k''\}$ has a smallest element wrt. $\naturalOrder_{\semTimeNI}$. These two condition are the only requirements for a semiring to have a well-defined monus.
}}

Let $k \monP k'$ denote an operation that returns a temporal $\semK$-element which assigns to each singleton interval $[\tPoint,\tPoint+1)$ the result of the monus for $\semK$: $\timeSlice{\tPoint}(k) \monK \timeSlice{\tPoint}(k')$ \ifnottechreport{.}\iftechreport{(this is $k_{pmin}$ as defined in the proof of Theorem~\ref{theo:bag-interval-norm-monus}, see Appendix~\ref{sec:proofs}).}
In the proof of Theorem~\ref{theo:bag-interval-norm-monus}, we demonstrate that $k \monNI k' = \kCoalesce{\semK}(k \monP k')$. Obviously, computing $k \monP k'$ using singleton intervals is not effective. In our implementation, we use a more efficient way to compute the monus for $\semTimeNI$ that is based on normalizing the input temporal $\semK$-elements $k$ and $k'$ such that annotations are attached to larger time intervals where $k \monP k'$ is guaranteed to be constant.
Importantly, $\tSlice{\tPoint}$ is a homomorphism for monus-semiring $\semTimeNI$.

\begin{theo}\label{theo:bag-interval-norm-monus-homo}
  Mapping $\tSlice{\tPoint}$ is an m-semiring homomorphism. \end{theo}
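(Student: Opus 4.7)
The plan is to leverage Theorem~\ref{theo:hib-is-homomorphism}, which already establishes that $\tSlice{\tPoint}$ preserves $0$, $1$, addition, and multiplication between $\semTimeNI$ and $\semK$. Hence the only new obligation is to verify that $\tSlice{\tPoint}$ commutes with monus, i.e.\
\[
\tSlice{\tPoint}(k \monNI k') \;=\; \tSlice{\tPoint}(k) \monK \tSlice{\tPoint}(k')
\]
for every $k,k' \in \nTEDom{\semK}$ and every $\tPoint \in \timeDomain$.

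First, I would invoke the explicit characterization given just after Theorem~\ref{theo:bag-interval-norm-monus}, namely $k \monNI k' = \kCoalesce{\semK}(k \monP k')$, where $k \monP k'$ is the temporal $\semK$-element that assigns to each singleton interval $[\tPoint,\tPoint+1)$ the value $\tSlice{\tPoint}(k) \monK \tSlice{\tPoint}(k')$ and $0_{\semK}$ to every non-singleton interval. Combining this with the equivalence preservation clause of Lemma~\ref{lem:coalesce-properties}, which gives $\anyTE \intervalEq \kCoalesce{\semK}(\anyTE)$ and hence $\tSlice{\tPoint}(\kCoalesce{\semK}(\anyTE)) = \tSlice{\tPoint}(\anyTE)$ for every $\anyTE$, we obtain
\[
\tSlice{\tPoint}(k \monNI k') \;=\; \tSlice{\tPoint}\bigl(\kCoalesce{\semK}(k \monP k')\bigr) \;=\; \tSlice{\tPoint}(k \monP k').
\]

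The remaining step is a direct unfolding of the timeslice definition for temporal $\semK$-elements: $\tSlice{\tPoint}(k \monP k')$ equals the sum $\sum_{\tPoint \in \interval}(k \monP k')(\interval)$. Because $k \monP k'$ is supported only on singleton intervals, the only contributing term is $(k \monP k')([\tPoint,\tPoint+1)) = \tSlice{\tPoint}(k) \monK \tSlice{\tPoint}(k')$, which is exactly what we need.

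The main obstacle I anticipate is technical rather than conceptual: one must be careful to cite the correct representative $k \monP k'$ from the proof of Theorem~\ref{theo:bag-interval-norm-monus}, rather than an arbitrary witness for the monus. Once that representative is fixed, the chain $\tSlice{\tPoint} \circ \kCoalesce{\semK} = \tSlice{\tPoint}$ (from equivalence preservation) together with the singleton-supported structure of $k \monP k'$ makes the computation entirely mechanical. Combined with Theorem~\ref{theo:hib-is-homomorphism}, this completes the verification that $\tSlice{\tPoint}$ is an m-semiring homomorphism.
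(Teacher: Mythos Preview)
Your proposal is correct and follows essentially the same route as the paper's proof: both reduce the claim to $\tSlice{\tPoint}(\kCoalesce{\semK}(k \monP k')) = \tSlice{\tPoint}(k \monP k')$ via equivalence preservation (Lemma~\ref{lem:coalesce-properties}) and then read off the value from the singleton-supported structure of $k \monP k'$. Your write-up is in fact slightly more explicit than the paper's, which leaves the singleton-support observation implicit in the final equality.
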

\proofsketch{\revm{
    Proven by substitution of definitions.
  }}

For example, consider $Q_{skillreq}$ from Example~\ref{ex:bag-difference-running-example} which can be expressed in relational algebra as $\projection_{skill}(assign) - \projection_{skill}(worker)$. The $\semTimeNIN$-relation corresponding to the period relation \texttt{assign} shown in this example annotates each tuple with a singleton temporal $\semN$-element mapping the period of this tuple to $1$, e.g., \texttt{(M1, SP)} is annotated with $\{ [03,12) \mapsto 1 \}$. The annotation of result tuple \texttt{(SP)} is computed as\\[-8mm]
\begin{center}
{\small
\begin{align*}
  &(\{ [03,12) \mapsto 1 \} \addNIN \{ [06,14) \mapsto 1 \})\\
    &\hspace{2mm}\monNIN (\{ [03,10) \mapsto 1 \} \addNIN \{ [08,16) \mapsto 1 \} \addNIN \{ [18,20) \mapsto 1 \}) \\
  = &\{ [03,06) \mapsto 1, [06,12) \mapsto 2, [12,14) \mapsto 1\}\\
      &\hspace{2mm}\monNIN \{ [03,08) \mapsto 1, [08,10)      \mapsto 2, [10,16) \mapsto 1, [18,20) \mapsto 1 \} \\
  = &\{[06,08) \mapsto 1, [10,12) \mapsto 1 \}
\end{align*}
}
\end{center}
As expected, the result is the same as the one from Example~\ref{ex:bag-difference-running-example}.

\subsection{Aggregation}
\label{sec:aggregation}

The K-relational framework has previously been extended to support
aggregation~\cite{AD11d}. This required the introduction of attribute
domains which are symbolic expressions that pair values with semiring
elements to represent aggregated values.  Since the construction used
in this work to derive the mathematical structures representing these
symbolic expressions is applicable to all semirings, it is also
applicable to our period semirings. It was shown that
semiring homomorphisms can be lifted to these more complex annotation
structures and attribute domains. Thus, the timeslice operator, being
a semiring homomorphism, commutes with queries including aggregation,
and it follows that using the approach from~\cite{AD11d}, we can
define a representation system for \SKrels{} under
$\ra$ with aggregation, i.e., $\raAgg$.

One drawback of this definition of aggregation over K-relations with
respect to our use case is that there are multiple ways of encoding
the same \SKrel{} in this model. That is, we would
loose uniqueness of our representation system.  Recall that one of our
major goals is to implement snapshot query semantics on-top of DBMS
using a period multiset encoding of
$\semTimeNIof{\semN}$-relations.  The symbolic expressions
representing aggregation function results are a compact representation
which, in case of our interval-temporal semirings, encode how the
aggregation function results change over time.  However, it is not
clear how to effectively encode the symbolic attribute values and
comparisons of symbolic expression as multiset semantics relations,
and how to efficiently implement our snapshot semantics over this
encoding. Nonetheless, for $\semN$, we can apply a simpler definition
of aggregation that returns a $\semTimeNIof{\semK}$ relation and is
also a representation system.  \BGDel{Furthermore, in
  Appendix~\ref{sec:aggr-using-semim}, we show that for an important
  class of semirings, it is possible to evaluate these symbolic
  expressions~\cite{AD11d} to generate a representation of the result
  relation produced by an aggregation as a $\semTimeNIof{\semK}$
  relation which is equivalent under snapshots (homomorphism
  $\timeSlice{\tPoint}$).}

For simplicity, we define aggregation $\aggregation{G}{f(A)}(R)$ grouping on $G$ to compute a single aggregation function $f$ over the values of an attribute $A$. For convenience, aggregation without group-by, i.e., $\aggregation{}{f(A)}(R)$ is expressed using an empty group-by list.

\begin{defi}[Aggregation]\label{def:aggregation-nin}
 Let $R$ be a $\semTimeNIof{\semN}$ relation.  Operator $\aggregation{G}{f(A)}(R)$ groups the input on a (possibly empty) list of attributes $G = (g_1, \ldots, g_n)$ and computes aggregation function $f$ over the values of attribute $A$.
  This operator is defined as follows:
  \begin{align*}
    &\aggregation{G}{f(A)}(R)(t) = \kCoalesce{\semN}(k_{R,t})\\
    &\quad k_{R,t} (I) =
    \begin{cases}
      1 &\mathtext{if} \exists \tPoint: I = [\tPoint,\tPoint{+}1) \wedge t \in \aggregation{G}{f(A)}(\tSlice{\tPoint}(R))\\
      0 & \mathtext{otherwise}\\
    \end{cases}
  \end{align*}
\end{defi}

In the output of the aggregation operator, each tuple $\tuple$ is annotated with a $\semN$-coalesced temporal $\semN$-element which is constructed from singleton intervals. A singleton interval $I= [\tPoint,\tPoint+1)$ is mapped to $1$ if evaluating the aggregation over the multiset relation corresponding to the snapshot at $\tPoint$ returns tuple $\tuple$. We now demonstrate that $\semTimeNIof{\semN}$
using this definition of aggregation
is a representation system for snapshot $\semN$-relations.

\begin{theo}\label{theo:our-agg-eq-their-agg}
  $\semTimeNIN$-relations
form a representation system for snapshot $\semN$-relations and $\raAgg$ queries using aggregation according to Definition~\ref{def:aggregation-nin}.
\end{theo}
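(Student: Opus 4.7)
The plan is to verify the three conditions of Definition~\ref{def:repr-system} for the triple $(\domTimeRel{\semN}, \reprInv{\semN}, \tSlice{})$ with respect to $\raAgg$. Two of the three conditions transfer immediately from previously established results, since they concern the encoding rather than any particular query: uniqueness follows from the bijectivity of $\repr_{\semN}$ (Lemma~\ref{lem:encoding-is-bijective}), and snapshot-preservation follows from Lemma~\ref{lem:encoding-timeslice-is-compatible}. Snapshot-reducibility must be checked case by case on the operators of $\raAgg$: for $\raPlus$ operators it is already contained in Theorem~\ref{theo:repr-system}, and for difference it follows from Theorem~\ref{theo:bag-interval-norm-monus-homo} (since $\tSlice{\tPoint}$ being an m-semiring homomorphism automatically commutes with the monus-based difference operator). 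The only genuinely new case is the aggregation operator from Definition~\ref{def:aggregation-nin}.

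For aggregation, I would fix a $\semTimeNIN$-relation $R$, a grouping list $G$, an aggregation function $f$, an attribute $A$, a time point $\tPoint$, and a candidate output tuple $t$, and establish
\[
\tSlice{\tPoint}(\aggregation{G}{f(A)}(R))(t) = \aggregation{G}{f(A)}(\tSlice{\tPoint}(R))(t)
\]
by unfolding both sides. The left-hand side, by the definition of timeslice for $\semTimeNIN$-relations, equals $\tSlice{\tPoint}(\kCoalesce{\semN}(k_{R,t}))$; by the equivalence-preservation clause of Lemma~\ref{lem:coalesce-properties}, this reduces to $\tSlice{\tPoint}(k_{R,t})$. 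Since Definition~\ref{def:aggregation-nin} makes $k_{R,t}$ nonzero only on singleton intervals, the timeslice sum further collapses to the single term $k_{R,t}([\tPoint,\tPoint+1))$, which by construction equals $1$ precisely when $t \in \aggregation{G}{f(A)}(\tSlice{\tPoint}(R))$ and $0$ otherwise. The right-hand side takes the same value because standard multiset aggregation yields each distinct group-aggregate combination with multiplicity exactly one.

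The main obstacle I anticipate is conceptual rather than computational: the indicator encoding chosen in Definition~\ref{def:aggregation-nin} only faithfully represents a snapshot aggregation result if $\aggregation{G}{f(A)}$ applied to an $\semN$-relation always produces a set-like relation whose output tuples carry multiplicity $0$ or $1$. This is immediate for grouped aggregation, but requires an explicit convention for the ungrouped case, where the empty $G$ must be interpreted as collapsing the whole input into a single group that always yields exactly one aggregate tuple. Once this convention is pinned down and shown to match the snapshot aggregation semantics used on the right-hand side, the equality of both sides falls out directly, and the three conditions of Definition~\ref{def:repr-system} combine to give the theorem.
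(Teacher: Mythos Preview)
Your proposal is correct and follows essentially the same approach as the paper's own proof, which is just the terse statement that ``by construction'' the output is coalesced and $\tSlice{\tPoint}(\aggregation{G}{f(A)}(R)) = \aggregation{G}{f(A)}(\tSlice{\tPoint}(R))$. You spell out the unfolding (reducing the timeslice of $\kCoalesce{\semN}(k_{R,t})$ via Lemma~\ref{lem:coalesce-properties} to the singleton indicator $k_{R,t}([\tPoint,\tPoint{+}1))$) and make explicit that uniqueness, snapshot-preservation, and the non-aggregation operators carry over from the earlier results; the paper leaves all of this implicit.
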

\proofsketch{\revm{
By construction, the result of aggregation is a $\semTimeNIN$ relation (it is coalesced). Also by construction, we have $\tSlice{\tPoint}(\aggregation{G}{f(A)}(R)) = \aggregation{G}{f(A)}(\tSlice{\tPoint}(R))$.
    }}

\section{SQL Period Relation Encoding} \label{sec:bag-semant-repr}

While provably correct, the annotation structure that we have defined is quite complex in nature raising concerns on how to efficiently implement it. We now demonstrate that $\semTimeNIN$-relations (multisets) can be encoded as \SQLrels{} (as shown  on the top of Figure~\ref{fig:overview-approach}).
Recall that \SQLrels{} are multiset relations where the validity time interval (period) of a tuple is stored in an interval-valued attribute (or as two attributes storing interval end points).
Queries over $\semTimeNIN$ are then translated into non-temporal multiset queries over this encoding.
In addition to employing a proven and simple representation of time this enables our approach to run snapshot queries over such relations without requiring any preprocessing and to implement our ideas on top of a
classical DBMS.
For convenience we represent \SQLrels{} using non-temporal $\semN$-relations in the definitions. \SQLrels{} can be obtained based on the well-known correspondence between multiset relations and $\semN$-relations: we duplicate each tuple based on the multiplicity recorded in its annotation.
To encode $\semTimeNIN$-relations as $\semN$-relations we introduce an invertible mapping $\reprN$. We
rewrite queries with $\semTimeNIN$-semantics into non-temporal queries with
$\semN$-semantics over this encoding using a rewriting function
$\reprRewr$. This is illustrated in the commutative diagram below.
  \begin{equation}\label{eq:n-enc-cd}
  \begin{tikzpicture}[baseline=(current  bounding  box.center)]
    \tikzstyle{every node}=[font=\scriptsize]
    \node[draw, minimum width=1.2cm, minimum height=0.5cm,rounded corners=2pt] (R) {$\rel$};     \node[draw, minimum width=1.2cm, minimum height=0.5cm,rounded corners=2pt] (Rp) at ($(R)+(3cm,0)$) {$\rel'$};     \node[draw, minimum width=1.2cm, minimum height=0.5cm,rounded corners=2pt] (Q)  at ($(R)+(0,-1cm)$) {$\query(\rel)$};     \node[draw, minimum width=1.2cm, minimum height=0.5cm,rounded corners=2pt] (Qp) at ($(R)+(3cm,-1cm)$) {$\query'(\rel')$};     \draw[->, line width=1pt] (R) -- node[above]{$\reprN$} (Rp);     \draw[->, line width=1pt] (R) -- node[left]{$\query$} (Q);     \draw[->, line width=1pt] (Rp) -- node[right]{$\query'=\reprRewr(\query)$} (Qp);     \draw[->, line width=1pt] (Qp) -- node[below]{$\reprN^{-1}$} (Q);   \end{tikzpicture}
  \end{equation}

  Our encoding represents a tuple $\tuple$ annotated with a temporal element $\anyTE$ as a set of tuples, one for each interval $\interval$ which is assigned a non-zero value by $\anyTE$. For each such interval, the interval's end points are stored in two attributes $\attrBegin$ and $\attrEnd$, which are appended to the schema of $\tuple$.   Again, we use $\tuple \mapsto k$ to denote that tuple $\tuple$ is annotated with $k$ and $\valDom$ to denote a universal domain of values. We use
  $\schemaOf{R}$ to denote the schema of relation $R$ and
  $\arity(R)$ to denote its arity (the number of attributes in the schema).

\begin{defi}[Encoding as SQL Period Relations]\label{def:N-enc}
 $\reprN$ is a function from $\semTimeNIN$-relations to $\semN$-relations.   Let $\rel$ be a $\semTimeNIN$ relation with schema $\schemaOf{\rel} = \{\attr_1, \ldots, \attr_n\}$. The schema of $\reprN(\rel)$ is $\{\attr_1, \ldots, \attr_n, \attrBegin, \attrEnd\}$.
  Let $\rel'$ be $\reprN(\rel)$ for some $\semTimeNIN$-relation.
  $\reprN$ and its inverse are defined as follows:
  \vspace{-3mm}
  \begin{align*}
    \reprN(R) &= \bigcup_{t \in \valDom^{\arity(R)}} \bigcup_{\interval \in \intervalDom} \{ (t, \iBegin{\interval}, \iEnd{\interval}) \mapsto R(t)(\interval) \}\\
    \reprN^{-1}(R') &= \bigcup_{t \in \valDom^{\arity(R)}} \{ t \mapsto \anyTE_{R',t} \}\\
  \forall \interval \in \intervalDom:  \anyTE_{R',t}(\interval) &= R'(\tuple_{\interval}) \mathtext{for} \tuple_{\interval} = (t, \iBegin{\interval}, \iEnd{\interval})
  \end{align*}
\end{defi}

Before we define the rewriting $\reprRewr$ that reduces a query
$\query$ with $\semTimeNIN$ semantics to a query with $\semN$
semantics, we introduce two operators that we will make use of in the
reduction. The $\semN$-coalesce operator applies $\kCoalesce{\semN}$
to the annotation of each tuple in its input.

\begin{defi}[Coalesce Operator]\label{def:coalesce-op}
Let $\rel$ be $\reprN(\rel')$ for some $\semTimeNIN$-relation $\rel'$.
The coalesce operator $\coalesceOp(\rel)$ is defined as:
  \begin{align*}
    \coalesceOp(R) &= \reprN(R')\\
  \forall \tuple: R'(t) &= \kCoalesce{\semN}(\reprN^{-1}(\rel)(t))
  \end{align*}
\end{defi}

\begin{figure*}[t]
  \centering
    \begin{align*}
\underline{\reprRewr(R)} & = R & \underline{\reprRewr(\selection_\theta(\query))} &= \coalesceOp(\selection_\theta(\reprRewr(\query))) &
\underline{\reprRewr(\projection_A(\query))} &= \coalesceOp(\projection_{A,\attrBegin, \attrEnd}(\reprRewr(\query))) &
    \end{align*}\\[-6mm]
    \begin{align*}
    \underline{\reprRewr(\query_1 \join_\theta \query_2)} &= \coalesceOp(\projection_{\schemaOf{\query_1 \join_{\theta} \query_2}, max(\query_1.\attrBegin, \query_2.\attrBegin),min(\query_1.\attrEnd, \query_2.\attrEnd)} (\reprRewr(\query_1) \join_{\theta \wedge  overlaps(\query_1,\query_2)} \reprRewr(\query_2)))\\
      \underline{\reprRewr(\query_1 - \query_2)} &= \coalesceOp(\normalizeOp_{\schemaOf{\query_1}}(\reprRewr(\query_1), \reprRewr(\query_2)) -  \normalizeOp_{\schemaOf{\query_2}}(\reprRewr(\query_2), \reprRewr(\query_1)))
    \end{align*}\\[-6mm]
\revm{    \begin{align*}
            \underline{\reprRewr(\aggregation{}{f(A)}(\query))} &= \coalesceOp (\aggregation{\attrBegin, \attrEnd}{f(A)}(\normalizeOp_{\emptyset}(\reprRewr(\query) \union \{(null,\tMin, \tMax)\}, \reprRewr(\query))))\\             \underline{\reprRewr(\aggregation{}{count(*)}(\query))} &= \reprRewr(\aggregation{}{count(A)}(\projection_{1 \to A}(\query)))
\end{align*}}\\[-8mm]
\begin{align*}
  \underline{\reprRewr(\aggregation{G}{f(A)}(\query))} &= \coalesceOp (\aggregation{G,\attrBegin, \attrEnd}{f(A)}(\normalizeOp_{G}(\reprRewr(\query), \reprRewr(\query))))
&  \underline{\reprRewr(\query_1 \union \query_2)} &= \coalesceOp(\reprRewr(\query_1) \union \reprRewr(\query_2))
    \end{align*}\\   \caption{Rewriting $\reprRewr$ that reduces queries over $\semTimeNIN$ to queries over a multiset encoding produced by $\reprN$.}
  \label{fig:rewrite}
\end{figure*}

The split operator $\normalizeOp_{G}(R,S)$ splits the intervals in the
temporal elements annotating a tuple $\tuple$ based on the union of
all interval end points from annotations of tuples $\tuple'$ which
agree with $\tuple$ on attributes $G$. Inputs $R$ and $S$ have to be
union compatible. The effect of this operator is that all pairs of
intervals mapped to non-zero elements are either the same or are
disjoint. This operator has been applied
in~\cite{DignosBG12,DignosBGJ16} and in \cite{BowmanT03,T98}. We use
it to implement snapshot-reducible aggregation and difference over
intervals instead of single snapshots as in
Section~\ref{sec:complex-queries}. Recall that in Section~\ref{sec:complex-queries}, the monus (difference) and aggregation were defined in a point-wise manner. The split operator allows us to evaluate these operations over intervals directly by generating tuples with intervals for which the result of these operations is guaranteed to be constant.

\begin{defi}[Split Operator]
  The split operator $\normalizeOp_G(R_1,R_2)$ takes as input two $\semN$-relations $R_1$ and $R_2$ that are encodings of $\semTimeNIN$-relations. For a tuple $\tuple$ in such an encoding let $\interval(\tuple) = [\tuple.\attrBegin, \tuple.\attrEnd)$. The split operator is defined as:
  \begin{align*}
    \normalizeOp_G(R_1,R_2)(\tuple) &= split(\tuple, R_1, EP_G(R_1 \cup R_2,t))\\
    EP_G(R,t) &= \bigcup_{t' \in R: t'.G = t.G \wedge R(t') > 0} \{t'.\attrBegin \} \cup \{t'.\attrEnd\}\\     split(\tuple, R, EP) &= \sum_{\tuple': \interval(\tuple) \subseteq \interval(\tuple') \wedge \interval(\tuple) \in EPI(t, EP)} R(\tuple')\\
    EPI(t, EP) &= \{ [\tPoint_b , \tPoint_e) \mid  \tPoint_b \tLe \tPoint_e \wedge \tPoint_b \in EP \wedge\\ &\mathtab\mathtab(\tPoint_e  \in EP \vee T_e = \tMax) \wedge\\ &\mathtab\mathtab \not\exists \tPoint' \in EP: \tPoint_b \tLe \tPoint' \tLe \tPoint_e \}
  \end{align*}
\end{defi}

Note that the $\reprN$ and $\reprN^{-1}$ mappings are only used in the definitions of the coalesce and split algebra operators for ease of presentation. These  operators can be implemented as SQL queries executed over an $\reprN$-encoded relation.

\begin{defi}[Query Rewriting]\label{def:repr-N-rewriting}
We use $overlaps(\query_1, \query_2)$ as a shortcut for
$\query_1.\attrBegin < \query_2.\attrEnd \land \query_2.\attrBegin < \query_1.\attrEnd$.
  The definition of rewriting $\reprRewr$ is shown in Figure~\ref{fig:rewrite}.
Here $\{t\}$ denotes a constant relation with a single tuple $t$ annotated with $1$.
\end{defi}

\revm{
\begin{exam}\label{ex:rewriting}
  Reconsider query $\qAggEx$ from Example~\ref{ex:running-example-snapshot} and its results for the logical model and period relations (Figure~\ref{fig:overview-approach}).   In relational algebra, the input query is written as $\qAggEx = \aggregation{}{count(*)} (\underbrace{\selection_{skill=SP}(works)}_{\query_1})$. Applying $\reprRewr$ we get:
  \begin{align*}
\reprRewr(\qAggEx) &=
                    \coalesceOp (\aggregation{\attrBegin, \attrEnd}{count(A)}(\normalizeOp_{\emptyset}(\\
    &\hspace{-1mm}\projection_{1 \to A, \attrBegin, \attrEnd}(\reprRewr(\query_1)) \union \{(null,0, 24)\},\\ &\hspace{-1mm}\reprRewr(\query_1))))\\
\reprRewr(\query_1) &= \coalesceOp(\selection_{skill=SP}(works))
  \end{align*}
  Subquery $\reprRewr(\query_1)$  filters out the second tuple from the input (see Figure~\ref{fig:overview-approach}). The split operator is then applied to the union of the result of $\reprRewr(\query_1)$ and a tuple with the neutral element $null$ for the aggregation function and period $[\tMin, \tMax)$, where $\tMin = 0$ and $\tMax = 24$ for this example.
After the split $\normalizeOp$, the aggregation is evaluated grouping the input on $\attrBegin, \attrEnd$. The \lstinline!count! aggregation function then either counts a sequence of $1$s and a single  $null$ value producing the number of facts that overlap over the corresponding period $[\attrBegin, \attrEnd)$, or counts a single $null$ value over a ``gap'' producing $0$.
For instance, for $[08,10)$ there are two facts whose intervals cover this period (Ann and Sam) and, thus, $(2, [08,10))$ is returned by $\reprRewr(\qAggEx)$. While for for $[20,24)$ there are no facts and thus we get $(0, [20,24))$.
\end{exam}
}

\begin{theo}\label{theo:reduction-is-correct}
The commutative diagram in Equation~~\eqref{eq:n-enc-cd} holds.
\end{theo}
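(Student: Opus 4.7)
The plan is to prove the commutative diagram by structural induction on the query $\query \in \raAgg$, showing equivalently that $\reprRewr(\query)(\reprN(\rel)) = \reprN(\query(\rel))$ for every $\semTimeNIN$-relation $\rel$. The base case $\query = \rel$ is immediate since $\reprRewr(\rel) = \rel$ and $\reprN$ is the identity on encoded inputs. For the inductive step, I would exploit that by Theorem~\ref{theo:hib-is-homomorphism} the timeslice operator is a semiring homomorphism $\semTimeNIN \to \semN$, so snapshot-reducibility plus the uniqueness of the coalesced encoding (Lemma~\ref{lem:coalesce-properties}) implies it suffices to show that for every $\tPoint$, $\tSlice{\tPoint}(\reprN^{-1}(\reprRewr(\query)(\reprN(\rel)))) = \query(\tSlice{\tPoint}(\rel))$; the outer $\coalesceOp$ on each rewriting rule ensures the output is in normal form, so equality of snapshots lifts to equality of encodings.

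For the positive relational algebra (selection, projection, join, union), I would proceed operator-by-operator. Selection and projection are computed tuple-wise and do not alter interval boundaries, except that projection may merge tuples whose annotations must then be added; the outer $\coalesceOp$ restores the normal form broken by this merge. For join, the key observation is that $k \multNI k'$ contributes non-zero mass on $\interval \cap \interval'$ only, which is precisely what the $\mathit{overlaps}$ predicate together with the $\max/\min$ construction of the output interval expresses at the $\semN$-level. Union corresponds directly to $\addNI$ (pointwise addition followed by $\semN$-coalescing).

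The difference and aggregation cases are the most delicate. For difference, after applying $\normalizeOp_{\schemaOf{\query_i}}$ to both inputs, any two surviving tuples sharing the non-temporal key have intervals that are either identical or disjoint across the two inputs. Interval-aligned $\semN$-subtraction then coincides with $k \monP k'$ evaluated on singleton snapshots, and a final $\coalesceOp$ recovers $\monNI$ via the characterization $k \monNI k' = \kCoalesce{\semN}(k \monP k')$ established in the proof of Theorem~\ref{theo:bag-interval-norm-monus}. For aggregation, $\normalizeOp_G$ produces maximal intervals on which the multiset contents of each group are constant; hence standard $\semN$-aggregation grouped on $(G,\attrBegin,\attrEnd)$ computes $f$ correctly at every snapshot inside each interval, matching Definition~\ref{def:aggregation-nin}. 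For non-grouped aggregation, the sentinel tuple $(\mathit{null},\tMin,\tMax)$ together with the SQL convention that $\mathit{count}$ ignores null is exactly what fills gaps with the neutral element so that a result tuple is produced at every snapshot.

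The main obstacle I expect is the aggregation case: establishing that $\normalizeOp_{\emptyset}$ combined with the sentinel generates exactly the partition of $\timeDomain$ into intervals on which the snapshot aggregation result is constant, without spuriously altering non-gap intervals. Concretely, I need to argue that the end-point set $EP_G$ used by $\normalizeOp$ contains every change point of the aggregate's output temporal $\semN$-element (so no snapshot is missed), and that intervals where the group is empty are faithfully represented by the sentinel so that $f$ evaluates to $\aNeutralAggr{f}$ there, matching Definition~\ref{def:aggregation-nin}. Once this is verified, the outer $\coalesceOp$ merges adjacent equal-annotation intervals, restoring the unique $\semTimeNIN$ encoding and closing the induction.
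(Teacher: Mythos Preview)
Your proposal is correct and follows essentially the same structure as the paper's proof: a structural induction on the query with an operator-by-operator verification that the rewriting preserves the encoding. Your explicit reduction to snapshot equality via the timeslice homomorphism (Theorem~\ref{theo:hib-is-homomorphism}) and coalesce-uniqueness (Lemma~\ref{lem:coalesce-properties}) is a slightly more principled way of organizing the argument, but the per-operator reasoning---$\max/\min$ with the $\mathit{overlaps}$ predicate realizing $\multNI$, the split operator $\normalizeOp_G$ producing intervals on which the group multiset is constant, and the sentinel tuple filling gaps with the neutral element---matches the paper's proof; the paper also additionally verifies $\reprN^{-1}(\reprN(R)) = R$ as a separate piece of the diagram, which you implicitly assume.
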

\proofsketch{\revm{
Proven by induction over query structure.
    }}

\section{Implementation}
\label{sec:implementation}

We have implemented the encoding and rewriting introduced in the
previous section in a middleware which supports snapshot multiset
semantics through an extension of SQL. To instruct the system to
interpret a subquery using snapshot semantics, the user encloses the
subquery in a \texttt{SEQ VT (...)} block. We assume that the inputs
to a snapshot query are encoded as period multiset relations, i.e.,
each relation has two temporal attributes that store the begin and end
timestamp of their validity interval. For each relation access within
a \texttt{SEQ VT} block, the user has to specify which attributes
store the period of a tuple. \BGDel{We choose to implement this
  middleware as an extension of the GProM system to benefit from the
  system's support for multiple database backends.}{we should not
  mention GProM anyways because of double blind and here it is not
  important that this is build based on GProM}

Our coalescing and split operators can be expressed in SQL. Thus, a
straightforward way of incorporating these operators into the
compilation process is to devise additional rewrites that produce the
relational algebra code for these operators where necessary. However,
preliminary experiments demonstrated that a naive implementation of
these operators is prohibitively expensive. \BGDel{ increasing the query
execution time by several orders of magnitude compared to non-temporal
queries in some cases.}{space}

We address this problem in two ways. First, we observe that it is sufficient to apply coalesce as a last step in a query instead of applying it as part of every operator rewrite. Applying this optimization, the rewritten version of a query will only contain one coalesce operator. Recall from Lemma~\ref{lem:coalesce-push} that coalescing can be redundantly pushed into the addition and multiplication operations of period semirings, e.g., $\kCoalesce{\semK}(k \addP k') = \kCoalesce{\semK}(\kCoalesce{\semK}(k) \addP k')$.
\ifnottechreport{We have proven that this Lemma also holds for monus~\cite{DG18}.}
\iftechreport{We prove that this Lemma also holds for monus in Appendix~\ref{sec:optimizations}.}
Interpreting this equivalence from right to left and applying it repeatedly to a semiring expression $e$, $e$ can be rewritten into an equivalent expression of the form $\kCoalesce{\semK}(e')$, where $e'$ is an expression that only uses operations $\addP$, $\multP$, $\monP$. Since relational algebra over K-relations is defined by applying multiplication, addition, and monus to input annotations, this implies that it is sufficient to apply coalescing only as a final operation in a query. \iftechreport{For an example and additional discussion see Appendix~\ref{sec:optimizations}}\ifnottechreport{For further discussion of this optimization and an example see~\cite[Appendix C]{DG18}.}

We developed an optimized implementation of multiset coalescing using SQL
analytical window functions\reva{, similar to set-based coalescing
in~\cite{ZhouWZ06}}, that counts \reva{for value-equivalent attributes} the
number of open intervals per time point, determines change points based on
differences between these counts, and then only output maximal intervals using
a filter step. \reva{This implementation uses sorting in its window declarations and has time complexity $\mathcal{O}(n \log{n})$ for $n$ tuples.
A native implementation would require only one sorting step. The number of sorting steps required by our SQL implementation depends
on whether the DBMS is capable of sharing window declaration (we observe 2 and 7 sorting steps for the systems used in our experimental evaluation).}

\reva{For aggregation we integrate the split operator into the
aggregation. It turned out to be most effective to pre-aggregate the input
before splitting and then compute the final aggregation results during the
split step by further aggregating the results of the pre-aggregation step.} We apply a similar optimization for difference.

\section{Experiments}
\label{sec:experiments}

In our experimental evaluation we focus on two aspects. First, we
evaluate the cost of our SQL implementation of $\semN$-coalescing
(multiset coalescing).  Then, we evaluate the performance of snapshot
queries with our approach over three DBMSs and compare it against
native implementations of snapshot semantics that are available in
two of these systems (using our implementation of coalescing to
produce a coalesced result).

\subsection{Workloads and Experimental Setup}
\label{sec:data-sets}

\parttitle{Datasets} We use \reva{\iftechreport{three}\ifnottechreport{two} datasets in our experiments}. The \textit{MySQL Employees dataset}
(\url{https://github.com/datacharmer/test_db})
 which contains
$\approx$4~million records and consists of the following six
period tables: table \texttt{employee} stores basic information about
employees; table \texttt{departments} stores department information;
table \texttt{titles} stores the job titles for employees; table
\texttt{salaries} stores employee salaries; table
\texttt{dept\_manager} stores which employee manages which
department; and table \texttt{dept\_emp} stores which employee is
working for which department.
\reva{\textit{TPC-BiH} is the bi-temporal version of the TPC-H benchmark dataset as described in~\cite{DBLP:conf/tpctc/KaufmannFMTK13}. Since our approach supports only one time dimension we only generated the valid time dimension for this dataset. In this configuration a scale factor 1 (SF1) database corresponds to roughly 1GB of data.}
\ifnottechreport{
\reva{In our technical report we also use a real-world \textit{Tourism} dataset (835k records).}
}
\iftechreport{
The \textit{Tourism} dataset (835k records) consists of a single table storing hotel reservations in South Tyrol. Each record corresponds to one reservation. The validity end points of the time period associated with a record is the arrival and departure time.
}

\parttitle{Workloads}
\ifnottechreport{To evaluate the efficiency of snapshot queries, we
  created a workload consisting of the following 10 queries expressed over the employee dataset. \texttt{join-1}:
  salary and department for each employee using a join
  between the department and salary tables. \texttt{join-2}:
  salary and title for each employee using a join between the
  salary and title tables.  \texttt{join-3}:
  department of employees that manage a department and earn more than
  \$70,000 using a join between the manager and salary tables and a
  selection on attribute salary.\texttt{join-4}: full information of each manager using joins
  between the tables managers, salary, and employees.
  \texttt{agg-1}: average salary of employees per
  department using the result of query \texttt{join-1} with a
  subsequent aggregation per department. \texttt{agg-2}:
  average salary of managers using a join between the
  manager and salary tables with a subsequent aggregation without
  grouping. \texttt{agg-3}: number of departments
  with more than 21 employees. This query has no join but two
  aggregations, one to compute the number of employees per department
  and a second one to count the number of relevant departments.
  \texttt{agg-join}: names of employees with the highest
  salary in their department.  It consists of a 4-way join, where one
  of the inputs is the result of a subquery with aggregation.
  \texttt{diff-1}: employees that are not managers using a
  difference operation between two tables.  \texttt{diff-2}:
  salaries of employees that are not managers by computing
  the difference between a table and a subquery with a join.
  \reva{
    For the TPC-BiH dataset we took 9 of the 22 standard queries~\cite{tpc-h} from this benchmark that do not contain nested subqueries or \lstinline!LIMIT! (which are not supported by our or any other approach for snapshot queries we are aware of) and evaluated these queries under snapshot semantics. Note that some of these queries use the \lstinline!ORDER BY! clause that we do not support for snapshot queries. However, we can evaluate such a query without \lstinline!ORDER BY! under snapshot semantics and then sort the result without affecting what rows are returned.
  }
  The number of rows returned by the \reva{Employee and TPC-H}  queries are
  shown in Table~\ref{tab:query-result}. The SQL code and more detailed descriptions of \reva{these} queries are provided in~\cite[Appendix B]{DG18}.
}
\iftechreport{
  We have created a workload consisting of 10 queries to evaluate the
  efficiency of snapshot queries. Queries \texttt{join-1} to \texttt{join-4}
  are join queries, \texttt{agg-1} to \texttt{agg-3} are aggregation-heavy
  queries, \texttt{agg-join} is a join with an aggregation value, and
  \texttt{diff-1} and \texttt{diff-2} use difference. Furthermore, we use one
  query template varying the selectivity to evaluate the performance of
  coalescing. \texttt{C-Sn} denotes the variant of this query that returns
  approximately $nK$ rows, e.g., \texttt{C-S1} returns 1,000 rows.
For the Tourism dataset we use the following queries. \texttt{join}: tourist from same country to same destination using a self join of tourismdata table.  \texttt{agg-0}: number of tourists per destination together with the average number of tourists for all other destinations. This query first computes the number of tourists per destination and do a self unequal join on it. \texttt{agg-1}: number of enquiries and the number of tourists per destination with more than 1000 enquiries using two aggregations on tourismdata table. \texttt{agg-2}: maximum number of tourists per destination using an aggregation on tourismdata table. \texttt{tou-agg-x}: the destination with the most number of tourists. This query has no join but two aggregations, one to compute the number of tourists per destination and a second one to compute the maximum one.
  More
  detailed descriptions of these queries are provided in
  Appendix~\ref{sec:workload-queries}.
    For the TPC-BiH dataset we took 9 of the 22 standard queries~\cite{tpc-h} from this benchmark that do not contain nested subqueries or \lstinline!LIMIT! (which are not supported by our or any other approach for snapshot queries we are aware of) and evaluated these queries under snapshot semantics. Note that some of these queries use the \lstinline!ORDER BY! clause that we do not support for snapshot queries. However, we can evaluate such a query without \lstinline!ORDER BY! under snapshot semantics and then sort the result without affecting what rows are returned.
  The number of rows returned by these
  queries over the  dataset are shown in Table~\ref{tab:query-result}.
}

\begin{table}[thb]
  \caption{Number of query result rows}
  \label{tab:query-result}
\centering
\resizebox{1\columnwidth}{!}{
\setlength{\tabcolsep}{3pt}
\begin{minipage}{1.4\linewidth}
\centering
\begin{tabular}[t]{|r|r|r|r||r|r|r||r||r|r|}
 \hline
 \thead{join-1} &   \thead{join-2} &    \thead{join-3}  & \thead{join-4} & \thead{agg-1}  & \thead{agg-2}  &
\thead{agg-3}  &  \thead{agg-join} &   \thead{diff-1} &  \thead{diff-2}
\\ \hline
  2.8M & 28.3M  & 10 &  177 &  57.4k & 177 & 210 & 260 & 300k  &  2.8M
  \\ \hline
\end{tabular}\\[2mm]
\reva{
\begin{tabular}[t]{|c|c|c|c|c|c|c|c|c|c|c|c|}
  \hline
 \thead{TPC-H}& \thead{Q1} & \thead{Q3} & \thead{Q5} & \thead{Q6} & \thead{Q7} & \thead{Q8} & \thead{Q9} & \thead{Q10} & \thead{Q12} & \thead{Q14} & \thead{Q19} \\ \hline
 \thead{1GB} & 4.3k & 10 & 386 & 529 & 1.6k & 742 & 69.7k & 20 & 785 & 479 & 220 \\ \hline
 \thead{10GB} & 4.3k & 10 & 579 & 532 & 1.7k & 867 & 74.8k & 20 & 786 & 487 & 1.3k \\ \hline
\end{tabular}\\[2mm]
}
\iftechreport{
\reva{
\begin{tabular}[t]{|r||r|r|r||r|}
 \hline
 \thead{tou-join-agg} &   \thead{tou-agg-1} &   \thead{tou-agg-2} & \thead{tou-agg-3}  & \thead{tou-agg-join}
\\ \hline
 64.3k & 954  & 14.5k &  3.2k &  822
  \\ \hline
\end{tabular}
}
}
\end{minipage}
}
\end{table}

\parttitle{Systems}
We ran experiments on three different database management systems: a
version of Postgres (\textit{PG}) with native support for temporal
operators as described in~\cite{DignosBG12,DignosBGJ16}; a commercial
DBMS, \textit{DBX}, with native support for snapshot semantics (only
available as a virtual machine); and a commercial DBMS, \textit{DBY},
without native support for snapshot semantics.
We used our approach to translate snapshot queries into standard SQL
queries and ran the translated queries on all three systems (denoted
as \textit{PG-Seq}, \textit{DBX-Seq}, and \textit{DBY-Seq}).
For PG and DBX, we ran the queries also with the native
solution for snapshot semantics paired with our implementation of
coalescing to produce a coalesced result (referred to as
\textit{PG-Nat} and \textit{DBX-Nat}).
As explained in Section~\ref{sec:related-work}, no system correctly
implements snapshot multiset semantics for difference and
aggregation, and many systems do not support snapshot semantics for
these operators at all. \textit{DBX-Nat} and \textit{PG-Nat}
both support snapshot aggregation, however, their implementations are
not snapshot-reducible. \textit{DBX-Nat} does not support snapshot difference,
whereas \textit{PG-Nat} implements temporal difference with set
semantics.  Despite such differences, the experimental comparison
allows us to understand the performance impact of our provably correct
approach.

All experiments were executed on a machine with 2 AMD Opteron 4238 CPUs, 128GB
RAM, and a hardware RAID with 4 $\times$ 1TB 72.K HDs in RAID 5. \revm{For Postgres we set the buffer pool size to $8$GB. For the other systems we use values recommended by the automated configuration tools of these systems. We execute
queries with warm cache. For short-running queries we show the median runtime
across 100 consecutive runs. For long running queries we computed the
median over 10 runs. In general we observed low variation in runtimes (a few
percent).}

\subsection{Multiset Coalescing}
\label{sssec:multiset-coalescing}

To evaluate the performance of coalescing, we use a selection query
that returns employees that earn more than a specific salary and
materialize the result as a table.  The selectivity varies from 1K to
3M rows.  We then evaluate the query \texttt{\textbf{SELECT} *
  \textbf{FROM} ...}  over the materialized tables under snapshot
semantics in order to measure the cost of coalescing in isolation.
Figure~\ref{fig:coalesce-selectivity} shows the results of this
experiment.  The runtime of coalescing is linear in the input size for
all three systems.  Even though the theoretical worst-case complexity
of the sorting step, which is applied by all systems to evaluate the
analytics functions that we exploit in our SQL-based implementation of
multiset coalescing, is $\mathcal{O}(n \cdot log (n))$, an inspection
of the execution plans revealed that the sorting step only amounts to
5\%-10\% of the execution time (for all selectivities) and, hence, is
not a dominating factor.

\begin{figure}[t]
  \centering
  \includegraphics[width=0.8\linewidth,trim=10pt 30pt 0 130pt, clip]{./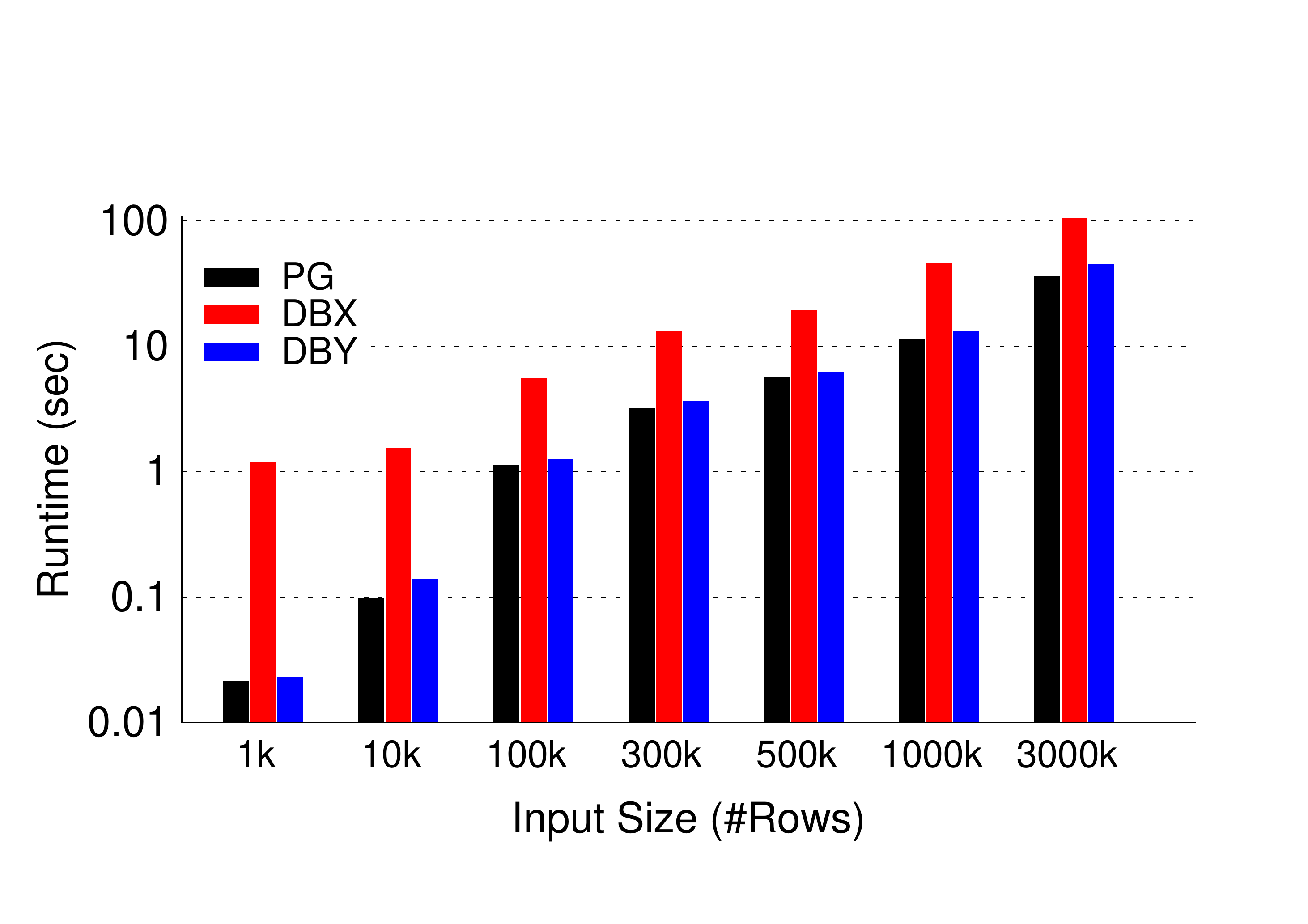}
  \caption{Multiset coalescing for varying input size.}$\,$\\[1mm]
  \label{fig:coalesce-selectivity}
\end{figure}

\subsection{Snapshot Semantics  - Employee}
\label{sssec:sequ-semant-quer}

Table~\ref{tab:runtime-seq-queries} provides an overview of the
performance results for our snapshot query workloads. \revm{For every query we indicate in the rightmost column whether native approaches are subject to the aggregation gap (AG) or bag difference (BD) bugs.}

\begin{table}[t]
  \caption{Runtimes (sec) of snapshot queries: \ONA\, = not supported, \OOTS\, = system ran out of temporary space (2GB), \TimeOut  = timed out (2 hours).}
  \label{tab:runtime-seq-queries}
  \centering
  \scriptsize
  \setlength{\tabcolsep}{3pt}
  \begin{tabular}{|l|rr|rr|r|c|}\hline
      \multicolumn{7}{|c|}{\textbf{Employee dataset}}\\
    \hline
   \thead{Query} & \thead{PG-Seq} & \thead{PG-Nat} & \thead{DBX-Seq} & \thead{DBX-Nat} & \thead{DBY-Seq} & \thead{Bug}\\ \hline
   \texttt{join-1} & 91.97 & 118.01 & 118.95 & 116.03 & 64.00 & \\
    \texttt{join-2} & 1543.81 & 888.13 & 1569.45 & 1200.36 & 763.70 &\\
   \texttt{join-3} & 0.01 & 4.91 & 0.55 & 0.43 & 0.01 & \\
   \texttt{join-4} & 0.52 & 12.85 & 0.83 & 0.60 & 0.22 & \\
   \hline
   \texttt{agg-1} & 7.02 & 5980.85 & 56.47 & \OOTS & 5.24 & \\
   \texttt{agg-2} & 0.06 & 10.31 & 0.82 & 0.82 & 0.01 & AG\\
   \texttt{agg-3} & 1.42 & 0.02 & 0.78 & 0.55 & 0.01 & AG\\
   \hline
   \texttt{agg-join} & 6643.61 & 19195.03 & \OOTS &  \OOTS & 7555.97 & \\
   \hline
   \texttt{diff-1} & 14.18 & 6.88 & 30.15 & \ONA & 10.29 &BD\\
   \texttt{diff-2} & 63.58 & 79.63 & 129.87 & \ONA & 61.90 &BD\\
    \hline \end{tabular}\\[3mm] \iftechreport{
  \reva{
  \begin{tabular}{|l|rr|r|c|}\hline
    \multicolumn{5}{|c|}{\textbf{Tourism}} \\ \hline
    \thead{Query} & \thead{PG-Seq} &  \thead{PG-Nat} &   \thead{DBY-Seq}  &  \thead{Bug}\\
    \hline
  \texttt{tou-join-agg} & 300.28 & 694.88 & 171.09 &  \\
  \hline
        \texttt{tou-agg-1} & 2.41 & 94.58 & 1.61 &  \\
        \texttt{tou-agg-2} & 123.79 & 92.32 & 87.31 &  \\
      \texttt{tou-agg-3} & 6.68 & 98.07 & 7.66 & AG \\
    \hline
        \texttt{tou-agg-join} & 1.06 & 263.61 & 0.94 &  \\
    \hline
  \end{tabular}\\[3mm]
}
}
   \reva{
     \begin{tabular}{|l|rrr|rrr|c|}\hline
     \multicolumn{8}{|c|}{\textbf{TPC-BiH}}\\ \hline
     &\multicolumn{3}{c|}{\textbf{SF1 ($\sim$1 GB)}} & \multicolumn{3}{c|}{\textbf{SF10 ($\sim$10 GB)}} & \\ \cline{2-7}
     \thead{Query} & \thead{PG-Seq} & \thead{PG-Nat} & \thead{DBY-Seq} & \thead{PG-Seq} & \thead{PG-Nat} & \thead{DBY-Seq} & \thead{Bug}\\
     \hline
\texttt{Q1} & 12.02 & 3686.47 &11.80 & 63.85 & \TimeOut & 82.61&\\
\texttt{Q5} & 0.58 & 142.91 &1.14 &    5.85 & 1794.10 & 14.89&\\
\texttt{Q6} & 0.79 & 12.65 &1.14  &     7.70 & 126.91 & 7.28& AG\\
\texttt{Q7} & 1.14 & 285.91 &5.33&     28.70 & 1642.20 & 21.75&\\
\texttt{Q8} & 1.77 & 108.63 &2.20&     21.78 & 1484.61 & 17.33&\\
\texttt{Q9} & 10.12 & \TimeOut &8.09&  129.01 & \TimeOut&  71.37&\\
\texttt{Q12} & 1.10 & 23.85 &1.81&      10.49 & 264.57 & 13.30&\\
\texttt{Q14} & 1.72 & 403.92 &2.75&     26.55 & 3436.30&  23.79&AG\\
\texttt{Q19} & 0.92 & 203.83 &2.55&     9.60 & 2873.13 & 22.35&AG\\
     \hline
   \end{tabular}\\[3mm]
   }
\end{table}

\parttitle{Join Queries}
The performance of our approach for join queries is comparable with
the native implementation in \textit{PG-Nat}. For join queries with larger
intermediate results (\texttt{join-2}), the native implementation
outperforms our approach by $\approx$73\%. Running the queries produced
by our approach in \textit{DBY} is slightly faster than both.
\textit{DBX-Nat} uses merge joins for temporal joins, while both
\textit{PG} and \textit{DBY} use a hash-join on the
non-temporal part of the join condition. The result is that
\textit{DBX-Nat} significantly outperforms the other methods for
temporal join operations. However, the larger cost for the SQL-based
coalescing implementation in this system often outweighs this
effect.  This demonstrates the potential for improving our approach by making
use of native implementations of temporal operators in our rewrites
for operators that are compatible with our semantics (note that joins
are compatible).

\parttitle{Aggregation Queries}
Our approach outperforms the native implementations of snapshot semantics on all systems by several orders of magnitude for aggregation queries as long as the aggregation input exceeds a certain size (\texttt{agg-1} and \texttt{agg-2}).
Our approach as well as the native approaches split the aggregation input which requires sorting and then apply a standard aggregation operator to compute the temporal aggregation result.  The main reason for the large performance difference is that the SQL code we generate for a snapshot aggregation includes several levels of pre-aggregation that are intertwined with the split operator. Thus, for our approach the sorting step for split is applied to a typically much smaller pre-aggregated dataset.
This turned out to be quite effective. The only exception is if the aggregation input is very small (\texttt{agg-3}) in which case an efficient implementation of split (as in \textit{PG-Nat}) outweighs the benefits of pre-aggregation.
\revm{Query \texttt{agg-1} did not finish on \textit{DBX-Nat} as it exceeded
the $2$GB temporary space restriction (memory allocated for intermediate results) of the freely available version of this DBMS.}

\parttitle{Mixed Aggregation and Join}
Query \texttt{agg-join} applies an aggregation over the result of several joins. Our
approach is more effective, in particular for the aggregation part of this
query, compared to \textit{PG-Nat}.
This query did not finish on \textit{DBX}  due to the $2$GB temporary space restriction per query imposed by the DBMS.

\parttitle{Difference Queries}
For difference queries we could only compare our approach against \textit{PG-Nat}, since \textit{DBX-Nat} does not support difference in snapshot queries. Note that, \textit{PG-Nat} applies set difference while our approach supports multiset difference. While our approach is less effective for \texttt{diff-1} which contains a single difference operator, we outperform \textit{PG-Nat} on \texttt{diff-2}.

\reva{
\subsection{Snapshot Semantics  - TPC-BiH}
\label{sec:sequ-semant-quer}

The runtimes for TPC-H queries interpreted under snapshot semantics (9 queries are currently supported by the approaches) over the 1GB and 10GB valid time versions of TPC-BiH is also shown in Table~\ref{tab:runtime-seq-queries}. For this experiment we skip \textit{DBX} since the limitation to 2GB of temporary space of the free version we were using made it impossible to run most of these queries. Overall we observe that our approach scales roughly linearly from 1GB to 10GB for these queries. We significantly outperform \textit{PG-Nat} because all of these queries use aggregation. Additionally, some of these queries use up to 7 joins. For these queries the fact that \textit{PG-Nat} aligns both inputs with respect to each other~\cite{DignosBG12} introduces unnecessary overhead and limits join reordering. The combined effect of these two drawbacks is quite severe. Our approach is 1 to 3 orders of magnitude faster than
\textit{PG-Nat}. For some queries this is a lower bound on the overhead of \textit{PG-Nat} since the system timed out for these queries (we stopped queries that did not finish within 2 hours).
}

\iftechreport{
\reva{
\subsection{Snapshot Semantics - Tourism}
\label{sec:sequ-semant-quer-1}

The results for the queries over the Tourism database are shown in the middle of Table~\ref{tab:runtime-seq-queries}. We only report our approach for Postgres and DBY, and the native implementation in Postgres. With the exception of query \textit{tou-agg-2} our approach outperforms \textit{PG-Nat} quite significantly since all these queries contain aggregation. Since query \textit{tou-agg-2} does use \lstinline!max! we do not apply our sweeping technique (see Appendix~\ref{sec:combining-split-with}). \textit{PG-Nat}'s native implementation of the split operator results in 30\% better performance for this query.
Query \textit{tou-join-agg} applies an inequality self-join over an aggregation result ($\approx$ 100k rows under snapshot semantics) and then applies a final aggregation to the join. The large size of this join result is the main reason
}
}

\subsection{Summary}
\label{sec:summary}

Our experiments demonstrate that an SQL-based implementation of multiset coalescing is feasible -- exhibiting runtimes linear in the size of the input, albeit with a relatively large constant factor. We expect that it would be possible to significantly reduce this factor by introducing a native implementation of this operator. Using pre-aggregation during splitting, our approach significantly outperforms native implementations for aggregation queries. DBX uses merge joins for temporal joins (interval overlap joins) which is significantly more efficient than hash joins which are employed by Postgres and DBY. This shows the potential of integrating such specialized operators with our approach in the future. For example, we could compile snapshot queries into SQL queries that selectively employ the temporal extensions of a system like DBX.

\section{Conclusions and Future Work}
\label{sec:concl-future-work}

We present the first provably correct interval-based representation system for snapshot semantics over multiset relations and its implementation in a database middleware.
We achieve this goal by addressing a more general problem: snapshot-reducibility for temporal $\semK$-relations.
Our solution is a uniform framework for evaluation of queries under snapshot semantics over an interval-based encoding of temporal $\semK$-relations for any semiring $\semK$.
That is, in addition to sets and multisets, the framework supports snapshot temporal extensions of probabilistic databases, databases annotated with provenance, and many more.
In future work, we will study how to extend our approach for updates over annotated relations, will study its applicability for combining probabilistic and temporal query processing, investigate implementations of split and $\semK$-coalescing inside a database kernel, \revm{and study extensions for bi-temporal data}.

\ifnottechreport{\clearpage}

\appendix
\section{Proofs}
\label{sec:proofs}

\begin{proof}[Proof of Lemma \ref{lem:coalesce-properties}]
\myproofpar{Equivalence preservation}
Proven by contradiction. Assume that $\exists \tPoint: \tSlice{\tPoint}(\anyTE) \neq \tSlice{\tPoint}(\kCoalesce{\semK}(\anyTE))$. We have to distinguish two cases. If $\tPoint \in \CPs{\anyTE}$, then by definition of $\semK$-coalesce we have the contradiction: $\tSlice{\tPoint}(\kCoalesce{\semK}(\anyTE)) = \tSlice{\tPoint}(\anyTE)$. If $\tPoint \not\in \CPs{\anyTE}$, then let $\tPoint'$ be the largest change point that is smaller than $\tPoint$ (this point has to exist). From the definition of change points follows that $\tSlice{\tPoint} = \tSlice{\tPoint'}$. By construction there has to exists exactly one interval overlapping $\tPoint$ that is assigned a non-zero value in $\kCoalesce{\semK}(\anyTE)$ and this interval starts in $\tPoint'$. Hence, we have the contradiction.

\myproofpar{Uniqueness}
Note that change points are defined using $\tSlice{\tPoint}$ only and $(\forall \tPoint \in \timeDomain: \tSlice{\tPoint}(\anyTE_1) = \tSlice{\tPoint}(\anyTE_2)) \Leftrightarrow \anyTE_1 \intervalEq \anyTE_2$. Since the result of coalescing is uniquely determined by the change points of a temporal element it follows that $ \anyTE_1 \intervalEq \anyTE_2 \Leftrightarrow \kCoalesce{\semK}(\anyTE_1) = \kCoalesce{\semK}(\anyTE_2)$

\myproofpar{Idempotence}
Idempotence follows from the other two properties. If we substitute $\anyTE$ and $\kCoalesce{\semK}(\anyTE)$ for $\anyTE_1$ and $\anyTE_2$ in the uniqueness condition, we get idempotence: $\kCoalesce{\semK}(\kCoalesce{\semK}(\anyTE)) = \kCoalesce{\semK}(\anyTE)$.
\end{proof}

\begin{proof}[Proof of Lemma \ref{lem:coalesce-push}]
\myproofpar{Push Through Addition}
We prove this part by proving that for any $k''$ if $k \intervalEq k'$ then $(k \addP k'') \intervalEq (k' \addP k'')$.
We have to show that for all $\tPoint \in \timeDomain$ we have $\tSlice{\tPoint}(k + k'') = \tSlice{\tPoint}(k' + k'')$. Substituting definitions we get:
\begin{align*}
  \sum_{\tPoint \in \interval} (k(\interval) +_\semK k''(\interval)) =   (\sum_{\tPoint \in \interval}k(\interval)) +_\semK (\sum_{\tPoint \in \interval}k''(\interval))
\end{align*}
Using $k \intervalEq k'$ and substituting the definition of $\intervalEq$, i.e., $\sum_{\tPoint \in \interval}k(\interval) = \sum_{\tPoint \in \interval}k'(\interval)$, we get:
\begin{align*}
  = (\sum_{\tPoint \in \interval}k'(\interval)) +_\semK (\sum_{\tPoint \in \interval}k''(\interval))
  =   \sum_{\tPoint \in \interval} (k'(\interval) +_\semK k''(\interval))
\end{align*}

\myproofpar{Push Through Multiplication}
Analog to the proof for addition, we prove this part by showing that snapshot equivalence of inputs implies snapshot equivalence of outputs for multiplication.
\begin{align*}
\tSlice{\tPoint}(k \multP k'') =
 \sum_{\forall \interval', \interval'': \interval = \interval' \cap \interval'' \wedge \tPoint \in \interval} k(\interval') \cdot_\semK k''(\interval'')
\end{align*}
Based on the fact that timeslice is a homomorphism $\semTimeNI \to \semK$ which we will prove in Theorem~\ref{theo:hib-is-homomorphism}, time slice commutes with multiplication and addition:
\begin{align*}
  = &(\sum_{\forall \interval \wedge \tPoint \in \interval} k(\interval)) \cdot_\semK (\sum_{\forall \interval \wedge \tPoint \in \interval} k''(\interval))\\
  = &(\sum_{\forall \interval \wedge \tPoint \in \interval} k'(\interval)) \cdot_\semK (\sum_{\forall \interval \wedge \tPoint \in \interval} k''(\interval))\\
=  &\tSlice{\tPoint}(k' \multP k'') \tag*{\qedhere}
\end{align*}
\end{proof}

\begin{proof}[Proof of Theorem \ref{theo:timeib-is-semiring}]
  We have to show that the structure we have defined obeys the laws of commutative semirings.   Since the elements of $\semTimeNI$ are functions, it suffices to show $k(\interval) = k'(\interval)$ for every $\interval \in \intervalDom$ to prove that $k = k'$. For all $k, k' \in \semTimeNI$ and $\interval \in \intervalDom$:

\myproofpar{Addition is commutative}\\[-4mm]
  \begin{gather*}
    (k \addP k')(\interval) = k(\interval) +_\semK k'(\interval) = k'(\interval) +_\semK k(\interval) = (k \addP k')(\interval)\\
    k \addNI k' = \kCoalesce{\semK}(k \addP k')
= \kCoalesce{\semK}(k' \addP k) =   k' \addNI k
\end{gather*}
\myproofpar{Addition is associative}\\[-4mm]
\begin{gather*}
  ((k \addP k') \addP k'')(\interval) = (k(\interval) +_\semK k'(\interval)) +_\semK k''(\interval)\\ = k(\interval) +_\semK (k'(\interval) +_\semK k''(\interval)) =   (k \addP (k' \addP k''))(\interval)\\[3mm]
      (k \addNI k') = \kCoalesce{\semK}(k \addP k')
= \kCoalesce{\semK}(k' \addP k) =   (k' \addNI k)
\end{gather*}
\myproofpar{Zero is neutral element of addition}\\[-4mm]
\begin{gather*}
  (k \addP \zeroNI)(\interval) = k(\interval) +_\semK \zeroNI(\interval) = k(\interval) +_\semK 0_\semK = k(\interval) \\
  k  \addNI \zeroNI = \kCoalesce{\semK}(k \addP \zeroNI) = \kCoalesce{\semK}(k) = k
\end{gather*}
\myproofpar{Multiplication is commutative}\\[-4mm]
  \begin{gather*}
  (k \multP k')(\interval) =
 \sum_{\forall \interval', \interval'': \interval = \interval' \cap \interval''} k(\interval') \cdot_\semK k'(\interval'') \\
 =  \sum_{\forall \interval'', \interval': \interval = \interval'' \cap \interval'} k(\interval'') \cdot_\semK k'(\interval')\\
 =  \sum_{\forall \interval', \interval'': \interval = \interval' \cap \interval''} k'(\interval') \cdot_\semK k(\interval'')
 = (k' \multP k)(\interval)\\
  k \multNI k' =
\kCoalesce{\semK}(k \multP k')
= \kCoalesce{\semK}(k' \multP k) =   k' \multNI k
\end{gather*}
\myproofpar{Multiplication is associative}
  \begin{gather*}
 ((k \multP k') \multP k'')(\interval)\\
= \sum_{\forall \interval_1, \interval_2: \interval = \interval_1 \cap \interval_2} ( \sum_{\forall \interval_3, \interval_4: \interval_1 = \interval_3 \cap \interval_4} k(\interval_3) \cdot_\semK k'(\interval_4)) \cdot_\semK k''(\interval_2)\\
 = \sum_{\forall \interval_1, \interval_2: \interval = \interval_1 \cap \interval_2} \sum_{\forall \interval_3, \interval_4: \interval_1 = \interval_3 \cap \interval_4} (k(\interval_3) \cdot_\semK k'(\interval_4) \cdot_\semK k''(\interval_2))\\
 = \sum_{\forall \interval_1, \interval_2, \interval_3: \interval = \interval_1 \cap \interval_2 \cap \interval_3} k(\interval_1) \cdot_\semK k'(\interval_2) \cdot_\semK k''(\interval_3)\\
 = \sum_{\forall \interval_1, \interval_2: \interval = \interval_1 \cap \interval_2}  k(\interval_1) \cdot (\sum_{\forall \interval_3, \interval_4: \interval_2 = \interval_3 \cap \interval_4} k'(\interval_3) \cdot_\semK k''(\interval_4))\\
 = (k \multP (k' \multP k''))(\interval)\\[3mm]
 (k \multNI k') \multNI k'\\
 = \kCoalesce{\semK}(\kCoalesce{\semK}(k \multP k') \multP k'')\\
 = \kCoalesce{\semK}(k \multP k' \multP k'')\\
 = \kCoalesce{\semK}(k \multP \kCoalesce{\semK}(k' \multP k''))\\
= k \multNI (k' \multNI k'')
 \end{gather*}
\myproofpar{One is neutral element of multiplication}
 \begin{gather*}
 (k  \multP \oneNI)(\interval) =  \sum_{\forall \interval', \interval'': \interval = \interval' \cap \interval''}  k(\interval') \cdot_\semK \oneNI(\interval'')\\
 =  k(\interval) \multP \oneNI([t_{min},t_{max})) = k(\interval) \cdot_\semK 1_\semK \\
 = k(\interval) \\
 k  \multNI 1_{\semTimeNI} \\
 = \kCoalesce{\semK}(k  \multP 1_{\semTimeNI})
= \kCoalesce{\semK}(k) = k
 \end{gather*}
\myproofpar{Distributivity}
 \begin{gather*}
 (k  \multP (k' \addP k''))(\interval)\\
 = \sum_{\forall \interval', \interval'': \interval = \interval' \cap \interval''} k(\interval') \cdot_\semK (k'(\interval'') +_\semK k''(\interval''))\\
  = \sum_{\forall \interval', \interval'': \interval = \interval' \cap \interval''} (k(\interval') \cdot_\semK k'(\interval'')) +_\semK (k(\interval') \cdot_\semK k''(\interval''))\\
  = \sum_{\forall \interval', \interval'': \interval = \interval' \cap \interval''} (k(\interval') \cdot_\semK k'(\interval'')) \\ \hspace{2cm}+ \sum_{\forall \interval', \interval'': \interval = \interval' \cap \interval''} (k(\interval') \cdot_\semK k''(\interval''))\\
  =   ((k  \multP k')  \addP (k  \multP k''))(\interval)\\[3mm]
   k  \multNI (k' \addNI k'')\\
 = \kCoalesce{\semK}(k  \multP \kCoalesce{\semK}(k' \addP k''))\\
 = \kCoalesce{\semK}(k  \multP (k' \addP k''))\\
 = \kCoalesce{\semK}((k  \multP k') \addP (k  \multP k'')))\\
 = \kCoalesce{\semK}(\kCoalesce{\semK}(k  \multP k') \addP \kCoalesce{\semK}(k  \multP k'')))\\
  =   (k  \multNI k')  \addNI (k  \multNI k'') \tag*{\qedhere}
\end{gather*}
\end{proof}

\begin{proof}[Proof of Theorem \ref{theo:hib-is-homomorphism}]
  Proven by substitution of definitions:

\myproofpar{Preserves neutral elements}
 $$\kCoalesce{\semK}(\tSlice{\tPoint}(\zeroNI)) =  \sum_{\interval \in \intervalDom: \tPoint \in \interval} \zeroNI(\interval) = \sum_{\interval \in \intervalDom: \tPoint \in \interval} 0_\semK =  0_\semK$$\\
 $$\tSlice{\tPoint}(\oneNI) = \sum_{\interval \in \intervalDom: \tPoint \in \interval} \oneNI(\tPoint)$$
 Since $\tPoint \in [\tMin,\tMax)$
 for any $\tPoint \in \timeDomain$ and $\oneNI(\interval) = 0_\semK$ for any interval $\interval$ except for $[\tMin,\tMax)$ where $\oneNI([\tMin,\tMax)) = 1_\semK$ we get
 $\sum_{\interval \in \intervalDom: \tPoint \in \interval} \oneNI(\tPoint) = 1_\semK$\\
\myproofpar{Commutes with addition}
 $$\tSlice{\tPoint}(k +_{\anyTE} k') = \sum_{\interval \in \intervalDom: \tPoint \in \interval} (k +_{\anyTE} k')(\interval)
 = \sum_{\interval \in \intervalDom: \tPoint \in \interval} k(\interval) +_\semK k'(\interval)$$ $$=
 \sum_{\interval \in \intervalDom: \tPoint \in \interval} k(\interval) +_\semK \sum_{\interval \in \intervalDom: \tPoint \in \interval} k'(\interval)
= \tSlice{\tPoint}(k) + \tSlice{\tPoint}(k')$$\\
\myproofpar{Commutes with multiplication}
\begin{align*}
  &\tSlice{\tPoint}(k \cdot_{\anyTE} k') = \sum_{\interval \in \intervalDom: \tPoint \in \interval} (k \cdot_{\anyTE} k')(\interval) \\
  = &\sum_{\interval \in \intervalDom: \tPoint \in \interval} \sum_{\forall \interval', \interval'': \interval = \interval' \cap \interval''} k(\interval') \cdot_\semK k'(\interval'')\\
= &\sum_{\forall \interval', \interval'': \tPoint \in \interval' \wedge \tPoint \in \interval''} k(\interval') \cdot_\semK k'(\interval'')
\end{align*}

 Let $n_1, \ldots, n_l$ denote the elements $k(\interval)$ for all intervals from the set of intervals with $\tPoint \in \interval$ and $k(\interval) \neq 0$. Analog, let $m_1, \ldots, m_o$ bet the set of elements with the same property for $k'$. Then the sum can be rewritten as:
 $$ = \sum_{i=1}^{l} \sum_{j=1}^{o} n_i \cdot_\semK m_j = \sum_{i=1}^{l}  n_i \cdot_\semK (\sum_{j=1}^{o} m_j) = (\sum_{i=1}^{l}  n_i) \cdot_\semK (\sum_{j=1}^{o} m_j)$$
 replacing this again with the interval notation we get:
 \begin{gather*}
 = (\sum_{\forall \interval: \tPoint \in \interval} k(\interval)) \cdot_\semK (\sum_{\forall \interval: \tPoint \in \interval}k'(\interval))
   = \tSlice{\tPoint}(k) \cdot_\semK \tSlice{\tPoint}(k')
   \tag*{\qedhere}
 \end{gather*}

\end{proof}

\begin{proof}[Proof of Lemma \ref{lem:encoding-is-bijective}]
\myproofpar{injective}
We have to show that for any two snapshot $\semK$-relations $\rel$ and $\rel'$, $\repr_{\semK}(\rel) = \repr_{\semK}(\rel') \Rightarrow \rel = \rel'$. Since, $\kCoalesce{\semK}$ preserves snapshot equivalence and is a unique representation of any temporal $\semK$-element $\anyTE$, it is sufficient to show that for all $t$, we have $\anyTE_{R,\tuple} = \anyTE_{R',\tuple}$ instead. For sake of contradiction, assume that there exists a tuple $\tuple$ such that $\anyTE_{R,\tuple} \neq \anyTE_{R',\tuple}$. Then there has to exist  $\tPoint \in \timeDomain$ such that $\anyTE_{R,\tuple}([\tPoint,\tPoint+1)) \neq \anyTE_{R',\tuple}([\tPoint,\tPoint+1))$. However, based on the definition of $\anyTE_{R,\tuple}$ this implies that $\rel(\tPoint)(\tuple) \neq \rel'(\tPoint)(\tuple)$ which contradicts the assumption.

\myproofpar{surjective} Given a $\semTimeNI$-relation $\rel$, we construct a snapshot $\semK$-relation $\rel'$ such that $\repr_{\semK}(\rel') = \rel$: $\rel'(\tPoint)(\tuple) = \sum_{\tPoint \in \interval}\rel(\tuple)(\interval)$.
\end{proof}

\begin{proof}[Proof of Lemma \ref{lem:encoding-timeslice-is-compatible}]
  By virtue of snapshot equivalence between  $\kCoalesce{\semK}(\anyTE)$ and $\anyTE$ and based on the singleton interval definition of $\anyTE_{R,\tuple}$ in $\repr_{\semK}$, we have for any tuple $\tuple$:
  \begin{gather*}
  \tSlice{\tPoint}(\repr_{\semK}(\rel))(\tuple) = \tSlice{\tPoint}(\anyTE_{R,\tuple}) =
\anyTE_{R,\tuple}([\tPoint,\tPoint+1)) =
  \rel(\tPoint)(\tuple)\\ = \tSlice{\tPoint}(\rel)(\tuple)
    \tag*{\qedhere}
  \end{gather*}
\end{proof}

\begin{proof}[Proof of Theorem \ref{theo:repr-system}]
We have to show that $(\domTimeRel{\semK}, \reprInv{\semK}, \tSlice{})$ fulfills conditions (1), (2), and (3)  of Definition~\ref{def:repr-system} to prove that this triple is a representation system for $\semK$-relations. Conditions (1) and (2) have been proven in Lemmas~\ref{lem:encoding-is-bijective} and~\ref{lem:encoding-timeslice-is-compatible}, respectively. Condition (3) follows from the fact that $\tSlice{\tPoint}$ is a homomorphism (Theorem~\ref{theo:hib-is-homomorphism}) and that semiring homomorphisms commute with $\raPlus$-queries (\cite{GK07}, Proposition 3.5).
\end{proof}

\begin{proof}[Proof of Theorem \ref{theo:bag-interval-norm-monus}]
To prove that $\semTimeNI$ has a well-defined monus, we have to show $\semTimeNI$ is naturally ordered and that for any $k$ and $k'$, the set $\{ k'' \mid k \naturalOrder_{\semTimeNI} k' + k''\}$ has a unique smallest element according to $\naturalOrder_{\semTimeNI}$. A semiring is a naturally ordered if $\naturalOrder_{\semTimeNI}$ is a partial order (reflexive, antisymmetric, and transitive).
$k \naturalOrder_{\semTimeNI} k' \Leftrightarrow \exists k'': k \addNI k'' = k'$.
Substituting the definition of addition, we get $\exists k'': \kCoalesce{\semK}(k \addP k'') = k'$.
Since $\kCoalesce{\semK}(k') = k'$ and coalesce preserves snapshot equivalence, we have $\kCoalesce{\semK}(k \addP k'') = k' \Leftrightarrow \forall \tPoint \in \timeDomain: \tSlice{\tPoint}(k) +_{\semK} \tSlice{\tPoint}(k'') = \tSlice{\tPoint}(k')$.
From $\kCoalesce{\semK}(k \addP k'') = k' \Leftrightarrow \forall \tPoint \in \timeDomain: \tSlice{\tPoint}(k) +_{\semK} \tSlice{\tPoint}(k'') = \tSlice{\tPoint}(k')$ follows that $k \naturalOrder_{\semTimeNI} k' \Leftrightarrow \forall \tPoint \in \timeDomain: \tSlice{\tPoint}(k) \naturalOrder_{\semK} \tSlice{\tPoint}(k')$.

Note that we only have to  prove that $\naturalOrder_{\semTimeNI}$ is antisymmetric, since   reflexivity and transitivity of the natural order follows from the semiring axioms and, thus, holds for all semirings.

\myproofpar{Antisymmetric} We have to show that $\forall k,k' \in \semTimeNI: k \naturalOrder_{\semTimeNI} k' \wedge k' \naturalOrder_{\semTimeNI} k \rightarrow k = k'$. This holds because, $k \naturalOrder_{\semTimeNI} k'$ and $k' \naturalOrder_{\semTimeNI} k$ iff for all $\tPoint \in \timeDomain$ we have $\tSlice{\tPoint}(k) \naturalOrder_{\semK} \tSlice{\tPoint}(k')$ and $\tSlice{\tPoint}(k') \naturalOrder_{\semK} \tSlice{\tPoint}(k)$ which implies $\tSlice{\tPoint}(k) = \tSlice{\tPoint}(k')$ for all $\tPoint \in \timeDomain$ which can only be the case if $k \intervalEq k'$. Since $k$ and $k'$ are coalesced it follows that $k = k'$.

\myproofpar{Unique Smallest Element Exists}
  It remains to be shown that $\{ k'' \mid k \naturalOrder_{\semTimeNI} k' + k''\}$ has a smallest member for all $k, k' \in \semTimeNI$. We give a constructive proof by constructing the smallest such element $k_{min}$. $k_{min}$ is defined by coalescing an element $k_{pmin}$ that consists of singleton intervals ($[\tPoint,\tPoint+1)$) as follows:
  \begin{align*}
    k_{min} &= \kCoalesce{\semK}(k_{pmin})\\
 \forall \interval \in \intervalDom:   k_{pmin}(\interval) &=
              \begin{cases}
                \tSlice{\tPoint}(k) -_{\semK} \tSlice{\tPoint}(k') &\mathtext{if} I = [\tPoint,\tPoint+1)\\
                0_\semK & \mathtext{else}\\
              \end{cases}
  \end{align*}

First we have to demonstrate that indeed $k \naturalOrder_{\semTimeNI} k' +_\semK k_{min}$. Recall that $\kCoalesce{\semK}(k' \addP k_{min}) = k \Leftrightarrow \forall \tPoint \in \timeDomain: \tSlice{\tPoint}(k') + \tSlice{\tPoint}(k_{min})) = \tSlice{\tPoint}(k')$. Substituting the definition of $k_{min}$ and using the fact that $\tSlice{\tPoint}$ commutes with addition, for every time point $\tPoint$ we distinguish two cases. Either $\tSlice{\tPoint}(k') \naturalgeq_\semK \tSlice{\tPoint}(k)$ in which case $\tSlice{\tPoint}(k) -_{\semK} \tSlice{\tPoint}(k') = 0_\semK$ and we have: $\tSlice{\tPoint}(k') +_{\semK} (\tSlice{\tPoint}(k) -_{\semK} \tSlice{\tPoint}(k')) = \tSlice{\tPoint}(k') + 0_\semK = \tSlice{\tPoint}(k')$. Thus, $\tSlice{\tPoint}(k') +_{\semK} (\tSlice{\tPoint}(k) -_{\semK} \tSlice{\tPoint}(k'))  \naturalgeq_{\semK} k \tSlice{\tPoint}(k')$ reduces to $\tSlice{\tPoint}(k') \naturalgeq_{\semK} \tSlice{\tPoint}(k)$ which was assumed to hold.

Otherwise for $\tSlice{\tPoint}(k') \naturalOrder_{\semK} \tSlice{\tPoint}(k)$ we have:
$\tSlice{\tPoint}(k') +_{\semK} (\tSlice{\tPoint}(k) -_{\semK} \tSlice{\tPoint}(k'))$.
Let $k'' = (\tSlice{\tPoint}(k) -_{\semK} \tSlice{\tPoint}(k'))$. Substituting the definition of $-_{\semK}$, we get
$k'' = min_{k'''} \tSlice{\tPoint}(k') + k''' \naturalgeq_{\semK} \tSlice{\tPoint}(k)$. Thus,
$$\tSlice{\tPoint}(k') +_{\semK} k'' \naturalgeq_{\semK} k.$$

It remains to be shown that $k_{min}$ is minimal. For contradiction assume that there exists a smaller such member $k_{alt}$. Then there has to exist at least one time point $\tPoint$ such that $\tSlice{\tPoint}(k_{alt}) \naturalStrictOrder_{\semK} \tSlice{\tPoint}(k_{min})$. We have to distinguish two cases. If $\tSlice{\tPoint}(k) \naturalOrder_\semK \tSlice{\tPoint}(k')$, then $\tSlice{\tPoint}(k_{min}) = 0_\semK$. However, since $0_\semK \leq k$ for any $k \in \semK$ this leads to a contradiction. Otherwise $\tSlice{\tPoint}(k') + \tSlice{\tPoint}(k_{alt}) \naturalStrictOrder_{\semK} \tSlice{\tPoint}(k') +_\semK \tSlice{\tPoint}(k_{min}) = \tSlice{\tPoint}(k)$ contradicting the assumption that $k \naturalOrder_\semK k' +_\semK k_{alt}$. \end{proof}

\begin{proof}[Proof of Theorem \ref{theo:bag-interval-norm-monus-homo}]
  We have to prove that $\tSlice{\tPoint}(k \monNI k') = \tSlice{\tPoint}(k) -_{\semK} \tSlice{\tPoint}(k')$. We start with $\tSlice{\tPoint}(k \monNI k') = \tSlice{\tPoint}(\kCoalesce{\semK}(k \monP k'))$.
Since $\kCoalesce{\semK}$ preserves $\intervalEq$ and $\tSlice{\tPoint}(k) = \tSlice{\tPoint}(k')$ if $k \intervalEq k'$, we get:
\begin{align*}
  =   &\sum_{\tPoint \in \interval} (k \monP k')(\interval)
  = &\tSlice{\tPoint}(k) -_{\semK} \tSlice{\tPoint}(k') \tag*{\qedhere} \end{align*}
\end{proof}

\begin{proof}[Proof of Theorem~\ref{theo:our-agg-eq-their-agg}]
By construction, the result of aggregation is a $\semTimeNIN$ relation (it is coalesced). Also by construction, we have $\tSlice{\tPoint}(\aggregation{G}{f(A)}(R)) = \aggregation{G}{f(A)}(\tSlice{\tPoint}(R))$.
\end{proof}
\proofsketch{\revm{
Proven by induction over the structure of an algebra expression.
    }}

\begin{proof}[Proof of Theorem~\ref{theo:reduction-is-correct}]
To prove the relationships in the commutative diagram of Equation~~\eqref{eq:n-enc-cd}, we have to prove that $\reprN^{-1}(\reprN(R)) = R$ and that queries commute with $\reprN$ if rewritten using $\reprRewr$, i.e., $\reprN(Q(R)) = \reprRewr(Q)(\reprN(R))$.

\myproofpar{$\reprN^{-1}(\reprN(R)) = R$}
Let $R$ be a $\semTimeNIN$-relation and $R'$ denote $\reprN(R)$. Consider an arbitrary tuple $\tuple$ and let $\anyTE$ denote the temporal element associated with $\tuple$, i.e., $R(\tuple) = \anyTE$. Consider any interval $\interval \in \intervalDom$ and let $n_\interval = \anyTE(\interval)$ (the multiplicity assigned by $\anyTE$ to $\interval$).
According to Definition~\ref{def:N-enc}, this implies that tuple $\tuple_\interval = (t, \iBegin{\interval}, \iEnd{\interval})$ is annotated with $n_{\interval}$. Let $\anyTE_t$ denote the temporal element assigned by $\reprN^{-1}$ to $t$. By construction $\anyTE_t(\interval) = n_{\interval} = \anyTE(\interval)$.

\myproofpar{$\reprN(Q(R)) = \reprRewr(Q)(\reprN(R))$} We prove this part by induction over the structure of a query. Let $R' = \reprN(R)$.

\myproofpar{Base case} Assume that $\query = R$ for some relation $R$. The claim follows immediately from $\reprRewr(R) = R$.

\myproofpar{Induction Step} Assume the claim holds for queries with up to $n$ operators. We have to prove the claim for any query $\query$ with $n+1$ operators. For unary operators, WLOG let $\query = op(\query_n)$ for an operator $op$ and query $\query_n$ with $n$ operators and let $\query' = \reprRewr(\query)$.

\myproofpar{Selection: $op = \selection_{\theta}$} A selection is rewritten as $\query' = \coalesceOp(\selection_\theta(\reprRewr(\query)))$. Consider an input tuple $\tuple$ from $R$. The temporal $\semK$-element $\anyTE$ annotating tuple $\tuple$ is represented as a set of tuples of the form $(t, \iBegin{\interval}, \iEnd{\interval})$ for some interval $\interval$. If $\tuple$ fulfills the selection, then $\tuple$ is annotated with $\anyTE$ in the result. In $R'$, all of these tuples are in the result of $\query'$ if $t \models \theta$ and applying $\reprN^{-1}$ we get $\anyTE$ as the annotation of $\tuple$. If $\tuple$ does not fulfill the condition then $\tuple$ is annotated with $0$ in both encodings.

\myproofpar{Projection: $op = \projection_{A}$} A projection is rewritten by adding the attributes encoding the interval associated to a tuple to the projection expressions. There will be one tuple $(t, \iBegin{\interval}, \iEnd{\interval})$ in the result for each
 interval $\interval$ assigned a non-zero annotation in $R(u)$ for any tuple  $u$ projected on tuple $\tuple$. Function $\reprN^{-1}$ creates the annotation of an output as a temporal element that maps each interval mapping to a non-zero annotation in $\reprN(R)$ to that annotation. This corresponds to addition of singleton temporal elements and based on the fact that addition is associative this implies that the annotation of $\tuple$ in the output will be the sum of temporal elements $R(u)$ for each $u$ projected onto $\tuple$. Thus, the claim holds.

\myproofpar{Aggregation: $op = \aggregation{}{f(A)}$} The rewrite for aggregation without group-by utilizes the split operation $\normalize$ we have defined. Note that $\normalize_{\emptyset}$ returns a $\semTimeNIN$-relation $S$ where for any pair of tuples $t$ and $t'$ and any pair of intervals $\interval_1$ and $\interval_2$ we have $\interval_1 \neq \interval_2 \wedge \interval_1 \cap \interval_2 \neq \emptyset \Rightarrow S(\tuple')(\interval_1) = 0 \vee S(\tuple')(\interval_2) = 0$. That is, all intervals with non-zero annotations from any pair of temporal elements do not overlap or are the same.  From that follows that for any two time points $\tPoint_1, \tPoint_2 \in \interval$ for an interval $\interval$ that is mapped to $n \neq 0$ in the annotation of  at least  one tuple $S$, the value of the result of aggregation is the same for the snapshots at $\tPoint_1$ and $\tPoint_2$. Thus, grouping by the interval boundaries yields the expected result with the exception of an empty snapshot. However, since a tuple $(0_f, \tMin, \tMax)$ is added to the input, the aggregation will produce $0$ (count) or \lstinline!NULL! (other aggregation functions) for intervals containing only empty snapshots. This does not effect the result of the aggregation for non-empty snapshots, because $0_f$ is the neutral element of the aggregation function $f$.

\myproofpar{Aggregation: $op = \aggregation{G}{f(A)}$} For aggregation with group-by, split is applied grouping on $G$ and no additional tuple $(0_f, \tMin, \tMax)$ is added to the input. Since the tuples within one group are split, the argument we have used above for aggregation without group-by applies also to aggregation with group-by.

For binary operators WLOG let $\query = op(\query_l, \query_r)$ where the total number of operators in $\query_l$ and $\query_r$ is $n$.

\myproofpar{Join: $op = \query_l \join_\theta \query_r$} Consider a tuple $\tuple$ that is the result of joining tuples $u$ and $v$. Let $\anyTE_u$ and $\anyTE_v$ be the temporal elements annotating $u$ and $v$ in the input, respectively. Based on the definition of the rewriting, in the result of the rewritten join there will be a tuple $\tuple, \iBegin{\interval}, \iEnd{\interval}$ annotated with $\sum_{\interval_u, \interval_v} \query_l(u) \cdot \query_r(v)$ for all intervals $\interval_u$ and $\interval_v$ such that $\interval = \interval_u \cap \interval_c$. This corresponds to the definition of multiplication (join) in $\semTimeNIN$.

\myproofpar{Union: $op = \query_l \union \query_r$} Union is rewritten as a union of the rewritten inputs. For any tuple $\tuple$, let $\anyTE_l = \query_l(\tuple)$ and $\anyTE_r = \query_r(\tuple)$. In the result of the union applied by $\query'$ a tuple
 $(\tuple, \iBegin{\interval}, \iEnd{\interval})$ for each interval $\interval$ will be annotated with $\anyTE_l(\interval) + \anyTE_r(\interval)$. The result of the union is then coalesced. Applying $\reprN^{-1}$ the  annotation computed for $\tuple$ is equivalent to $\kCoalesce{\semN}(\anyTE_l \addP \anyTE_r)$.

\myproofpar{Difference: $op = \query_l - \query_r$} A difference is rewritten by applying difference to the pairwise normalized inputs. Recall that the monus operator of $\semTimeNIN$ associates the result of the monus for $\semN$ to each snapshot of a temporal $\semN$-element. Since the split operator adjusts intervals such that there is no overlap, the claim holds.
\end{proof}

\section{Query Descriptions}
\label{sec:workload-queries}

\subsection{MySQL Employee Dataset}
\label{sec:mysql-empl-datas-workload}

\parttitle{join-1}
Return the salary and department for every employee.
\lstset{basicstyle=\scriptsize\upshape\ttfamily}
\begin{lstlisting}
SELECT a.emp_no, dept_no, salary
FROM dept_emp a JOIN salaries b ON (a.emp_no = b.emp_no)
\end{lstlisting}

\parttitle{join-2}
Return the department, salary, and title for every employee.

\begin{lstlisting}
SELECT title, salary, dept_no
FROM dept_emp a JOIN salaries b ON (a.emp_no = b.emp_no)
                JOIN titles c ON (a.emp_no = c.emp_no)
\end{lstlisting}

\parttitle{join-3}
Return employees that manage a particular department and earn more then \$70,000.

\begin{lstlisting}
SELECT a.emp_no, dept_no
FROM dept_manager a
     JOIN salaries b ON (a.emp_no = b.emp_no)
WHERE salary > 70000
\end{lstlisting}

\parttitle{join-4}
Returns information about the manager of each department.

\begin{lstlisting}
SELECT a.emp_no, a.dept_no, b.salary, first_name,
       last_name
FROM dept_manager a, salaries b, employees e
WHERE a.emp_no = b.emp_no and a.emp_no = e.emp_no
\end{lstlisting}

\parttitle{agg-1}
Returns the average salary of employees per department.

\begin{lstlisting}
SELECT dept_no, avg(salary) as avg_salary
FROM dept_emp a
     JOIN salaries b ON (a.emp_no = b.emp_no)
GROUP BY dept_no
\end{lstlisting}

\parttitle{agg-2}
Returns the average salary of managers.

\begin{lstlisting}
SELECT avg(salary) as avg_salary
FROM dept_manager a
     JOIN salaries b ON (a.emp_no = b.emp_no)
\end{lstlisting}

\parttitle{agg-3}
Returns the number of departments with more than 21 employees.

\begin{lstlisting}
SELECT count(1)
FROM (SELECT count(*) AS c, dept_no
	  FROM dept_emp WHERE emp_no < 10282
	  GROUP BY dept_no HAVING count(*) > 21) s
\end{lstlisting}

\parttitle{agg-join}
Returns the names of employees with the highest salary in their department. It contains a 4-way join where one of the join inputs is the result of a subquery with aggregation.

\begin{lstlisting}
SELECT d.emp_no, e.first_name, e.last_name,
       maxS.max_salary, d.dept_no
FROM (SELECT max(salary) as max_salary,dept_no
      FROM dept_emp a
           JOIN salaries b ON (a.emp_no = b.emp_no)
      GROUP BY dept_no) maxS,
      salaries s, dept_emp d, employees e
WHERE e.emp_no = s.emp_no
      AND s.salary = maxS.max_salary
	  AND d.dept_no = maxS.dept_no
      AND d.emp_no = e.emp_no
\end{lstlisting}

\parttitle{diff-1}
Returns employees that are not managers of any department.

\begin{lstlisting}
SELECT emp_no FROM dept_emp
EXCEPT ALL
SELECT emp_no FROM dept_manager
\end{lstlisting}

\parttitle{diff-2}
Returns salaries of employees that are not managers.

\begin{lstlisting}
SELECT a.emp_no, salary
FROM (SELECT emp_no FROM dept_emp
      EXCEPT ALL
      SELECT emp_no FROM dept_manager) a
      JOIN salaries b ON (a.emp_no = b.emp_no)
\end{lstlisting}

\parttitle{C-Sn}
To evaluate the performance of coalescing we use the following query template
varying the selection condition on salary to control the size of the output.
The query returns employee salaries. We materialize the result of this query
for each selectivity and use this as the input to coalescing.

\begin{lstlisting}
SELECT a.EMP_NO, salary
FROM employees a
     JOIN salaries b ON (a.emp_no = b.emp_no)
WHERE salary > ?
\end{lstlisting}

\subsection{Tourism Dataset}
\label{sec:tourism-dataset}

Recall that this dataset stores travel booking inquiries in the South Tyrol area in Italy.

\parttitle{tou-join-agg}
This query returns for each destination the sum of the number of persons of all bookings for  this destination paired with the average number of this sum for all other destinations.

\begin{lstlisting}
WITH numPerDest AS (
    SELECT sum(adults + children) AS numT,
           destination AS dest
    FROM tourismdata
    GROUP BY destination
)
SELECT n.numT, n.dest, avg(o.numT) AS otherAvg
FROM numPerDest n, numPerDest o
WHERE n.dest <> o.dest
GROUP BY n.numT, n.dest
\end{lstlisting}

\parttitle{tou-agg-1}
For destinations with more than 1000 inquiries return the number of inquiries for this destination and the total number of persons for which inquiries were made.

\begin{lstlisting}
SELECT destination,
       count(*) AS numEnquiry,
       sum(adults + children) AS numTourists
FROM tourismdata
GROUP BY destination
HAVING count(*) > 1000;
\end{lstlisting}

\parttitle{agg-2}
For each destination return the maximum number of persons per inquiry.

\begin{lstlisting}
SELECT destination,
       max(adults + children) AS maxTourists
FROM tourismdata
GROUP BY destination;

\end{lstlisting}

\parttitle{tou-agg-3}
Find the maximum number of inquiries from the total number of inquiries per destination.

\begin{lstlisting}
SELECT max(cnt) AS maxInq
FROM (SELECT count(*) AS cnt
      FROM tourismdata
      GROUP BY destination)
\end{lstlisting}

\parttitle{agg-join}
This query returns the total number of inquiries per continent.

\begin{lstlisting}
SELECT continent, count(*) AS numEnquiries
FROM tourismdata t, country c
WHERE t.countrycode = c.countrycode
GROUP BY continent
\end{lstlisting}

\section{Pulling-up Coalescing}
\label{sec:optimizations}

The main overhead of our approach for snapshot temporal queries
compared to non-temporal query processing is the extensive use of
coalescing, which can be expensive if naively implemented in
SQL. Furthermore, the application of coalescing after each operation
may prevent the database optimizer from applying standard
optimizations such as join reordering.  To address this issue, we now
investigate how to reduce the number of coalescing steps. In fact, we
demonstrate that it is sufficient to apply coalescing as a last step
in query processing instead of applying it to intermediate
results. Similar optimizations have been proposed by Bowman et
al.~\cite{BowmanT03} for their multiset temporal normalization
operator and by B\"ohlen et al.~\cite{DBLP:conf/vldb/BohlenSS96} for
set-coalescing.

Consider how a $\raPlus$ query $\query$ is evaluated over an $\semTimeNI$-database. $\raPlus$ over K-relations computes the annotation of a tuple in the result of a query using the addition and multiplication operations of the semiring. That is, the annotation of any result tuple is computed using an arithmetic expression over the annotations of tuples from the input of the query. In the case of a semiring $\semTimeNI$, addition and multiplication are defined as coalescing a temporal element that is computed based on point-wise application of the addition (multiplication) operations of semiring $\semK$ (denoted as $\addP$ and $\multP$).
Recall from Lemma~\ref{lem:coalesce-push} that coalescing can be redundantly pushed into the addition and multiplication operations of interval-temporal semirings, e.g., $\kCoalesce{\semK}(k \addP k') = \kCoalesce{\semK}(\kCoalesce{\semK}(k) \addP k')$. Interpreting this equivalence from right to left and applying it repeatedly to an arithmetic expression $e$ using $\addNI$ and $\multNI$, the expression can be rewritten into an equivalent expression of the form $\kCoalesce{\semK}(e')$, where $e'$ is an expression that only uses operations $\addP$ and $\multP$. Now consider expressions that also include applications of the monus operator $\monNI$. This operator is defined as $\kCoalesce{\semK}(k \monP k')$. The $\monP$ operator computes the timeslice of the inputs at every point in time and then applies $\monK$ to each timeslice. According to Lemma~\ref{lem:coalesce-properties}, $\tSlice{\tPoint}(k) \intervalEq \tSlice{\tPoint}(\kCoalesce{\semK}(k'))$. Thus, the result of $\monP$ is independent of whether the input is coalesced or not.

\begin{lem}\label{lem:single-coalesce-enough}
Any arithmetic expression $e$ using operations and elements from an period m-semiring $\semTimeNI$ is equivalent to an expression of the form $\kCoalesce{\semK}(e')$, where $e'$ only contains operations $\addP$, $\multP$, and $\monP$.
\end{lem}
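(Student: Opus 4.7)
The plan is to prove this by structural induction on the expression $e$. For the base case, $e = k$ for some $k \in \semTimeNI$. Since elements of $\semTimeNI$ are by definition in $\nTEDom{\semK}$, they satisfy $k = \kCoalesce{\semK}(k)$ by idempotence (Lemma~\ref{lem:coalesce-properties}), so we pick $e' = k$ and we are done.

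For the inductive step, assume $e_1$ and $e_2$ can be written as $\kCoalesce{\semK}(e_1')$ and $\kCoalesce{\semK}(e_2')$, where $e_1'$ and $e_2'$ only contain $\addP$, $\multP$, $\monP$. I handle each of the three operators as follows. If $e = e_1 \addNI e_2$, unfolding the definition gives $\kCoalesce{\semK}(\kCoalesce{\semK}(e_1') \addP \kCoalesce{\semK}(e_2'))$; applying Lemma~\ref{lem:coalesce-push} twice (once for each argument, using commutativity of $\addP$ to re-use the lemma on the right argument) yields $\kCoalesce{\semK}(e_1' \addP e_2')$. The case $e = e_1 \multNI e_2$ is entirely analogous using the $\multP$ clause of Lemma~\ref{lem:coalesce-push}.

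The one case not already covered by Lemma~\ref{lem:coalesce-push} is the monus $e = e_1 \monNI e_2$, which is the main obstacle. What needs to be shown is the auxiliary identity
\begin{equation*}
  \kCoalesce{\semK}(\kCoalesce{\semK}(k) \monP k') = \kCoalesce{\semK}(k \monP k')
\end{equation*}
and symmetrically for the second argument. The plan is to use the uniqueness part of Lemma~\ref{lem:coalesce-properties}, which says it suffices to show snapshot-equivalence of $\kCoalesce{\semK}(k) \monP k'$ and $k \monP k'$. By the definition of $\monP$ as the pointwise application of $\monK$ on time slices, and by the equivalence-preservation part of Lemma~\ref{lem:coalesce-properties} giving $\tSlice{\tPoint}(\kCoalesce{\semK}(k)) = \tSlice{\tPoint}(k)$ for every $\tPoint$, the two outputs agree on every snapshot, hence are snapshot-equivalent. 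The symmetric statement for the second argument is identical. Applying these two identities to $\kCoalesce{\semK}(\kCoalesce{\semK}(e_1') \monP \kCoalesce{\semK}(e_2'))$ collapses it to $\kCoalesce{\semK}(e_1' \monP e_2')$, completing the induction.
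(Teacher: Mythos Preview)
Your proof is correct and follows essentially the same approach as the paper. The paper's argument (given in the proof of the subsequent corollary) likewise establishes the monus analogue of Lemma~\ref{lem:coalesce-push} by observing that $\monP$ is defined via timeslices and invoking equivalence preservation from Lemma~\ref{lem:coalesce-properties}, then repeatedly applies these identities together with Lemma~\ref{lem:coalesce-push} to strip all inner coalesces; your explicit structural induction is a slightly more formal packaging of the same idea.
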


Lemma~\ref{lem:single-coalesce-enough} implies that it is sufficient to apply coalescing as a last step in a rewritten query $\reprRewr(\query)$ instead of after each operator.

\begin{coll}[Coalesce Pullup]\label{coll:coalesce-pullup}
  For any $\ra$ query $\query$, $\reprRewr(\query)$ is equivalent to a query $\query'$ which is derived from $\reprRewr(\query)$ by removing all but the outermost coalescing operator.
\end{coll}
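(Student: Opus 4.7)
The plan is to prove the corollary by structural induction on $\query$, leveraging Lemma~\ref{lem:single-coalesce-enough} as the central tool. The overall strategy is to show that for any subexpression of $\reprRewr(\query)$, dropping its interior $\coalesceOp$ operators produces a result that is snapshot-equivalent to (though not necessarily syntactically equal to) the output of $\reprRewr(\query)$ on the same input. A single outermost application of $\coalesceOp$ then maps both encodings to the unique coalesced representative by the uniqueness property of Lemma~\ref{lem:coalesce-properties}, yielding equality.

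First I would handle the base case: if $\query = R$, then $\reprRewr(\query) = R$ contains no $\coalesceOp$ operators, so the statement is vacuous. For the inductive step, I would treat the positive relational algebra operators ($\selection$, $\projection$, $\join$, $\union$) uniformly: by Definition of $\raPlus$ over K-relations, the annotation of each output tuple of these operators is an arithmetic expression in $\addNI$ and $\multNI$ over annotations of input tuples. Lemma~\ref{lem:single-coalesce-enough} then rewrites this expression into $\kCoalesce{\semK}(e')$ where $e'$ uses only $\addP$ and $\multP$. Since $\addP$ and $\multP$ are insensitive to whether their inputs have been coalesced (pointwise addition and convolution over interval intersections commute with $\kCoalesce{\semK}$ via Lemma~\ref{lem:coalesce-push}), the interior $\coalesceOp$ operators in $\reprRewr(\query)$ can be dropped without changing the value of the final coalesced expression.

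The hard part will be handling difference and aggregation, because these are rewritten through the split operator $\normalizeOp$, which is not itself a semiring operation and whose output depends on the set of interval endpoints appearing in its inputs. If an interior $\coalesceOp$ is removed, the set of endpoints presented to $\normalizeOp$ may change. The key observation to push through is that $\normalizeOp_G(R_1,R_2)$ produces, for every tuple-group, a partition of the time domain into maximal subintervals over which the annotation of each input tuple is constant; adding more endpoints only refines this partition. Consequently, if the un-coalesced input is snapshot-equivalent to the coalesced one, the result of $\normalizeOp$ followed by aggregation (or by $-$ in the difference case) produces an output that is still snapshot-equivalent to the original rewrite, because the aggregation function is evaluated per snapshot and split intervals are constant-annotation. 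For the monus component, the proof of Theorem~\ref{theo:bag-interval-norm-monus} already shows that $\monP$ depends only on pointwise values, so it too commutes with $\kCoalesce{\semK}$ at its inputs.

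Finally, I would conclude by chaining these invariants: letting $\query'$ denote $\reprRewr(\query)$ with all interior $\coalesceOp$ operators removed, the induction yields that, on any input, $\query'$ and $\reprRewr(\query)$ produce encodings whose decodings via $\reprN^{-1}$ are snapshot-equivalent temporal $\semK$-elements tuple by tuple. Applying the single outermost $\coalesceOp$ to both — which by construction sits at the root of $\reprRewr(\query)$ — forces both results into the unique normal form guaranteed by the uniqueness clause of Lemma~\ref{lem:coalesce-properties}, establishing equality of the two queries and hence the corollary.
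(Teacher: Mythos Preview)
Your proposal is correct and takes a more explicit route than the paper. The paper's proof works purely at the semiring-annotation level: it extends Lemma~\ref{lem:coalesce-push} to show that $\kCoalesce{\semK}$ also pushes through $\monP$ (since $\monP$ is defined pointwise via timeslices and $\kCoalesce{\semK}$ preserves snapshots), and then observes that because $\ra$ annotations over $\semTimeNI$ are built from $\addNI$, $\multNI$, $\monNI$, repeatedly reading those equivalences right-to-left pulls every coalesce to the outside. You instead perform structural induction directly on the rewritten query $\reprRewr(\query)$ and explicitly confront the split operator $\normalizeOp$, arguing that feeding it snapshot-equivalent but uncoalesced inputs only refines the induced interval partition and therefore preserves snapshot-equivalence through the subsequent difference. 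Your route buys additional rigor at the implementation level---the paper never mentions $\normalizeOp$ in its proof and leaves the bridge from semiring expressions to the concrete SQL-level rewriting implicit---while the paper's route is shorter and exposes the dependence on Lemma~\ref{lem:coalesce-push} more directly. One minor point: the corollary is stated for $\ra$ only, so your handling of aggregation is extra work (sound, but not required).
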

\begin{proof}  Operations $\addNI$, $\multNI$, and $\monNI$ are defined as applying $\kCoalesce{\semK}$ to the result of operations $\addP$, $\multP$, and $\monP$, respectively. Thus, expression $e$ is equivalent to an expression that interleaves the $\kCoalesce{\semK}$ as well as $\addP$, $\monP$, and $\multP$ operations. To prove this, we first prove that the following equivalence holds: $\kCoalesce{\semK}(k \monP k') \Leftrightarrow \kCoalesce{\semK}(\kCoalesce{\semK}(k) \monP k') \Leftrightarrow \kCoalesce{\semK}(k \monP \kCoalesce{\semK}(k'))$. Consider the definition of $\monP$. Every interval $\interval = [\tPoint, \tPoint +1)$ is assigned the annotation $\tSlice{\tPoint}(k) \monK \tSlice{\tPoint}(k')$. Applying  Lemma~\ref{lem:coalesce-properties} we get $\tSlice{\tPoint}(k) = \tSlice{\tPoint}(\kCoalesce{\semK}(k))$ and $\tSlice{\tPoint}(k') = \tSlice{\tPoint}(\kCoalesce{\semK}(k'))$. Thus, the equivalence holds.
By repeatedly applying this equivalence and the equivalences proven in Lemma~\ref{lem:coalesce-push}, all except the outermost K-coalesce operations can be removed resulting in an expression of the form $\kCoalesce{\semK}(e')$ where $e'$ does not contain any coalesce operations.
\end{proof}

\begin{exam}
  Consider the following query $\query = S - \projection_{sal}(\selection_{sal < sal'}(S \times \rename_{sal' \gets sal}(S))$  that returns the largest salary from relation $S$ as shown in Figure~\ref{fig:example_temp_coalesce} (consider the corresponding $\semTimeNIN$-relation using the annotation shown on the right in this figure coalesced as shown in Example~\ref{ex:k-coalesce}). Consider how the annotation of tuple $r = (50k)$ in the result of $\query$ is computed. Applying the definitions of difference, projection, and join over K-relations and denoting the database instance of $S$ as $D$, we obtain:
  \begin{align*}
    \query(D)(t) &= \kCoalesce{\semN}(S(t) \monP \kCoalesce{\semN}(\sum_{u = (v,w): u.sal = t}\\ &\kCoalesce{\semN}(\kCoalesce{\semN}((S(v) \multP S(w))) \multP (sal < sal')(u))))
  \end{align*}
  Pulling up coalesce we get:
  \begin{align*}
    \query(D)(t) &= \kCoalesce{\semN}(S(t) \monP{} \\ &\sum_{u = (v,w): u.sal = t} (S(v) \multP S(w)) \multP (sal < sal')(u))
  \end{align*}
\end{exam}

\section{Interaction of Our Approach with Query Optimization}
\label{sec:inter-our-appr}

In this section we briefly discuss the impact of our rewrite-based approach for
implementing snapshot semantics on query optimization. Importantly, the
combination of uniqueness and snapshot reducibility guarantees that queries are
equivalent wrt. our logical model precisely when they are equivalent under
regular $\semK$-relational semantics. As a special case of this result, queries
over $\semTimeNIN$ relations are equivalent iff they are equivalent under bag
semantics ($\semN$-relations). That is, any query equivalence that is
applied by classical database optimizers, e.g., join reordering,
can be applied to optimize snapshot queries.

That being said, we pass a rewritten query to the DBMS optimizer which is not
aware of the fact that this query implements snapshot semantics. The
preservation of bag semantics query equivalences does not necessary imply that
these rewritten queries can be successfully optimized by a general purpose query
optimizer. However, as we will explain in the following, our approach is
designed to aid the database optimizer in finding a successful plan. First off,
note that our rewrites essentially keep the structure of the input query intact
with the exception of the introduction of split before aggregation and
difference, and coalescing which is applied as a final step for every snapshot
query. Every other operator is preserved in the rewritten query, e.g., joins,
are rewritten into joins.

\begin{exam}\label{ex:spj-rewriting}
  Consider the following query $Q = \projection_{name, city} (person \join_{name=pName} livesAt \join_{address=aId} address)$ over relations
  \begin{center}{\ttfamily\upshape
    person(name, age, A\textsubscript{Begin}, A\textsubscript{End})\\
    livesAt(pName, address, A\textsubscript{Begin}, A\textsubscript{End})\\
    address(aId, city, zip, street, A\textsubscript{Begin}, A\textsubscript{End})
    }
  \end{center}

 This query returns for each person the city(ies) they live in. Applying $\reprRewr$ we get the query shown in Figure~\ref{fig:rewr-spj-example}. Note how the structure of the input query was preserved. The exception are the coalescing operator  at the end and the introduction of new projections. However, typically database optimizers will at least consider a transformation called subquery pull-up (called view merging in Oracle) which would pull-up and merge these projections. Thus, these projections do not hinder join reordering.
\end{exam}

\begin{figure*}[t]
  \centering

  \begin{tikzpicture}
[op/.style={anchor=south},
pro/.style={red,font=\footnotesize},
conn/.style={-,line width=1pt}]

\node[op] (c) at (0,3) {$\coalesceOp$};
\node[op] (p) at (0,2) {$\projection_{name, city, \attrBegin, \attrEnd}$};

\node[op] (p1) at (0,1) {$\projection_{name, age, pName, address, max(person.\attrBegin, livesIn.\attrBegin), min(person.\attrEnd, livesIn.\attrEnd)}$};
\node[op] (j1) at (0,0) {$\join_{name = pName \wedge person.\attrBegin < livesIn.\attrEnd \wedge livesIn.\attrBegin < person.\attrEnd}$};

\node[op] (p2) at (-4,-1) {$\projection_{name, age, pName, address, aId, city, zip, street}$};
\node[op] (j2) at (-4,-2) {$\join_{address=aId \wedge person.\attrBegin < livesIn.\attrEnd \wedge livesIn.\attrBegin < person.\attrEnd}$};

\node[op] (r1) at (-5,-3) {$person$};
\node[op] (r2) at (-3,-3) {$livesAt$};
\node[op] (r3) at (3,-3) {$address$};

\draw[conn] (p) to (c);
\draw[conn] (p) to (p1);

\draw[conn] (p1) to (j1);

\draw[conn] (j1) to (p1);
\draw[conn] (p2) to (j1);
\draw[conn] (r3) to (j1);

\draw[conn] (r1) to (j2);
\draw[conn] (r2) to (j2);
\draw[conn] (j2) to (p2);

\end{tikzpicture}
\\[3mm]

  \caption{Rewriting $\reprRewr(Q)$ for SPJ query $Q$ from Example~\ref{ex:spj-rewriting}}
  \label{fig:rewr-spj-example}
\end{figure*}
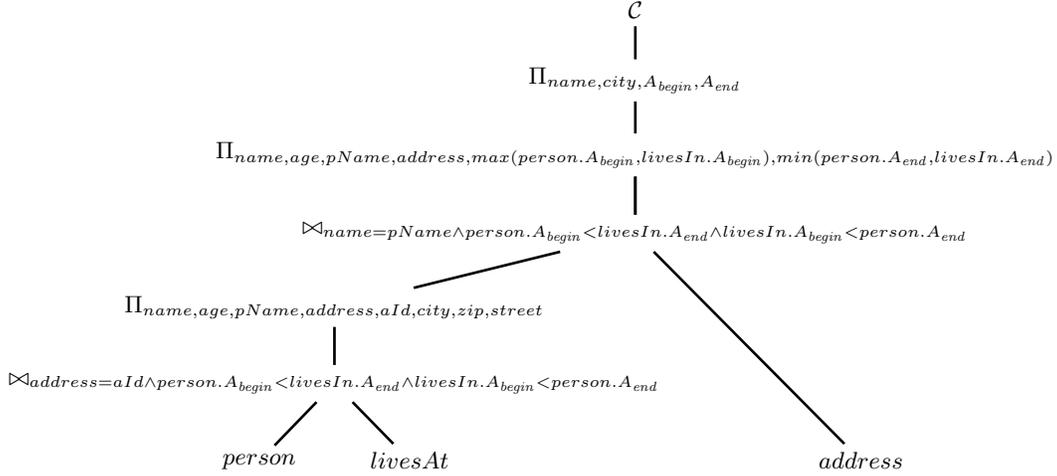

\section{SQL Implementations of Bag Coalescing and Split}
\label{sec:optimizations}

In the following, we explain our implementation of bag coalescing in SQL
using a step by step example. Afterwards, we present the implementation of
the split operator integrated with aggregation and (bag) difference.

\subsection{Bag Coalesce}
\label{sec:bag-coalesce}

Figure~\ref{fig:-bag-coalesce} shows the SQL code for computing bag coalescing
for a table recording the activity of production machines.
Figure~\ref{fig:intermeidate-result-bag-col} shows an example instance of this table and the intermediate and final results produced by the query for this instance. Here we assume that periods are stored as two timestamp attributes $\bm{t_{start}}$ and $\bm{t_{end}}$ recording the start and the end of the period. The input table \textbf{active}  with the schema (\textbf{mach}, $\bm{t_{start}}$, $\bm{t_{end}}$) is shown on the top-left of Figure~\ref{fig:intermeidate-result-bag-col}. Each row in the table records a time interval (from $\bm{t_{start}}$ to $\bm{t_{end}}$) during which a  machine (\textbf{mach}) is running. For convenience we show a timeline with the intervals encoded by this table.

Before explaining the steps of the SQL implementation, we review bag
coalescing. To coalesce an input we have to determine for each tuple $t$ its
annotation change points, i.e., the end points of maximum intervals during which
the multiplicity of the tuple does not change. Then for each adjacent pair of
change points we output a number of duplicates of tuple $t$ that is equal to the
number of duplicates of tuple $t$ in the input whose intervals cover the two
change points. This could be implemented as a native operator which splits each
tuples associated with a period into two tuples with the intervals end points
where each generated tuple is marked to indicate whether it represents an
interval start or end point. Then any aggregation algorithm can be applied to
calculate the number of intervals associated with a tuple that open and close at
a particular time point.  The output of this step is then sorted on the
non-temporal attributes and secondary on the timestamp attribute. The final
result is produced by scanning through the sorted output once outputting for
each tuple and adjacent pair of change points a number of duplicates determined
based on the number of intervals covering these change points which is
determined based on the counts of opening and closing intervals.  We leave a
native implementation and further optimizations (e.g., we could partition the
input on the non-temporal attributes and then process multiple such partitions
in parallel) to future work and now explain how our SQL implementation realizes
the computational steps outlined above.

\parttitle{Determine the Number of Opening and Closing Intervals Per Change Point}
In lines 3 - 24 of Figure~\ref{fig:intermeidate-result-bag-col} we compute the annotation change points for each tuple and the number of intervals that are opening and closing for each such change point. This is done by counting for each tuple and one of its change points the number of opening and closing intervals separately and then for each such pair merge the number of opening and closing counts into a single output tuple. Note that strictly speaking not all of the time points returned by this query are guaranteed to be annotation change points. The actual change points are computed in one of the following steps as explained below. For the example instance there is only one pair $(M1, 5)$ where time point $\tPoint$ (attribute \texttt{t}) is both the start and end point of an interval associated with tuple $(M1)$. As another example consider time point $6$ which is the end point of two intervals associated with tuple $(M2)$ corresponding to the last tuple in the result of subquery \texttt{change\_points}. The pre-aggregation before the union is merely a performance tweak. It turns a single aggregation over $2 \cdot \card{active}$ tuples into two aggregations over $\card{active}$ tuples.

\parttitle{Counting Open Intervals}
Line 26 - 35 of the query create the common table expression \texttt{num\_intervals} which returns the number of open intervals for a tuple per time point $\tPoint$. This is achieved by subtracting the number of intervals for this tuple with an end point that is less than or equal to $\tPoint$ (attribute \texttt{t}) from the number of intervals with a start point that is less than or equal to $\tPoint$. Intuitively, the number of open intervals for a tuple is the number of duplicates of the tuple that exist in the time interval between $\tPoint$ and the adjacent following change point. We compute these running sums using SQL's window functions partitioning the input on the non-temporal attributes (\texttt{mach} in this example) and within each partition order the tuple based on the timestamp \texttt{t} computing the aggregate over a window including all tuples with a timestamp less than or equal to \texttt{t}. For example, consider the first tuples in the instance of \texttt{num\_intervals} as shown in Figure~\ref{fig:intermeidate-result-bag-col}. This tuple records that there two duplicates of tuple $(M1)$ exist at time point $1$.

\parttitle{Removing Spurious Change Points}
Recall that bag coalescing determines maximal intervals during which the annotation (multiplicity in the case of bag semantics) of a tuple is constant. As shown in the example, \texttt{num\_intervals} may contain adjacent time points with the same number of open intervals which, according to the definition of $\semK$-coalescing, are not annotation change points. Subquery \texttt{diff\_previous} (Lines 38-47) computes the difference between the number of open intervals at a time point and the previous time point. Subquery  \texttt{changed\_intervals} (Lines 49-53) removes tuples where this difference is zero (the number of duplicates has not changed).

\parttitle{Reconstructing Intervals}
At this point in the computation we have calculated the set of annotation changepoints for each tuple and the number of duplicates of the tuple that exist during the time interval between each two adjacent annotation change points.
Subquery \texttt{pair\_points} (Lines 55-65) computes pairs of adjacent annotation change points.
For instance, the first tuple in the result for the example records that there exists two duplicates of tuple \texttt{(M1)} during time interval $[1,7]$. The subquery of \texttt{pair\_points} also returns tuples with $t_{start}$ equal to the last change point of teach tuple. These tuples are filtered in the \lstinline!WHERE! clause of the outer query. For example, the tuples marked in red in the result of the subquery as shown in Figure~\ref{fig:intermeidate-result-bag-col} are such tuples.

\parttitle{Generating Duplicates}
In the last step, we generate duplicates of  tuples based on the counts stored in attribute \texttt{$\#_{open}$}. One way to realize this would be to use a set-returning function that takes as input a tuple $t$ and a count $c$, and returns $c$ duplicates of $t$. While perfectly viable, to avoid the overhead of calling a user-defined function for every distinct output tuple, we use subquery \texttt{max\_seq} (Lines 67-72) to generate a table storing a sequence of numbers $\{1, \ldots, m\}$ where $m$ is the maximum number of duplicates of any tuple in the query result. We then join this table with \texttt{pair\_points} (lines 74-76). For the example database the maximum number of duplicates for any tuple and time point is $2$. Hence, subquery \texttt{max\_seq} returns $\{(1), (2)\}$. The final join with \texttt{pair\_points} then returns the appropriate number of duplicates for each tuple using the counts stored in $\#_{open}$, e.g., there are 2 duplicates of tuple $(M2)$ during time interval $[3,6]$.

\subsection{Split Operator Implementation}
\label{sec:split-oper-impl}

We first introduce our implementation of the split operator and then afterwards discuss the optimized versions of the aggregation and difference which incorporate split. We show the SQL code for $\normalizeOp_{mach}(active,active)$, i.e., splitting the intervals of relation \lstinline!active! based on its own interval boundaries for attribute \lstinline!mach! (the only attribute of this relation). Figure~\ref{fig:-normalization} shows the SQL code generated by our system and Figure~\ref{fig:intermeidate-result-normalize} shows an example database the intermediate results of produced by the SQL implementation for this example. In lines 2-9 we assign aliases to the left and right input. For this particular example, both inputs are table \lstinline!active!.

\parttitle{Computing Interval End Points}
The first of the computation (lines 10-23) generates the set of all interval end points for both inputs. Note that for this example where a relation is split wrt. itself the four-way union is not necessary and can be replaced with a two-way union. In our implementation we apply this optimization, but for sake for the example we show the four-way union to illustrate how the approach would work when a relation is split wrt. to another relation.

\parttitle{Creating Unique Identifiers for Intervals}
Next we assign a unique identifier to each tuple from the left input (lines 25-32). For instance, there are three such tuples in the example shown in Figure~\ref{fig:intermeidate-result-normalize}.

\parttitle{Pair Intervals with End Points}
We now join the left input with all endpoints we have computed beforehand (lines 34-46) such that each interval from the left input is paired with all end points it contains with the exception of the maximum point in the interval. Intuitively the purpose of this step is to creating sufficiently many duplicates of each input tuples to be able to generate the split versions of the interval for this tuple. Furthermore, the end points we have paired with an interval will be the starting points of the split intervals. In Figure~\ref{fig:intermeidate-result-normalize} we highlight tuples with colors to indicate which tuples correspond to the same input interval. For example, the first two tuples in the result of \texttt{split\_points} correspond to the tuple with id $1$ and the  starting points of the two intervals this interval will be split into (end point $4$ is contained in the interval $[1,7]$).

\parttitle{Generating Split Intervals}
Finally, we adjust the start ($t_{start}$) and end points ($t_{end}$) of each interval produced in the previous step (lines 48-52). The start point is set to the time point $t$ (the time point from the set of interval end points we have paired with the interval) and the end point is the next larger time point associated with the same interval identifier (or the end point of the interval is no such time point exists). For example, for the first tuple from the result of subquery \texttt{split\_points} we output tuple \texttt{(M1,1,4)}. As can be seen in the timeline representation of the result shown on the bottom right of Figure~\ref{fig:intermeidate-result-normalize} in the result of split any two intervals associated with the same values of the non-temporal attributes are either equal or disjoint.

\subsection{Combining Split with Temporal Aggregation}
\label{sec:combining-split-with}

There is synergy in combining the split operator with temporal aggregation. The resulting implementation is similar to temporal aggregation algorithms which utilize end point indexes (e.g., aggregation over a timeline index~\cite{DBLP:conf/sigmod/KaufmannMVFKFM13}). These approaches calculate the result of an aggregation function over time using ``sweeping'' by sorting the endpoints of intervals on time and then scan over the data in sort order adding the values of tuples whose intervals start at the current  point in time to the current aggregation result and subtract the values of tuples whose intervals end at this point in time. Note that this only can be applied to aggregation functions like sum and count where it is possible to retract a value (the underlying function, e.g., addition in the case of sum, has an inverse). For aggregation functions min and max it is necessary to maintain a list of previously seen values (although it is not necessary to keep all previous values~\cite{DBLP:conf/ssd/PiatovH17}). We do not use the sweeping technique for min and max, but still apply the pre-aggregation optimization described below.
We explain how to combine split with aggregation using the example query shown in Figure~\ref{fig:-query-normalization_agg} which computes the average consumption (\texttt{consum}) of machines.

\parttitle{Pre-aggregation}
For aggregation functions like sum and count that are commutative, associative, and where the underlying operation has an inverse, we can compute pre-aggregate the input data before computing split points. For that we group on the input query's group-by attribute plus the attributes $t_{start}$ and $t_{end}$ which store the end points of a tuple's period. The pre-aggregation step return partial aggregation results for each list of group-by attribute values and period that occurs with this group. During split these periods may be further subdivided and the final aggregation results will be computed by accumulating results for these subdivisions. For aggregation functions like average that do not fulfill the conditions required for pre-aggregation, but which can be computed by evaluating an arithmetic  expression over the result of other aggregation functions that do, we can still apply this trick to calculate the other aggregation functions and delay the computation of the aggregation we are actually interested in until the end. For example, the query shown in Figure~\ref{fig:-query-normalization_agg} computes an average that can be computed as $sum/count$. Thus, as shown in lines 2-12 of Figure~\ref{fig:-normalization_agg} we compute two aggregation functions grouping on \texttt{mach}, $t_{start}$, and $t_{end}$. The example instance of table \lstinline!active! contains two tuples belonging to the same group which also have the same period: \texttt{(M1,10,1,5)} and \texttt{(M1,20,1,5)}. Based on these two tuples we compute the pre-aggregated result \texttt{(M1,30,1,5)}. Note that no matter what aggregation function we are computing, we always will also compute count since it is needed later in the implementation to determine intervals without results for aggregation with group-by.

\parttitle{Calculate Increase and Decrease of Aggregation Values}
Our approach for computing aggregation functions sum and count uses a sweeping technique which scans over the set of all interval end points paired with in time order. We keep a partial aggregation result and for each time point adds the values of the aggregation input attribute for tuples with intervals that open at this time point and ``retracts'' the values of aggregation input attributes for tuples with intervals that close at this time point. For this purpose, we aggregate to total increase (opening intervals) and retraction (closing intervals) for each time point and group. Consider lines 14-38 in Figure~\ref{fig:-normalization_agg}. Since we are computing aggregation functions sum and count, we store for each time point the increase/decrease for both functions. For that, we use attributes \texttt{add\_c} and \texttt{dec\_c} (count) and \texttt{add\_s} and \texttt{dec\_s} (sum). For interval start points we set attributes recording decrease to $0$ while for end points we points we set the \texttt{add\_\textasteriskcentered} attributes to $0$. Afterwards, we compute the total increase and decrease per time point using aggregation. For instance, consider time point $5$ in the example shown in Figure~\ref{fig:intermeidate-result-normalization-agg}. Two intervals with a total consumption of $30$ close at this time point and one new interval opens with a consumption of $40$. This is encoded in the third tuple \texttt{(M1,1,40,2,30,5)} in the result of subquery \texttt{increase\_decrease}.

\parttitle{Compute Accumulative Totals}
We then calculate the aggregation function result for each group and each point in time where at least one interval for this group starts or ends as the sum of the increases up to and including this point in time and subtract from that the sum of decreases. For example, the third tuple in the result of subquery \texttt{accumulation} shows that at time 5 there are 2 open intervals with their \texttt{consum} values summing up to 80.

\parttitle{Generate Output Intervals}
Finally, we pair each split point and its count and sum with the following split point to produce output intervals and compute the average as the sum divided by the count. This is realized by the inner query of the subquery shown in lines 56-69 in Figure~\ref{fig:-normalization_agg}. Note that it may be the case that no periods start at a given split point. In this case the count would be $0$ (no intervals open during between this time point and the next split point). This is dealt with by the \lstinline!WHERE! clause of the outer query which filters out tuples where the count is 0.

\parttitle{Aggregation Without Group-by}
Recall that for aggregation without group-by we have to return results for time periods where the relation is empty. This is easily achieved in our implementation by adding a dummy interval $[\tMin, \tMax]$ associated with the neutral value of the aggregation function to the result of subquery \texttt{pre\_agg} ($0$ for count and $null$ otherwise). For time periods where the input relation is empty the split operator creates an interval covering the ``gap'' and will return the value we did associate with the dummy interval which is chosen to correspond to the result of an aggregation over an input relation as defined in the SQL standard. For periods where the input is non-empty the result is not affected since the dummy interval is associated with the neutral value of an aggregation function. An additional change that is required is that the final \lstinline!WHERE! clause (Figure~\ref{fig:-normalization_agg}, line 68) has to be changed to \lstinline!$t_{end}$ IS NOT NULL! to (i) return results for gaps (where the count is $0$) and  not return a tuple where $t_{start}$ is the last split point (equal to $\tMax$ for the case of aggregation without group-by).

\subsection{Combining Split with Difference}
\label{sec:combining-split-with-1}

To explain the combined implementation of split with bag difference we evaluate the example query shown in Figure~\ref{fig:-query-normalization_sefdiff} under snapshot semantics. The query returns all machines and their consumption  removing consumptions of machines which have been incorrectly recorded (table \lstinline!faulty!).
The SQL implementation for the snapshot version of this query which uses combined split and difference is shown in Figure~\ref{fig:-normalization_sefdiff}. We show an example instance and intermediate results for the query in Figure~\ref{fig:intermeidate-result-normalization-set-diff}. We combine the split operator with bag difference by reducing bag difference to the problem of count aggregation. Consider a snapshot at time $\tPoint$ and tuple $t$ and assume that $t$ appears in the left input with multiplicity $n$ and in the right input with multiplicity $m$ at $\tPoint$. Then we have to return $max(0, n-m)$ duplicates of tuple $t$ for this snapshot. This can be achieved by computing counts for each interval end point in the left and in the right input and then subtracting the counts of the right hand side from the counts of the left hand side. The combination of counting and split essentially uses the approach described in Appendix~\ref{sec:combining-split-with}.

\parttitle{Computing Changes in Multiplicities}
Subquery \texttt{end\_point\_counts} (Figure~\ref{fig:-normalization_sefdiff}, lines 15-36) computes the number of opening and closing intervals for both inputs. We count the end points from the right input negatively. For instance, in the example the second tuple in the result of this subquery records that there are two opening intervals for tuple \texttt{(M1, 40)} at time $1$.

\parttitle{Aggregate Multiplicities}
Next, we use subquery \texttt{acc\_counts} aggregate the multiplicities to get a single count of opening and closing intervals per time point (lines 49-63).
Note that in the result of this subquery both $\#_{open}$ and $\#_{close}$ may be negative. This has to be interpreted as that there is a larger number of opening/closing intervals from the right input than the left input.

\parttitle{Generating Intervals}
We now pair adjacent time points (lines 49-63) to create intervals and compute the final multiplicity for each tuple.

\parttitle{Final Result}
To compute the final result of the difference operator we have to create the right amount of duplicates for each tuple. The method we apply here is exactly the same as the one applied for aggregation: we join the result of subquery \texttt{intervals} with a table contain numbers $1$ to $n$ where $n$ is the maximum multiplicity across all tuples and time points.

\subsection{Coalesce after Split}
\label{sec:coalesce-normalization}

We can also apply coalesce (introduced in Section~\ref{sec:bag-coalesce}) after split (introduced in Section~\ref{sec:split-oper-impl}), for example, we apply split the table \textbf{active} with the schema (\textbf{mach}, $\bm{t_{start}}$, $\bm{t_{end}}$) and apply coalesce afterwards, Figure~\ref{fig:coalesce-normalization} shows this workflow.

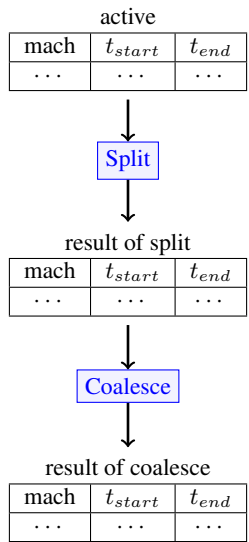
\begin{figure}[t]
\centering
\begin{tikzpicture}
[
      draw,
      block/.style={draw, rectangle, color=blue, fill=shadeblue},
      tuple/.style={draw, ellipse, color=red, fill=shadered},
      downaw/.style={->,fill=grey,line width=0.35mm},
]
      \node[] (active1) at (0,6)
       {
          \begin{tabular}{ | c | c | c | }
                \multicolumn{3}{ c }{active} \\
                \hline
                mach & $t_{start}$ & $t_{end}$ \\
                \hline
                $\cdots$ & $\cdots$ & $\cdots$ \\
                \hline
            \end{tabular}
      };

      \node[block] (normalization) at (0,4.5) { Split };

      \node[] (active2) at (0,3)
      {
          \begin{tabular}{ | c | c | c | }
                \multicolumn{3}{ c }{result of split} \\
                \hline
                mach & $t_{start}$ & $t_{end}$ \\
                \hline
                $\cdots$ & $\cdots$ & $\cdots$ \\
                \hline
            \end{tabular}
      };

      \node[block] (coalesce) at (0,1.5) {Coalesce};

      \node[] (active3) at (0,0)
        {
          \begin{tabular}{ | c | c | c | }
                \multicolumn{3}{ c }{result of coalesce} \\
                \hline
                mach & $t_{start}$ & $t_{end}$ \\
                \hline
                $\cdots$ & $\cdots$ & $\cdots$ \\
                \hline
            \end{tabular}
      };

       \draw[downaw] (active1) to (normalization);
       \draw[downaw] (normalization) to (active2);
       \draw[downaw] (active2) to (coalesce);
       \draw[downaw] (coalesce) to (active3);

\end{tikzpicture}

  \caption{Coalesce after split}
  \label{fig:coalesce-normalization}
\end{figure}

\clearpage

\begin{figure}[t]
  \centering
\begin{minipage}{1\linewidth}
\lstset{style=psqlcolor,basicstyle=\upshape\ttfamily\scriptsize,numbers=left}
  \begin{lstlisting}
-- Count opening/closing intervals per change point
WITH
change_points (mach, $\#_{start}$, $\#_{end}$, t) AS
(
    SELECT mach,
           sum($\#_{start}$) AS $\#_{start}$,
           sum($\#_{end}$) AS $\#_{end}$,
           t
    FROM (
          SELECT mach,
                 count(*) AS $\#_{start}$,
                 0 AS $\#_{end}$,
                 $t_{start}$ AS t
          FROM active
          GROUP BY $t_{start}$, mach
          UNION ALL
          SELECT mach,
                 0 AS $\#_{start}$,
                 count(*) AS $\#_{end}$,
                 $t_{end}$ AS t
          FROM active
          GROUP BY $t_{end}$, mach)
    GROUP BY t, mach
),
-- Count the open intervals per tuple and time point
num_intervals (mach, $\#_{open}$, t) AS
(
    SELECT DISTINCT mach,
           sum($\#_{start}$) OVER w
           - sum($\#_{end}$) OVER w AS $\#_{open}$,
           t
    FROM change_points
    WINDOW w AS (PARTITION BY mach ORDER BY t
                 RANGE UNBOUNDED PRECEDING)
),
-- Compute the difference between the number of open
-- intervals at t and at the previous change point
diff_previous (mach, $\#_{open}$, diffPrevious, t) AS
(
    SELECT mach,
           $\#_{open}$,
           COALESCE($\#_{open}$ - (lag($\#_{open}$,1) OVER w,
                    -1) AS diffPrevious,
           t
    FROM num_intervals
    WINDOW w AS (PARTITION BY mach ORDER BY t)
),
-- Remove unchanged intervals
changed_intervals (mach, t, $\#_{open}$, diffPrevious) AS
(
    SELECT * FROM diff_previous
    WHERE diffPrevious != 0
),
-- Pair each change point with the following change point
pair_points (mach, $\#_{open}$, $t_{start}$, $t_{end}$) AS
(
    SELECT  mach, $\#_{open}$, $t_{start}$, $t_{end}$
    FROM (SELECT mach, $\#_{open}$, t AS $t_{start}$,
                 last_value(t) OVER w AS $t_{end}$
          FROM changed_intervals)
    WHERE $t_{end}$ IS NOT NULL
    WINDOW w AS (PARTITION BY mach
                 ORDER BY t
                 ROWS BETWEEN 1 FOLLOWING AND 1 FOLLOWING)
),
-- Create a sequence (1, ..., max(#open))
max_seq (n) AS
(
    SELECT n
    FROM (SELECT max($\#_{open}$) AS mopen FROM pair_points) x,
         generate_sequence(1,mopen) AS y(n)
),
-- Create the right number of duplicates for each tuple
SELECT mach, $t_{start}$, $t_{end}$
FROM pair_points p, max_seq s
WHERE p.$\#_{open}$ >= s.n
\end{lstlisting}
\end{minipage}
  \caption{Applying the SQL implementation of bag coalescing to the example table \lstinline!active!.}
  \label{fig:-bag-coalesce}
\end{figure}

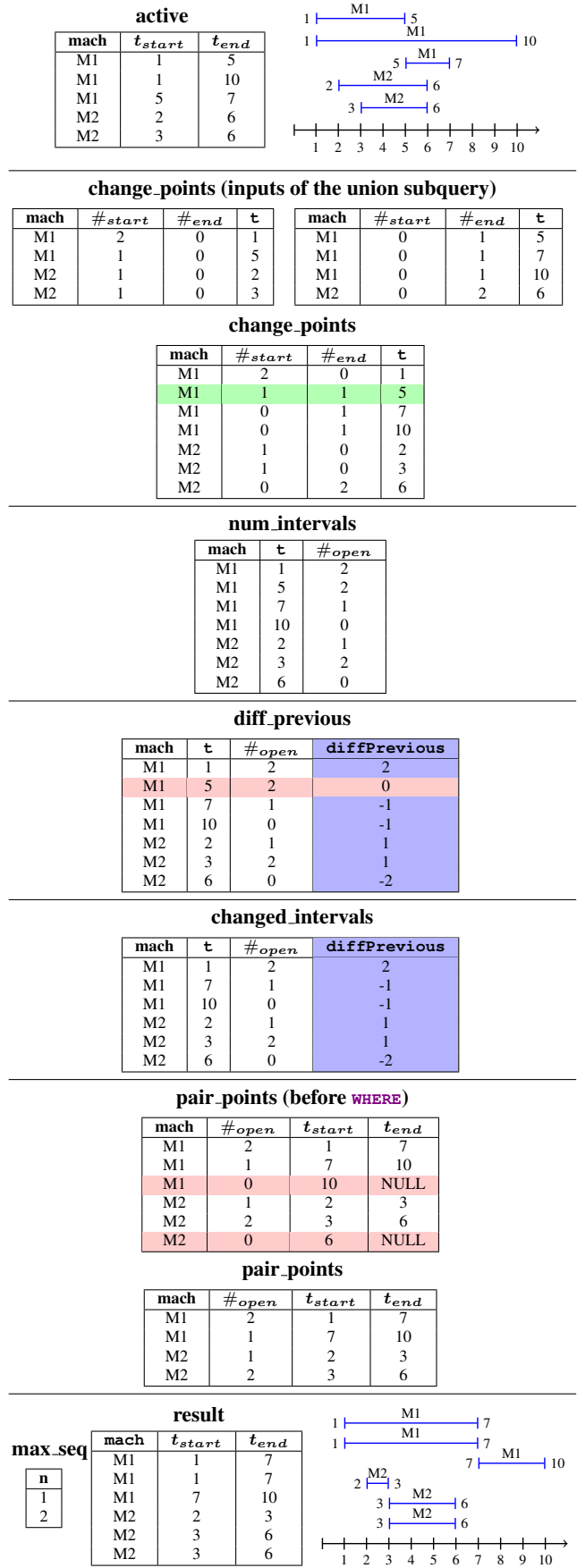
\begin{figure}[t]
 \centering
\begin{minipage}{1.0\linewidth}
\centering
\begin{minipage}{0.43\linewidth}
\centering
\textbf{active}\\[1mm]
{\scriptsize
  \begin{tabular}{|c|c|c|}
    \hline
\thead{mach} & \tthead{{$\bm{t_{start}}$}}& \tthead{{$\bm{t_{end}}$}}  \\ \hline
 M1 & 1 & 5  \\
 M1 & 1 & 10 \\
    M1 & 5 & 7 \\
  M2 & 2 & 6 \\
  M2 & 3 & 6 \\
 \hline
\end{tabular}
}
\end{minipage}
\vspace*{1mm}
\begin{minipage}{0.45\linewidth}
\centering
\begin{center}
  \resizebox{1\linewidth}{!}{
       \centering
    \begin{tikzpicture} [
    ]

    \def\xscaler{0.5}
    \def\yscaler{0.5}
    \def\axispos{-6 * \yscaler}

\draw[|->, thick] (-0,\axispos) -- (11 * \xscaler,\axispos);
\foreach \x in {1,...,10}
    \draw[thick] (\x * \xscaler,0.3 * \yscaler + \axispos) -- (\x * \xscaler,-0.3 * \yscaler + \axispos) node[below] {\x};

\foreach \b/\s/\pos/\tup in {1/5/1/M1,1/10/2/M1,5/7/3/M1,2/6/4/M2,3/6/5/M2}
    \draw[thick,blue,|-|]   (\b * \xscaler,-1 * \yscaler * \pos)  node[left,black]{\b} -- node[above,black]{\tup} (\s * \xscaler,-1 * \yscaler * \pos) node[right,black]{\s}
;

    \end{tikzpicture}
  }
\end{center}

\end{minipage} \\[1mm]
\hrule$ $\\[1mm]
\begin{minipage}{1\linewidth}
\centering
\textbf{change\_points (inputs of the union subquery)}\\[1mm]
{\scriptsize
\begin{minipage}{1.0\linewidth}
    \centering
    \begin{minipage}{0.49\linewidth}
    \begin{tabular}{|c|c|c|c|}
      \hline
      \thead{mach} & \tthead{{$\bm{\#_{start}}$}} & \tthead{{$\bm{\#_{end}}$}} & \tthead{{t}} \\
      \hline
      M1  & 2 & 0  & 1 \\
       M1  & 1 & 0 & 5 \\
 M2  & 1 & 0 & 2 \\
      M2  & 1 & 0 & 3 \\
 \hline
\end{tabular}
\end{minipage}
\begin{minipage}{0.49\linewidth}
\begin{tabular}{|c|c|c|c|}
      \hline
      \thead{mach} & \tthead{{$\bm{\#_{start}}$}} & \tthead{{$\bm{\#_{end}}$}} & \tthead{{t}} \\
      \hline
  M1  & 0 & 1 & 5  \\
   M1  & 0 & 1 & 7  \\
 M1  & 0 & 1 & 10 \\
 M2  & 0 & 2 & 6 \\
 \hline
\end{tabular}
\end{minipage}
\end{minipage}
}
\end{minipage} \\[1mm]
\begin{minipage}{1\linewidth}
\centering
\textbf{change\_points}\\[1mm]
{\scriptsize
    \begin{tabular}{|c|c|c|c|}
      \hline
      \thead{mach} & \tthead{{$\bm{\#_{start}}$}} & \tthead{{$\bm{\#_{end}}$}} & \tthead{{t}} \\
      \hline
 M1 & 2 & 0  & 1 \\
 \rowcolor{green!30} M1 & 1 & 1 & 5 \\
      M1 & 0 & 1 & 7  \\
      M1 & 0 & 1 & 10 \\
 M2 & 1 & 0 & 2 \\
      M2 & 1 & 0 & 3 \\
 M2 & 0 & 2 & 6 \\
 \hline
\end{tabular}
}
\end{minipage} \\[1mm]
\hrule$ $\\[1mm]
\begin{minipage}{1\linewidth}
\centering
\textbf{num\_intervals}\\[1mm]
{\scriptsize
    \begin{tabular}{|c|c|c|}
      \hline
      \thead{mach} & \tthead{t} & \tthead{$\bm{\#_{open}}$} \\
      \hline
 M1 & 1 & 2  \\
 M1 & 5 & 2  \\
 M1 & 7 & 1  \\
 M1 & 10 & 0   \\
 M2 & 2 & 1  \\
 M2 & 3 & 2 \\
 M2 & 6 & 0  \\
 \hline
\end{tabular}
}
\end{minipage} \\[1mm]
\hrule$ $\\[1mm]
\begin{minipage}{1\linewidth}
\centering
\textbf{diff\_previous}\\[1mm]
{\scriptsize
    \begin{tabular}{|c|c|c|>{\columncolor{blue!30}}c|}
      \hline
      \thead{mach} & \tthead{{t}} & \tthead{{$\bm{\#_{open}}$}} & \tthead{{diffPrevious}} \\
      \hline
 M1 & 1 & 2  & 2 \\
    \rowcolor{LightRed}
 M1 & 5 & 2 & 0 \\
 M1 & 7 & 1 & -1 \\
 M1 & 10 & 0 & -1  \\
 M2 & 2 & 1 & 1 \\
 M2 & 3 & 2 & 1 \\
 M2 & 6 & 0 & -2 \\
 \hline
\end{tabular}
}
\end{minipage} \\[1mm]
\hrule$ $\\[1mm]
\begin{minipage}{1\linewidth}
\centering
\textbf{changed\_intervals}\\[1mm]
{\scriptsize
    \begin{tabular}{|c|c|c|>{\columncolor{blue!30}}c|}
      \hline
      \thead{mach} & \tthead{{t}} & \tthead{{$\bm{\#_{open}}$}} & \tthead{{diffPrevious}} \\
      \hline
 M1 & 1 & 2  & 2 \\
 M1 & 7 & 1 & -1 \\
 M1 & 10 & 0 & -1  \\
 M2 & 2 & 1 & 1 \\
 M2 & 3 & 2 & 1 \\
 M2 & 6 & 0 & -2 \\
 \hline
\end{tabular}
}
\end{minipage} \\[1mm]
\hrule$ $\\[1mm]
\begin{minipage}{1\linewidth}
\centering
\textbf{pair\_points (before \lstinline!WHERE!)}\\[1mm]
{\scriptsize
    \begin{tabular}{|c|c|c|c|c|}
      \hline
      \thead{mach}& \tthead{{$\bm{\#_{open}}$}} & \tthead{{$\bm{t_{start}}$}} & \tthead{{$\bm{t_{end}}$}}  \\
      \hline
 M1 & 2 & 1 & 7 \\
 M1 & 1 & 7 & 10 \\
    \rowcolor{LightRed}
 M1 & 0 & 10 & NULL \\
 M2 & 1 & 2 & 3 \\
 M2 & 2 & 3 & 6 \\
   \rowcolor{LightRed}
 M2 & 0 & 6 & NULL \\
 \hline
\end{tabular}
}
\end{minipage} \\[1mm]
\begin{minipage}{1\linewidth}
\centering
\textbf{pair\_points}\\[1mm]
{\scriptsize
    \begin{tabular}{|c|c|c|c|c|}
      \hline
\thead{mach}& \tthead{{$\bm{\#_{open}}$}} & \tthead{{$\bm{t_{start}}$}} & \tthead{{$\bm{t_{end}}$}}  \\
      \hline
 M1 & 2 & 1 & 7 \\
 M1 & 1 & 7 & 10 \\
 M2 & 1 & 2 & 3 \\
 M2 & 2 & 3 & 6 \\
 \hline
\end{tabular}
}
\end{minipage} \\[1mm]
\hrule$ $\\[1mm]
\begin{minipage}{0.12\linewidth}
\centering
\textbf{max\_seq}\\[1mm]
{\scriptsize
    \begin{tabular}{|c|}
      \hline
      \thead{n} \\
      \hline
 1   \\
 2  \\
 \hline
\end{tabular}
}
\end{minipage}
\begin{minipage}{0.4\linewidth}
\centering
\textbf{result}\\[1mm]
{\scriptsize
    \begin{tabular}{|c|c|c|}
      \hline
       \tthead{{mach}}& \tthead{{$\bm{t_{start}}$}} & \tthead{{$\bm{t_{end}}$}} \\
      \hline
 M1 & 1 & 7  \\
 M1 & 1 & 7 \\
 M1 & 7 & 10 \\
 M2 & 2 & 3 \\
 M2 & 3 & 6 \\
 M2 & 3 & 6 \\
 \hline
\end{tabular}
}
\end{minipage}
\begin{minipage}{0.45\linewidth}
\centering
\begin{center}
  \resizebox{1\linewidth}{!}{
       \centering
    \begin{tikzpicture} [
    ]

    \def\xscaler{0.5}
    \def\yscaler{0.45}
    \def\axispos{-7 * \yscaler}

\draw[|->, thick] (-0,\axispos) -- (11 * \xscaler,\axispos);
\foreach \x in {1,...,10}
    \draw[thick] (\x * \xscaler,0.3 * \yscaler + \axispos) -- (\x * \xscaler,-0.3 * \yscaler + \axispos) node[below] {\x};

\foreach \b/\s/\pos/\tup in {1/7/1/M1,1/7/2/M1,7/10/3/M1,2/3/4/M2,3/6/5/M2,3/6/6/M2}
    \draw[thick,blue,|-|]   (\b * \xscaler,-1 * \yscaler * \pos)  node[left,black]{\b} -- node[above,black]{\tup} (\s * \xscaler,-1 * \yscaler * \pos) node[right,black]{\s}
;

    \end{tikzpicture}
  }
\end{center}
\end{minipage}
\end{minipage} \\[2mm]
\caption{Example database and intermediate results of the query implementing bag coalescing for table \lstinline!active!.}
\label{fig:intermeidate-result-bag-col}
\end{figure}

\begin{figure}
  \centering
\begin{minipage}{1\linewidth}
\lstset{style=psqlcolor,basicstyle=\scriptsize\upshape\ttfamily,numbers=left,tabsize=4}
  \begin{lstlisting}
-- name left and right inputs
WITH
left AS (
    SELECT * FROM active
),
right AS (
    SELECT * FROM active
),
-- Gather change points
end_points AS
(
	SELECT mach, $t_{start}$ AS t
	FROM left
	UNION
	SELECT mach, $t_{end}$ AS t
	FROM left
	UNION
	SELECT mach, $t_{start}$ AS t
	FROM right
	UNION
	SELECT mach, $t_{end}$ AS t
	FROM right
),
-- Gather intervals of LEFTY with a unique ID
interval_id AS
(
	SELECT row_number() OVER (ORDER BY 1) AS id,
           mach,
           $t_{start}$,
           $t_{end}$
	FROM left
),
-- Join intervals with change points
split_points AS
(
    SELECT l.id,
           l.mach,
           l.$t_{start}$,
           l.$t_{end}$,
           c.t
	FROM interval_id l,
	     end_points c
	WHERE c.mach = l.mach
	      AND c.T >= l.$t_{start}$
	      AND c.T < l.$t_{end}$
)
-- Produce output by input on change points
SELECT mach,
       t AS $t_{start}$,
       COALESCE(lead(t) OVER w, $t_{end}$) AS $t_{end}$
FROM split_points
WINDOW w AS (PARTITION BY id ORDER BY t) ;
\end{lstlisting}
\end{minipage}
  \caption{SQL implementation of the split operator applied to example table \lstinline!active!.}
  \label{fig:-normalization}
\end{figure}

\begin{figure}[t]
 \centering
\begin{minipage}{1.0\linewidth}
\centering
\begin{minipage}{1\linewidth}
\centering
\begin{minipage}{0.49\linewidth}
\centering
  \textbf{active}\\[1mm]
{\scriptsize
\begin{tabular}{|c|c|c|}
 \hline
\thead{mach} & \tthead{$\bm{t_{start}}$}& \tthead{$\bm{t_{end}}$}  \\
 \hline
 M1 & 1 & 7  \\
 M1 & 4 & 9 \\
 M2 & 2 & 8 \\
 \hline
\end{tabular}
}
\end{minipage}
\begin{minipage}{0.49\linewidth}
\begin{center}
  \resizebox{1\linewidth}{!}{
       \centering
    \begin{tikzpicture} [
    ]

    \def\xscaler{0.5}
    \def\yscaler{0.5}
    \def\axispos{-4 * \yscaler}

\draw[|->, thick] (-0,\axispos) -- (11 * \xscaler,\axispos);
\foreach \x in {1,...,10}
    \draw[thick] (\x * \xscaler,0.3 * \yscaler + \axispos) -- (\x * \xscaler,-0.3 * \yscaler + \axispos) node[below] {\x};

\foreach \b/\s/\pos/\tup in {1/7/1/M1,4/9/2/M1,2/8/3/M2}
    \draw[thick,blue,|-)]   (\b * \xscaler,-1 * \yscaler * \pos)  node[left,black]{\b} -- node[above,black]{\tup} (\s * \xscaler,-1 * \yscaler * \pos) node[right,black]{\s}
;

    \end{tikzpicture}
  }
\end{center}
\end{minipage}

\end{minipage}\\[1mm]
\hrule$ $\\[1mm]
\begin{minipage}{1\linewidth}
\centering
\textbf{change\_points}\\[1mm]
{\scriptsize
\begin{tabular}{|c|c|}
 \hline
\thead{mach} & \tthead{{t}} \\
 \hline
M1   & 1\\
 M1   & 4\\
 M1   & 7\\
 M1   & 9\\
 M2   & 2\\
 M2   & 8\\
  \hline
\end{tabular}
}
\end{minipage}\\[1mm]
\hrule$ $\\[1mm]
\begin{minipage}{1\linewidth}
\centering
\textbf{interval\_id}\\[1mm]
{\scriptsize
\begin{tabular}{|c|c|c|c|c|}
 \hline
\thead{id} & \tthead{{mach}}& \tthead{$\bm{t_{start}}$} & \tthead{$\bm{t_{end}}$} \\
 \hline
 1 & M1 & 1 & 7  \\
 2 & M1 & 4 & 9 \\
 3 & M2 & 2 & 8 \\
 \hline
\end{tabular}
}
\end{minipage}\\[1mm]
\hrule$ $\\[1mm]
\begin{minipage}{1\linewidth}
\centering
\textbf{time\_points}\\[1mm]
{\scriptsize
\begin{tabular}{|c|c|c|c|c|}
 \hline
\thead{id} & \tthead{{mach}}& \tthead{$\bm{t_{start}}$}& \tthead{$\bm{t_{end}}$}& \tthead{{t}} \\
  \hline
\rowcolor{green!30}  1 & M1   &      1 &    7 & 1\\
\rowcolor{green!30}  1 & M1   &      1 &    7 & 4\\
 \rowcolor{LightRed} 2 & M1   &      4 &    9 & 4\\
\rowcolor{LightRed}  2 & M1   &      4 &    9 & 7\\
\rowcolor{yellow}  3 & M2   &      2 &    8 & 2\\
 \hline
\end{tabular}
}
\end{minipage} \\[1mm]
\hrule$ $\\[1mm]
\begin{minipage}{1\linewidth}
\centering
\begin{minipage}{0.49\linewidth}
  \centering
\textbf{result}\\[1mm]
{\scriptsize
\begin{tabular}{|c|c|c|}
 \hline
  \thead{mach} & \tthead{$\bm{t_{start}}$}& \tthead{$\bm{t_{end}}$}\\
  \hline
  \rowcolor{green!30}
  M1 & 1 & 4\\
  \rowcolor{green!30}
  M1 & 4 & 7 \\
  \rowcolor{LightRed}
  M1 & 4 & 7  \\
  \rowcolor{LightRed}
  M1 & 7 & 9 \\
 \rowcolor{yellow}
  M2 & 2 & 8 \\
 \hline
\end{tabular}
}
\end{minipage}
\begin{minipage}{0.49\linewidth}
\begin{center}
  \resizebox{1\linewidth}{!}{
       \centering
    \begin{tikzpicture} [
    ]

    \def\xscaler{0.5}
    \def\yscaler{0.5}
    \def\axispos{-6 * \yscaler}

\draw[|->, thick] (-0,\axispos) -- (11 * \xscaler,\axispos);
\foreach \x in {1,...,10}
    \draw[thick] (\x * \xscaler,0.3 * \yscaler + \axispos) -- (\x * \xscaler,-0.3 * \yscaler + \axispos) node[below] {\x};

\foreach \b/\s/\pos/\tup in {1/4/1/M1,4/7/2/M1,4/7/3/M1,7/9/4/M1,2/8/5/M2}
    \draw[thick,blue,|-)]   (\b * \xscaler,-1 * \yscaler * \pos)  node[left,black]{\b} -- node[above,black]{\tup} (\s * \xscaler,-1 * \yscaler * \pos) node[right,black]{\s}
;

    \end{tikzpicture}
  }
\end{center}
\end{minipage}

\end{minipage} \\[1mm]
\hrule$ $\\[1mm]
\end{minipage} \\[2mm]
\caption{Example table \lstinline!active! and the (intermediate) results of the SQL query implementing the split operator.}
\label{fig:intermeidate-result-normalize}
\end{figure}
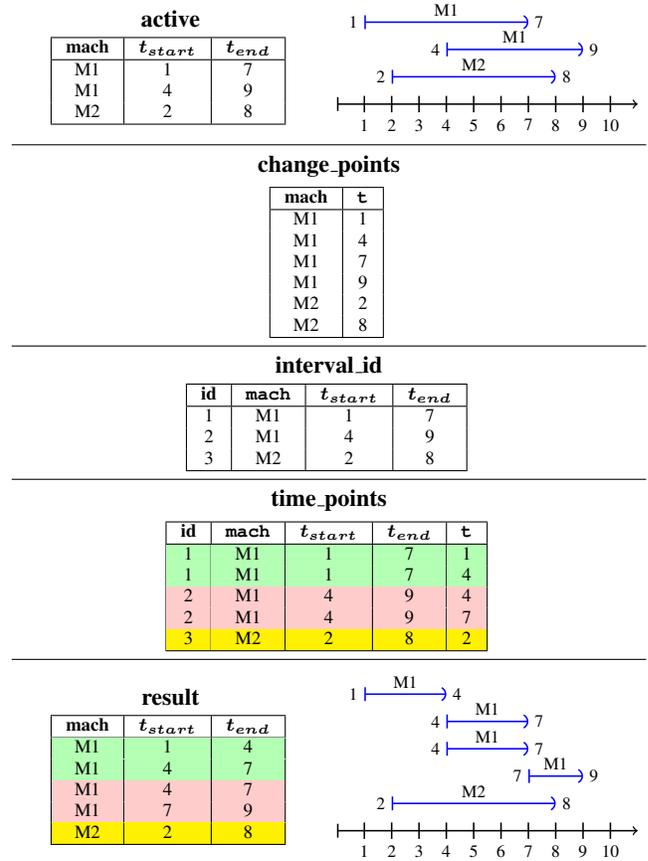

\clearpage

\begin{figure}
  \centering
\begin{minipage}{1\linewidth}
\lstset{tabsize=4,style=psqlcolor,basicstyle=\scriptsize\upshape\ttfamily,,numbers=left}
  \begin{lstlisting}
-- Pre-aggregate before splitting
WITH
pre_agg (mach, c, s, $t_{start}$, $t_{end}$) AS
(
	SELECT mach,
				 count(*) AS c,
				 sum(consum) AS s,
				 $t_{start}$,
				 $t_{end}$
	FROM active
	GROUP BY mach, $t_{start}$, $t_{end}$
),
-- Compute amount of increase/decrease at each time point
increase_decrease (mach, add_c, add_s, dec_c, dec_s, t) AS
(
	SELECT mach,
				 sum(add_c) AS add_c,
				 sum(add_s) AS add_s,
				 sum(dec_c) AS dec_c,
				 sum(dec_s) AS dec_s,
				 t
	FROM (SELECT mach,
							 c AS add_c,
				 			 s AS add_s,
				 			 0 AS dec_c,
				 			 0 AS dec_s,
				 			 $t_{start}$ AS t
				FROM pre_agg
				UNION ALL
				SELECT mach,
							 0 AS add_c,
				 			 0 AS add_s,
				 			 c AS dec_c,
				 			 s AS dec_s,
				 			 $t_{end}$ AS t
				FROM pre_agg)
	GROUP BY mach, t
),
-- Calculate accumulative total for interval start
-- points up to and including time point t and
-- subtract the total for "closing" intervals
accumulation (mach, c, s, t) AS
(
    SELECT mach
				   sum(add_c) OVER w
				   - sum(dec_c) OVER w AS c,
				   sum(add_s) OVER w
				   - sum(dec_s) OVER w AS s,
				   t
    FROM increase_decrease
    WINDOW w AS (PARTITION BY mach
                 ORDER BY t
                 RANGE UNBOUNDED PRECEDING)
),
-- output results for adjacent "split" points
SELECT mach, avg_con, $t_{start}$, $t_{end}$
FROM (SELECT mach,
             c
             CASE WHEN (c = 0) THEN NULL
                  ELSE s / c END AS avg_con,
             t AS $t_{start}$,
             last_value(t) OVER w AS $t_{end}$
      FROM accumulation
      WINDOW w AS (PARTITION BY mach
                   ORDER BY t
                   ROWS BETWEEN 1 FOLLOWING AND 1 FOLLOWING)
)
WHERE c > 0
\end{lstlisting}
\end{minipage}
  \caption{Example SQL implementation of split + aggregation (average) applied to example table \lstinline!active!}
  \label{fig:-normalization_agg}
\end{figure}

\begin{figure}
  \centering
\begin{minipage}{1\linewidth}
\lstset{basicstyle=\scriptsize\upshape\ttfamily, commentstyle=\color{blue},numbers=left}
  \begin{lstlisting}
SELECT mach, avg(consum) as avg_con
FROM active
GROUP BY mach;
\end{lstlisting}
\end{minipage}
  \caption{Example aggregation query}
  \label{fig:-query-normalization_agg}
\end{figure}

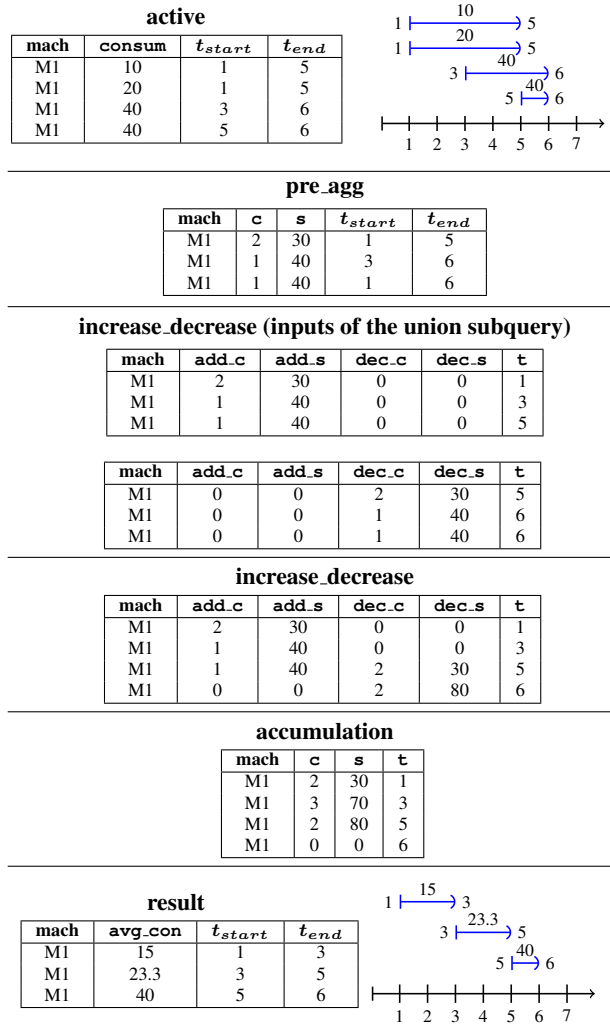
\begin{figure}[t]
 \centering
\begin{minipage}{1.0\linewidth}
\centering
\begin{minipage}{1\linewidth}
\centering
\begin{minipage}{0.52\linewidth}
  \centering
\textbf{active}\\[1mm]
{\scriptsize
  \begin{tabular}{|c|c|c|c|c|}
    \hline
\thead{mach} & \tthead{consum} & \tthead{$\bm{t_{start}}$} & \tthead{$\bm{t_{end}}$}  \\ \hline
 M1 & 10 & 1 & 5  \\
 M1 & 20 & 1 & 5  \\
 M1 & 40 & 3 & 6 \\
 M1 & 40 & 5 & 6 \\
 \hline
\end{tabular}
}
\end{minipage}
\begin{minipage}{0.46\linewidth}
\begin{center}
  \resizebox{0.8\linewidth}{!}{
       \centering
    \begin{tikzpicture} [
    ]

    \def\xscaler{0.5}
    \def\yscaler{0.45}
    \def\axispos{-5 * \yscaler}

\draw[|->, thick] (-0,\axispos) -- (8 * \xscaler,\axispos);
\foreach \x in {1,...,7}
    \draw[thick] (\x * \xscaler,0.3 * \yscaler + \axispos) -- (\x * \xscaler,-0.3 * \yscaler + \axispos) node[below] {\x};

\foreach \b/\s/\pos/\tup in {1/5/1/10,1/5/2/20,3/6/3/40,5/6/4/40}
    \draw[thick,blue,|-)]   (\b * \xscaler,-1 * \yscaler * \pos)  node[left,black]{\b} -- node[above,black]{\tup} (\s * \xscaler,-1 * \yscaler * \pos) node[right,black]{\s}
;

    \end{tikzpicture}
  }
\end{center}
\end{minipage}

\end{minipage} \\[2mm]
\hrule$ $\\[1mm]
\begin{minipage}{1\linewidth}
\centering
\textbf{pre\_agg}\\[1mm]
{\scriptsize
\begin{minipage}{1.0\linewidth}
    \centering
    \begin{tabular}{|c|c|c|c|c|}
      \hline
      \thead{mach} & \tthead{c} & \tthead{s} & \tthead{$\bm{t_{start}}$} & \tthead{$\bm{t_{end}}$} \\
      \hline
M1  & 2 & 30 & 1  & 5 \\
M1  & 1 & 40 & 3  & 6 \\
M1  & 1 & 40 & 1  & 6 \\
 \hline
\end{tabular}\\[3mm]
\end{minipage}
}
\end{minipage} \\[1mm]
\hrule$ $\\[1mm]
\begin{minipage}{1\linewidth}
\centering
\textbf{increase\_decrease (inputs of the union subquery)}\\[1mm]
{\scriptsize
    \begin{tabular}{|c|c|c|c|c|c|}
      \hline
      \thead{mach} & \tthead{add\_c} & \tthead{add\_s} & \tthead{dec\_c} & \tthead{dec\_s} & \tthead{t} \\
      \hline
 M1 & 2 & 30 & 0  & 0 & 1 \\
 M1 & 1 & 40 & 0  & 0 & 3\\
 M1 & 1 & 40 & 0  & 0 & 5\\
 \hline
\end{tabular} \\[3mm]
\begin{tabular}{|c|c|c|c|c|c|}
      \hline
      \thead{mach} & \tthead{add\_c} & \tthead{add\_s} & \tthead{dec\_c} & \tthead{dec\_s} & \tthead{t} \\
      \hline
 M1 & 0 & 0 & 2  & 30 & 5 \\
 M1 & 0 & 0 & 1  & 40 & 6\\
 M1 & 0 & 0 & 1  & 40 & 6\\
 \hline
\end{tabular}
}
\end{minipage} \\[1mm]
\hrule$ $\\[1mm]
\begin{minipage}{1\linewidth}
\centering
\textbf{increase\_decrease}\\[1mm]
{\scriptsize
\begin{tabular}{|c|c|c|c|c|c|}
      \hline
      \thead{mach} & \tthead{add\_c} & \tthead{add\_s} & \tthead{dec\_c} & \tthead{dec\_s} & \tthead{t} \\
      \hline
 M1 & 2 & 30 & 0  & 0 & 1 \\
 M1 & 1 & 40 & 0  & 0 & 3\\
 M1 & 1 & 40 & 2  & 30 & 5\\
 M1 & 0 & 0 & 2  & 80 & 6\\
 \hline
\end{tabular}
}
\end{minipage} \\[1mm]
\hrule$ $\\[1mm]
\begin{minipage}{1\linewidth}
\centering
\textbf{accumulation}\\[1mm]
{\scriptsize
\begin{tabular}{|c|c|c|c|}
      \hline
      \thead{mach} & \tthead{c} & \tthead{s} & \tthead{t} \\
      \hline
 M1 & 2 & 30 & 1   \\
 M1 & 3 & 70 & 3  \\
 M1 & 2 & 80 & 5  \\
 M1 & 0 & 0 & 6  \\
 \hline
\end{tabular}
}
\end{minipage} \\[1mm]
\hrule$ $\\[1mm]
\begin{minipage}{0.49\linewidth}
\centering
\textbf{result}\\[1mm]
{\scriptsize
\begin{tabular}{|c|c|c|c|}
      \hline
      \thead{mach} & \tthead{avg\_con} & \tthead{$\bm{t_{start}}$} & \tthead{$\bm{t_{end}}$} \\
      \hline
 M1 & 15 & 1 & 3   \\
 M1 & 23.3 & 3 & 5  \\
 M1 & 40 & 5 & 6  \\
 \hline
\end{tabular}
}
\end{minipage}
\begin{minipage}{0.46\linewidth}
\begin{center}
  \resizebox{0.8\linewidth}{!}{
       \centering
    \begin{tikzpicture} [
    ]

    \def\xscaler{0.5}
    \def\yscaler{0.55}
    \def\axispos{-4 * \yscaler}

\draw[|->, thick] (-0,\axispos) -- (8 * \xscaler,\axispos);
\foreach \x in {1,...,7}
    \draw[thick] (\x * \xscaler,0.3 * \yscaler + \axispos) -- (\x * \xscaler,-0.3 * \yscaler + \axispos) node[below] {\x};

\foreach \b/\s/\pos/\tup in {1/3/1/15,3/5/2/23.3,5/6/3/40}
    \draw[thick,blue,|-)]   (\b * \xscaler,-1 * \yscaler * \pos)  node[left,black]{\b} -- node[above,black]{\tup} (\s * \xscaler,-1 * \yscaler * \pos) node[right,black]{\s}
;

    \end{tikzpicture}
  }
\end{center}
\end{minipage}
\\[1mm]
\end{minipage} \\[2mm]
\caption{Intermediate results of the query implementing split + aggregation for the query from Figure~\ref{fig:-query-normalization_agg}}
\label{fig:intermeidate-result-normalization-agg}
\end{figure}

\clearpage

\begin{figure}
  \centering
\begin{minipage}{1\linewidth}
\lstset{style=psqlcolor,basicstyle=\scriptsize\upshape\ttfamily,numbers=left}
  \begin{lstlisting}
WITH
left (mach, consum, $t_{start}$, $t_{end}$) AS
(
	SELECT mach, consum, $t_{start}$, $t_{end}$
	FROM active
),
right (mach, consum, $t_{start}$, $t_{end}$) AS
(
	SELECT mach, consum, $t_{start}$, $t_{end}$
	FROM faulty
),
-- Count opening and closing intervals for each interval
-- end point.  Intervals from the right input are
-- counted negatively.
end_point_counts (mach, consume, $\#_{open}$, $\#_{close}$, t) AS
(
	SELECT mach, consum, $t_{start}$ AS t,
				 count(*) AS $\#_{open}$, 0 AS $\#_{close}$
	FROM left
	GROUP BY $t_{start}$, mach, consum
	UNION ALL
	SELECT mach, consum, $t_{end}$ AS t,
				 0 AS $\#_{open}$, count(*) AS $\#_{close}$
	FROM left
	GROUP BY $t_{end}$, mach, consum
	UNION ALL
	SELECT mach, consum, $t_{start}$ AS t,
				 - count(*) AS $\#_{open}$, 0 AS $\#_{close}$
	FROM right
	GROUP BY $t_{start}$, mach, consum
	UNION ALL
	SELECT mach, consum, $t_{end}$ AS t,
				 0 AS $\#_{open}$, - count(*) AS $\#_{close}$
	FROM right
	GROUP BY $t_{end}$, mach, consum
),
-- Accumulate counts to get multiplicities
acc_counts (mach, consum, t, $\#_{open}$, $\#_{close}$) AS
(
	SELECT mach,
				 consum,
				 sum($\#_{open}$) AS $\#_{open}$,
				 sum($\#_{close}$) AS $\#_{close}$,
				 t
	FROM end_point_counts
	GROUP BY t, mach, consum
),
-- Produce intervals with the corresponding multiplicities
intervals (mach, consume, $t_{start}$, $t_{end}$, multiplicity) AS
(
    SELECT mach,
           consum,
           t AS $t_{start}$,
           lead(t) OVER w1 AS $t_{end}$,
           sum($\#_{open}$) OVER w2
           - sum($\#_{close}$) OVER w2 AS multiplicity
    FROM acc_counts
    WINDOW w1 AS (PARTITION BY mach, consum
                  ORDER BY t),
           w2 AS (PARTITION BY mach, consum
                  ORDER BY t
                  RANGE UNBOUNDED PRECEDING)
),
-- Compute max multiplicity
max_seq (n) AS
(
    SELECT n
    FROM (SELECT max(numOpen) AS max_open FROM intervals) x,
          generate_sequence(1,max_open) AS y(n)
)
-- Produce duplicates based on multiplicities
SELECT mach, consum, $t_{start}$, $t_{end}$
FROM intervals i, max_seq m
WHERE multiplicity > 0 AND i.multiplicity >= m.n;
\end{lstlisting}
\end{minipage}
  \caption{SQL implementation of split + bag difference applied to example table \lstinline!active!.}
  \label{fig:-normalization_sefdiff}
\end{figure}

\begin{figure}
  \centering
\begin{minipage}{1\linewidth}
\lstset{basicstyle=\scriptsize\upshape\ttfamily, commentstyle=\color{blue},numbers=left}
  \begin{lstlisting}
SELECT mach, consum, ${t_{start}}$, $t_{end}$
FROM active
EXCEPT ALL
SELECT mach, consum, $t_{start}$, $t_{end}$
FROM faulty;
\end{lstlisting}
\end{minipage}
  \caption{Example query using bag difference.}
  \label{fig:-query-normalization_sefdiff}
\end{figure}

\begin{figure}[t]
 \centering
\begin{minipage}{1.0\linewidth}
\centering
\begin{minipage}{1\linewidth}
\centering
\begin{minipage}{0.49\linewidth}
  \centering
\textbf{active (left input)}\\[1mm]
{\scriptsize
  \begin{tabular}{|c|c|c|c|c|}
    \hline
\thead{mach} & \tthead{consum} & \tthead{$t_{start}$} & \tthead{$t_{end}$}  \\ \hline
 M1 & 20 & 1 & 5  \\
 M1 & 40 & 1 & 7 \\
 M1 & 40 & 1 & 9 \\
 \hline
\end{tabular}
}
\end{minipage}
\begin{minipage}{0.49\linewidth}
\begin{center}
  \resizebox{0.8\linewidth}{!}{
       \centering
    \begin{tikzpicture} [
    ]

    \def\xscaler{0.5}
    \def\yscaler{0.45}
    \def\axispos{-4 * \yscaler}

\draw[|->, thick] (-0,\axispos) -- (10 * \xscaler,\axispos);
\foreach \x in {1,...,9}
    \draw[thick] (\x * \xscaler,0.3 * \yscaler + \axispos) -- (\x * \xscaler,-0.3 * \yscaler + \axispos) node[below] {\x};

\foreach \b/\s/\pos/\tup in {1/5/1/20,1/7/2/40,1/9/3/40}
    \draw[thick,blue,|-)]   (\b * \xscaler,-1 * \yscaler * \pos)  node[left,black]{\b} -- node[above,black]{\tup} (\s * \xscaler,-1 * \yscaler * \pos) node[right,black]{\s}
;

    \end{tikzpicture}
  }
\end{center}
\end{minipage}
\end{minipage} \\[2mm]
\begin{minipage}{1\linewidth}
\centering
\begin{minipage}{0.49\linewidth}
  \centering
\textbf{faulty (right)}\\[1mm]
{\scriptsize
  \begin{tabular}{|c|c|c|c|c|}
    \hline
\thead{mach} & \tthead{consum} & \tthead{$\bm{t_{start}}$} & \tthead{$\bm{t_{end}}$}  \\ \hline
 M1 & 20 & 2 & 6  \\
M1 & 40 & 3 & 5  \\
    \hline
\end{tabular}
}
\end{minipage}
\begin{minipage}{0.49\linewidth}
\begin{center}
  \resizebox{0.8\linewidth}{!}{
       \centering
    \begin{tikzpicture} [
    ]

    \def\xscaler{0.5}
    \def\yscaler{0.55}
    \def\axispos{-3 * \yscaler}

\draw[|->, thick] (-0,\axispos) -- (10 * \xscaler,\axispos);
\foreach \x in {1,...,9}
    \draw[thick] (\x * \xscaler,0.3 * \yscaler + \axispos) -- (\x * \xscaler,-0.3 * \yscaler + \axispos) node[below] {\x};

\foreach \b/\s/\pos/\tup in {2/6/1/20,3/9/2/40}
    \draw[thick,blue,|-)]   (\b * \xscaler,-1 * \yscaler * \pos)  node[left,black]{\b} -- node[above,black]{\tup} (\s * \xscaler,-1 * \yscaler * \pos) node[right,black]{\s}
;

    \end{tikzpicture}
  }
\end{center}
\end{minipage}
\end{minipage} \\[1mm]
\hrule$ $\\[1mm]
\begin{minipage}{1\linewidth}
\centering
\textbf{end\_point\_counts}\\[1mm]
{\scriptsize
\begin{minipage}{1.0\linewidth}
    \centering
    \begin{tabular}{|c|c|c|c|c|}
      \hline
      \thead{mach} & \tthead{consum} & \tthead{t} & \tthead{$\bm{\#_{open}}$} & \tthead{$\bm{\#_{close}}$} \\
      \hline
M1  & 20 & 1 & 1  & 0 \\
M1  & 40 & 1 & 2  & 0 \\
M1  & 20 & 5 & 0  & 1 \\
M1  & 40 & 7 & 0  & 1 \\
M1  & 40 & 9 & 0  & 1 \\
 \hline
M1  & 20 & 2 & -1  & 0 \\
M1  & 40 & 3 & -1  & 0 \\
M1  & 20 & 6 & 0  & -1 \\
M1  & 40 & 5 & 0  & -1 \\
      \hline
\end{tabular}\\[3mm]
\end{minipage}
}
\end{minipage} \\[1mm]
\hrule$ $\\[1mm]
\begin{minipage}{1\linewidth}
\centering
\textbf{acc\_counts}\\[1mm]
{\scriptsize
    \begin{tabular}{|c|c|c|c|c|}
      \hline
      \thead{mach} & \tthead{consum} & \tthead{t} & \tthead{$\bm{\#_{open}}$} & \tthead{$\bm{\#_{close}}$} \\
      \hline
 M1 & 20 & 1 & 1  & 0 \\
 M1 & 20 & 2 & -1  & 0 \\
 M1 & 20 & 5 & 0  & 1 \\
 M1 & 20 & 6 & 0  & -1 \\
 M1 & 40 & 1 & 2  & 0 \\
 M1 & 40 & 3 & -1  & 0 \\
 M1 & 40 & 7 & 0  & 1 \\
 M1 & 40 & 9 & 0  & 0 \\
      \hline
\end{tabular}
}
\end{minipage} \\[1mm]
\hrule$ $\\[1mm]
\begin{minipage}{1\linewidth}
\centering
\textbf{intervals}\\[1mm]
{\scriptsize
    \begin{tabular}{|c|c|c|c|c|}
      \hline
      \thead{mach} & \tthead{consum} & \tthead{$\bm{t_{start}}$} & \tthead{$\bm{t_{end}}$} & \tthead{multiplicity}\\
      \hline
 M1 & 20 & 1 & 2  & 1 \\
 M1 & 20 & 2 & 5  & 0 \\
 M1 & 20 & 5 & 6  & -1 \\
 M1 & 20 & 6 & NULL  & 0 \\
 M1 & 40 & 1 & 3  & 2 \\
 M1 & 40 & 3 & 7  & 1 \\
 M1 & 40 & 7 & 9  & 0 \\
 M1 & 40 & 9 & NULL  & 0 \\
      \hline
\end{tabular}
}
\end{minipage} \\[1mm]
\hrule$ $\\[1mm]
\begin{minipage}{0.13\linewidth}
\centering
\textbf{max\_seq}\\[1mm]
{\scriptsize
    \begin{tabular}{|c|}
      \hline
      \thead{n} \\
      \hline
 1   \\
 2  \\
 \hline
\end{tabular}
}
\end{minipage}
\begin{minipage}{0.48\linewidth}
\centering
\textbf{result}\\[1mm]
{\scriptsize
    \begin{tabular}{|c|c|c|c|}
      \hline
       \tthead{mach}& \tthead{consum} & \tthead{$\bm{t_{start}}$} & \tthead{$\bm{t_{end}}$}\\
      \hline
 M1 & 20 & 1 & 2  \\
 M1 & 40 & 1 & 3  \\
 M1 & 40 & 1 & 3  \\
 M1 & 40 & 3 & 7  \\
 \hline
\end{tabular}
}
\end{minipage}
\begin{minipage}{0.35\linewidth}
\begin{center}
  \resizebox{0.8\linewidth}{!}{
       \centering
    \begin{tikzpicture} [
    ]

    \def\xscaler{0.5}
    \def\yscaler{0.55}
    \def\axispos{-5 * \yscaler}

\draw[|->, thick] (-0,\axispos) -- (8 * \xscaler,\axispos);
\foreach \x in {1,...,7}
    \draw[thick] (\x * \xscaler,0.3 * \yscaler + \axispos) -- (\x * \xscaler,-0.3 * \yscaler + \axispos) node[below] {\x};

\foreach \b/\s/\pos/\tup in {1/2/1/20,1/3/2/40,1/3/3/40,3/7/4/40}
    \draw[thick,blue,|-)]   (\b * \xscaler,-1 * \yscaler * \pos)  node[left,black]{\b} -- node[above,black]{\tup} (\s * \xscaler,-1 * \yscaler * \pos) node[right,black]{\s}
;

    \end{tikzpicture}
  }
\end{center}
\end{minipage}
\end{minipage} \\[2mm]
\caption{Example instance of table \lstinline!active! and intermediate results of the query implementing split + bag difference for the query from Figure~\ref{fig:-query-normalization_sefdiff}}
\label{fig:intermeidate-result-normalization-set-diff}
\end{figure}
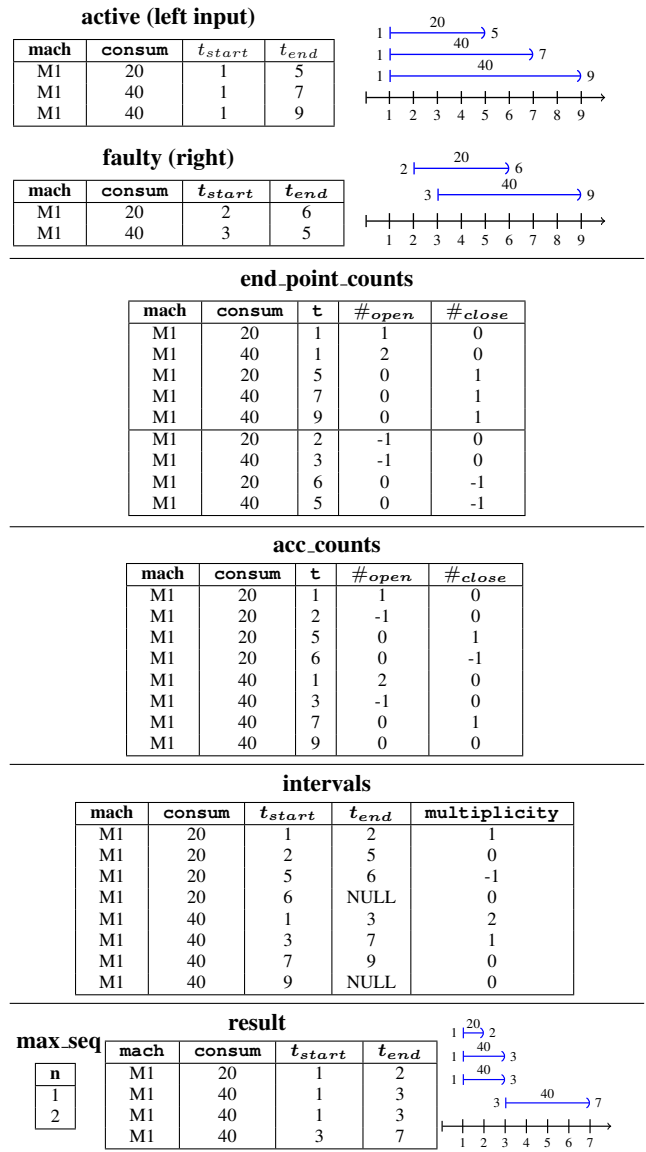


\begin{thebibliography}{10}

\bibitem{DBLP:conf/dexa/Al-KatebGC12}
M.~Al{-}Kateb, A.~Ghazal, and A.~Crolotte.
\newblock An efficient {SQL} rewrite approach for temporal coalescing in the
  teradata {RDBMS}.
\newblock In {\em DEXA}, pages 375--383, 2012.

\bibitem{DBLP:conf/edbt/Al-KatebGCBCP13}
M.~Al{-}Kateb, A.~Ghazal, A.~Crolotte, R.~Bhashyam, J.~Chimanchode, and S.~P.
  Pakala.
\newblock Temporal query processing in {Teradata}.
\newblock In {\em {EDBT}}, pages 573--578, 2013.

\bibitem{AD11a}
Y.~Amsterdamer, D.~Deutch, and V.~Tannen.
\newblock On the limitations of provenance for queries with difference.
\newblock In {\em TaPP}, 2011.

\bibitem{AD11d}
Y.~Amsterdamer, D.~Deutch, and V.~Tannen.
\newblock Provenance for aggregate queries.
\newblock In {\em {PODS}}, pages 153--164, 2011.

\bibitem{DBLP:conf/edbt/BohlenGJ06}
M.~H. B{\"{o}}hlen, J.~Gamper, and C.~S. Jensen.
\newblock Multi-dimensional aggregation for temporal data.
\newblock In {\em {EDBT}}, pages 257--275, 2006.

\bibitem{DBLP:reference/db/BohlenGJS09}
M.~H. B{\"{o}}hlen, J.~Gamper, C.~S. Jensen, and R.~T. Snodgrass.
\newblock Sql-based temporal query languages.
\newblock In {\em Encyclopedia of Database Systems}, pages 2762--2768. 2009.

\bibitem{DBLP:reference/db/BohlenJ09}
M.~H. B{\"{o}}hlen and C.~S. Jensen.
\newblock Sequenced semantics.
\newblock In {\em Encyclopedia of Database Systems}, pages 2619--2621. 2009.

\bibitem{Bohlen95evaluatingand}
M.~H. B{\"{o}}hlen, C.~S. Jensen, and R.~T. Snodgrass.
\newblock Evaluating and enhancing the completeness of tsql2.
\newblock Technical Report TR 95-5, Computer Science Department, University of
  Arizona, 1995.

\bibitem{BohlenJS00}
M.~H. B{\"{o}}hlen, C.~S. Jensen, and R.~T. Snodgrass.
\newblock Temporal statement modifiers.
\newblock {\em {ACM} Trans. Database Syst.}, 25(4):407--456, 2000.

\bibitem{DBLP:conf/vldb/BohlenSS96}
M.~H. B{\"{o}}hlen, R.~T. Snodgrass, and M.~D. Soo.
\newblock Coalescing in temporal databases.
\newblock In {\em {VLDB}}, pages 180--191, 1996.

\bibitem{DBLP:journals/pvldb/BourosM17}
P.~Bouros and N.~Mamoulis.
\newblock A forward scan based plane sweep algorithm for parallel interval
  joins.
\newblock {\em {PVLDB}}, 10(11):1346--1357, 2017.

\bibitem{BowmanT03}
I.~T. Bowman and D.~Toman.
\newblock Optimizing temporal queries: efficient handling of duplicates.
\newblock {\em Data Knowl. Eng.}, 44(2):143--164, 2003.

\bibitem{CafagnaB17}
F.~Cafagna and M.~H. B{\"{o}}hlen.
\newblock Disjoint interval partitioning.
\newblock {\em {VLDB} J.}, 26(3):447--466, 2017.

\bibitem{tpc-h}
T.~P.~P. Council.
\newblock {TPC Benchmark\texttrademark H (Decision Support) Standard
  Specification Revision 1.17.3}, 2017.

\bibitem{DT10}
A.~Das~Sarma, M.~Theobald, and J.~Widom.
\newblock Live: A lineage-supported versioned dbms.
\newblock In {\em SSDBM}, pages 416--433, 2010.

\bibitem{DignosBG12}
A.~Dign{\"{o}}s, M.~H. B{\"{o}}hlen, and J.~Gamper.
\newblock Temporal alignment.
\newblock In {\em {SIGMOD}}, pages 433--444, 2012.

\bibitem{DBLP:conf/sigmod/DignosBG14}
A.~Dign{\"{o}}s, M.~H. B{\"{o}}hlen, and J.~Gamper.
\newblock Overlap interval partition join.
\newblock In {\em {SIGMOD}}, pages 1459--1470, 2014.

\bibitem{DignosBGJ16}
A.~Dign{\"{o}}s, M.~H. B{\"{o}}hlen, J.~Gamper, and C.~S. Jensen.
\newblock Extending the kernel of a relational {DBMS} with comprehensive
  support for sequenced temporal queries.
\newblock {\em {ACM} Trans. Database Syst.}, 41(4):26:1--26:46, 2016.

\bibitem{GP10}
F.~Geerts and A.~Poggi.
\newblock {On database query languages for K-relations}.
\newblock {\em Journal of Applied Logic}, 8(2):173--185, 2010.

\bibitem{GK07}
T.~J. Green, G.~Karvounarakis, and V.~Tannen.
\newblock Provenance semirings.
\newblock In {\em {PODS}}, pages 31--40, 2007.

\bibitem{DBLP:reference/db/JensenS09j}
C.~S. Jensen and R.~T. Snodgrass.
\newblock Snapshot equivalence.
\newblock In {\em Encyclopedia of Database Systems}, page 2659. 2009.

\bibitem{DBLP:reference/db/JensenS09r}
C.~S. Jensen and R.~T. Snodgrass.
\newblock Temporal query languages.
\newblock In {\em Encyclopedia of Database Systems}, pages 3009--3012. 2009.

\bibitem{DBLP:reference/db/JensenS09w}
C.~S. Jensen and R.~T. Snodgrass.
\newblock Timeslice operator.
\newblock In {\em Encyclopedia of Database Systems}, pages 3120--3121. 2009.

\bibitem{DBLP:conf/tpctc/KaufmannFMTK13}
M.~Kaufmann, P.~M. Fischer, N.~May, A.~Tonder, and D.~Kossmann.
\newblock Tpc-bih: {A} benchmark for bitemporal databases.
\newblock In {\em TPCTC}, pages 16--31, 2013.

\bibitem{DBLP:conf/sigmod/KaufmannMVFKFM13}
M.~Kaufmann, A.~A. Manjili, P.~Vagenas, P.~M. Fischer, D.~Kossmann,
  F.~F{\"{a}}rber, and N.~May.
\newblock Timeline index: a unified data structure for processing queries on
  temporal data in {SAP} {HANA}.
\newblock In {\em {SIGMOD}}, pages 1173--1184, 2013.

\bibitem{KB12}
E.~V. Kostylev and P.~Buneman.
\newblock Combining dependent annotations for relational algebra.
\newblock In {\em {ICDT}}, pages 196--207, 2012.

\bibitem{KulkarniM12}
K.~G. Kulkarni and J.~Michels.
\newblock Temporal features in {SQL:} 2011.
\newblock {\em {SIGMOD} Record}, 41(3):34--43, 2012.

\bibitem{LorentzosM97}
N.~A. Lorentzos and Y.~G. Mitsopoulos.
\newblock {SQL} extension for interval data.
\newblock {\em {IEEE} Trans. Knowl. Data Eng.}, 9(3):480--499, 1997.

\bibitem{MSSQL16}
Microsoft.
\newblock Sql server 2016 - temporal tables.
\newblock
  \url{https://docs.microsoft.com/en-us/sql/relational-databases/tables/temporal-tables},
  2016.

\bibitem{ORA16}
Oracle.
\newblock Database development guide - temporal validity support.
\newblock
  \url{https://docs.oracle.com/database/121/ADFNS/adfns_design.htm\#ADFNS967},
  2016.

\bibitem{DBLP:conf/ssd/PiatovH17}
D.~Piatov and S.~Helmer.
\newblock Sweeping-based temporal aggregation.
\newblock In {\em {SSTD}}, pages 125--144, 2017.

\bibitem{DBLP:conf/icde/PiatovHD16}
D.~Piatov, S.~Helmer, and A.~Dign{\"{o}}s.
\newblock An interval join optimized for modern hardware.
\newblock In {\em {ICDE}}, pages 1098--1109, 2016.

\bibitem{DBLP:conf/sigmod/PilmanKKKP16}
M.~Pilman, M.~Kaufmann, F.~K{\"{o}}hl, D.~Kossmann, and D.~Profeta.
\newblock Partime: Parallel temporal aggregation.
\newblock In {\em {SIGMOD}}, pages 999--1010, 2016.

\bibitem{PGSQLRT}
{PostgreSQL}.
\newblock Documentation manual postgresql - range types.
\newblock \url{https://www.postgresql.org/docs/current/static/rangetypes.html},
  2012.

\bibitem{SaraccoNG2012}
C.~Saracco, M.~Nicola, and L.~Gandhi.
\newblock A matter of time: Temporal data management in db2 10.
\newblock
  \url{http://www.ibm.com/developerworks/data/library/techarticle/dm-1204db2temporaldata/dm-1204db2temporaldata-pdf.pdf},
  2012.

\bibitem{DBLP:reference/db/Sirangelo09c}
C.~Sirangelo.
\newblock Positive relational algebra.
\newblock In {\em Encyclopedia of Database Systems}, pages 2124--2125. 2009.

\bibitem{DBLP:journals/tods/Snodgrass87}
R.~T. Snodgrass.
\newblock The temporal query language tquel.
\newblock {\em {ACM} Trans. Database Syst.}, 12(2):247--298, 1987.

\bibitem{Snodgrass95}
R.~T. Snodgrass, editor.
\newblock {\em The {TSQL2} Temporal Query Language}.
\newblock 1995.

\bibitem{Snodgrass99}
R.~T. Snodgrass.
\newblock {\em Developing Time-Oriented Database Applications in {SQL}}.
\newblock Morgan Kaufmann, 1999.

\bibitem{DBLP:journals/sigmod/SnodgrassAABCDEGJKKKLLRSSS94}
R.~T. Snodgrass, I.~Ahn, G.~Ariav, D.~S. Batory, J.~Clifford, C.~E. Dyreson,
  R.~Elmasri, F.~Grandi, C.~S. Jensen, W.~K{\"{a}}fer, N.~Kline, K.~G.
  Kulkarni, T.~Y.~C. Leung, N.~A. Lorentzos, J.~F. Roddick, A.~Segev, M.~D.
  Soo, and S.~M. Sripada.
\newblock {TSQL2} language specification.
\newblock {\em {SIGMOD} Record}, 23(1):65--86, 1994.

\bibitem{SB96}
R.~T. Snodgrass, M.~H. B{\"o}hlen, C.~S. Jensen, and A.~Steiner.
\newblock Adding valid time to sql/temporal.
\newblock {\em ANSI X3H2-96-501r2, ISO/IEC JTC}, 1, 1996.

\bibitem{SJ95}
M.~D. Soo, C.~S. Jensen, and R.~T. Snodgrass.
\newblock An algebra for tsql2.
\newblock In {\em The TSQL2 temporal query language}, pages 505--546. 1995.

\bibitem{S98}
A.~Steiner.
\newblock {\em A generalisation approach to temporal data models and their
  implementations}.
\newblock PhD thesis, ETH Zurich, 1998.

\bibitem{teradata1510}
Teradata.
\newblock Teradata database - temporal table support.
\newblock \url{http://www.info.teradata.com/download.cfm?ItemID=1006923}, Jun
  2015.

\bibitem{T96}
D.~Toman.
\newblock Point vs. interval-based query languages for temporal databases.
\newblock In {\em PODS}, pages 58--67, 1996.

\bibitem{T98}
D.~Toman.
\newblock Point-based temporal extensions of {SQL} and their efficient
  implementation.
\newblock In {\em Temporal databases: research and practice}, pages 211--237.
  1998.

\bibitem{ZhouWZ06}
X.~Zhou, F.~Wang, and C.~Zaniolo.
\newblock Efficient temporal coalescing query support in relational database
  systems.
\newblock In {\em {DEXA}}, pages 676--686, 2006.

\end{thebibliography}
\end{document}